\documentclass[11pt]{article}
\usepackage{geometry}
\usepackage{amsthm}
\usepackage{amsfonts} 
\usepackage{amsmath}
\usepackage{amssymb}
\usepackage{thmtools}
\usepackage{hyperref} 
\usepackage{changepage} 
\usepackage{enumerate}

\usepackage[ruled,vlined]{algorithm2e} 

\usepackage{subcaption}

\usepackage{pgfplots} 
\pgfplotsset{compat=1.6}

\newtheorem{theorem}{Theorem}[section]
\newtheorem{lemma}[theorem]{Lemma} 
 
\newtheorem{remark}[theorem]{Remark} 
\newtheorem{definition}[theorem]{Definition} 
 
\newtheorem{observation}[theorem]{Observation} 

\newcommand{\ceil}[1]{\lceil #1 \rceil}
\newcommand{\floor}[1]{\lfloor #1 \rfloor}

\newcommand{\minCarboricity}[0]{(2+\epsilon)\alpha-1} 
\newcommand{\minCarboricityExample}[0]{2\alpha} 
\newcommand{\minCarboricityB}[0]{\frac{C}{2\alpha-1}} 

\newcommand{\lowerBoundExplicitExpression}[0]{$\Omega \Big (
\frac{
    \log \frac{nC}{(\Delta-C)^2}
    }{
    \log \frac{\Delta+C}{\Delta-C}
    }
     \Big )$}
\newcommand{\separationExpression}[0]{\lowerBoundExplicitExpression}

\begin{document}

\title{Beyond Vizing Chains: Improved Recourse in Dynamic Edge Coloring}
\author{
Yaniv Sadeh\footnote{Tel Aviv University, Israel. yanivsadeh@tau.ac.il.}
\and
Haim Kaplan\footnote{Tel Aviv University, Israel. haimk@tau.ac.il.}
}

\maketitle

\begin{abstract}

We study the maintenance of a $(\Delta + C)$-edge-coloring ($C\ge 1$) in a fully dynamic graph $G$ with maximum degree $\Delta$. We focus on minimizing \emph{recourse} which equals the number of recolored edges per edge insertions and deletions.

\looseness=-1
We present a new recoloring technique based on an object which we call a \emph{shift-tree}.
This object tracks multiple possible recolorings of $G$ and enables us to maintain a proper coloring with small recourse in polynomial time. This technique relies on shifting colors over a path of edges, but unlike many other algorithms, we do not use \emph{fans} and \emph{alternating bicolored paths}.

\looseness=-1
We combine the shift-tree with additional techniques to obtain an algorithm
with a \emph{tight} recourse of $O \big ( \frac{\log n}{\log \frac{\Delta+C}{\Delta-C}} \big )$
for all $C \ge 0.62\Delta$ where $\Delta-C = O(n^{1-\delta})$.
Our algorithm is the first deterministic algorithm to establish tight bounds
for large palettes, and the first to do so when $\Delta - C = o(\Delta)$.
This result settles the theoretical complexity of the recourse for large palettes.
Furthermore, we believe that viewing the possible shifts as a tree can lead to similar tree-based techniques that extend to lower values of $C$, and to improved update times.

A second application of the shift-tree is to graphs with low arboricity $\alpha$. Previous works~\cite{ArboricitySWAT2024_sayan,ArboricitySWAT2024_Christiansen} achieve $O(\frac{1}{\epsilon} \log n)$ recourse per update with $C \ge (4+\epsilon) \alpha$, and we improve by achieving the same recourse while only requiring $C \ge (2+\epsilon)\alpha - 1$. We note that this result is \emph{$\Delta$-adaptive}, that is, at any time the coloring only uses $\Delta_t + C$ colors where $\Delta_t$ is the current maximum degree (only a bound on $\alpha$ is required).

Trying to understand the limitations of our technique, and shift-based algorithms in general, we show a separation between the recourse achievable by algorithms that only shift colors along a path, and more general algorithms such as ones using the \emph{Nibbling Method} \cite{NibblingMethod2021,NibblingCycles2024}.
\end{abstract}

\pagebreak

\setcounter{tocdepth}{2}
\tableofcontents


\section{Introduction}
\label{section_introduction}

Edge coloring is a fundamental problem in combinatorics focused on assigning colors to the edges of a graph $G$.\footnote{All graphs in this paper are \emph{simple graphs}, without loops and without parallel edges.} The primary goal is to ensure that no two edges sharing a common vertex (i.e., adjacent edges, or, neighbours) receive the same color.

\looseness=-1
Let $\Delta$ denote the maximum degree of a graph. It is intuitive that at least $\Delta$ colors are necessary, as all edges incident to a vertex of degree $\Delta$ must have distinct colors. A significant breakthrough came in the 1960s in a series of works by Vizing. \emph{Vizing's Theorem}~\cite{Vizing1964} establishes that $\Delta+1$ colors are always sufficient. Vizing also identified specific graph classes where $\Delta$ colors are indeed enough~\cite{Vizing1965_10up,Vizing1965_8up}. Since Vizing's pioneering work, extensive research has explored edge coloring in various computational models. A recurring theme across these models is a trade-off: algorithms that use fewer colors are often less efficient, and vice versa.

In the standard offline and centralized model, where we are given an uncolored graph and need to color it, Vizing's results imply an algorithm that runs in $O(mn)$ time given $\Delta+1$ colors (or more).\footnote{As is commonly denoted in the literature: $n$ is the number of vertices and $m$ is the number of edges in the graph $G$.} A long line of work improves that, with the most recent results giving an algorithm for $\Delta+1$ edge coloring that runs in $O(m \log \Delta)$ time \cite{STOC2025OfflineColoring}, and an 
algorithm for
$(1+\epsilon)\Delta$ edge coloring
that runs in $O(m \log(1/\epsilon) / \epsilon^4)$ time
\cite{2025LinearEdgeColoring}. See these papers for additional reference to earlier work, including deterministic results, in particular Table 1 in \cite{2025LinearEdgeColoring}. Generally speaking, edge coloring algorithms color the edges in some order, and sometimes coloring an edge $e$ (or a batch of edges) requires changing the color of previously colored edges. We refer to changing the colors as \emph{recoloring}. The set of recolored edges is sometimes referred to in the literature as the \emph{augmenting component}.

The edge coloring problem was also studied intensively in the distributed model, where each node has to compute for its own edges a local coloring that is consistent with the rest of the graph and in particular with its neighbours. In this model, one is interested in the number of communication rounds required to achieve a proper coloring. See \cite{LowerBoundRecourse2018} for a table that summarizes notable distributed edge coloring algorithms. More recent results are due to \cite{SuVuTruncatedVizing2019,Bernshteyn2022SmallRecoloring,MultistepVizing2023,Davies2023DistributedEdgeColoring}. While this model is very different from the (offline) centralized model, many breakthrough results in this model guarantee that small augmenting components exist, and thus help improve results in the centralized model.

Another computation model with some relevance to our discussion is the \emph{online model}. In this model the graph is not given in advance, but is revealed incrementally
and edges have to be colored irrevocably when they are revealed. In other words, recoloring is not allowed. This study was initiated by \cite{BarNoyOnline1992}. See the work of \cite{blikstad2024onlineedgecoloringnearly, FOCS2025ONlineEdgeColoring} for recent results and citations of additional work. 

Our main interest is also to study this problem in an online model but when recoloring is allowed. This is sometimes referred to as the \emph{online with recourse} model, where the \emph{recourse} is the number of recolorings an algorithm makes, that is, changes of edge colors following an insertion or a deletion. This model is also referred to as the \emph{dynamic edge coloring} model. In this model, the graph is fully dynamic, and edges may come and go, while the maximum degree is guaranteed to not exceed $\Delta$ and the number of colors is fixed. One may consider two parameters of interest: the recourse and the running time per update, and we focus on studying the recourse. 

Assuming an explicit representation of edge colors, the recourse lower bounds the running time. Therefore,
understanding the required recourse is interesting since it is an inherent hurdle on the running time if explicit coloring is desired. Furthermore, different applications may be more sensitive to recourse than to running time. This may happen if color changes require an expensive external action such as expensive accesses to a database, re-scheduling meetings, or a physical update to links~\cite{KMatchings2022,sadeh2024CachingInMatchingsICALP,bMatchingOpticalSwitches2025}.

\looseness=-1
The work in this dynamic model is relatively more recent compared to the other models. \cite{Bernshteyn2022SmallRecoloring} was the first to achieve recourse that is logarithmic in $n$, concretely $O(\Delta^6 \log^2 n)$ recourse (worst-case per update), using a randomized algorithm based on multi-step Vizing chains, with $\Delta+1$ colors. \cite{MultistepVizing2023} improved the running time to $O(\Delta^7 \log n)$, using a deterministic algorithm. This recourse is tight for $\Delta = O(1)$.
Given $(1+\epsilon)\Delta$ colors, \cite{2025LinearEdgeColoring} achieve $O(\epsilon^{-4} \log n)$ recourse using a randomized algorithm. This recourse is tight for $\epsilon = \Theta(1)$ provided that $(1-\epsilon) = \Theta(1)$ (in terms of $C$, and $\Delta$, this is equivalent to $\Delta-C = \Theta(\Delta)$). \cite{NibblingMethod2021,NibblingCycles2024} remove the dependence on $n$ and achieve $O(\mathrm{poly}(1/\epsilon))$ recourse but in expectation per update (rather than worst-case), and there is an additional constraint that requires $\Delta$ to be at least $\Omega(\mathrm{poly}(1/\epsilon) \cdot \log n)$. Other recent works~\cite{ArboricitySWAT2024_sayan,ArboricitySWAT2024_Christiansen} focus on special families of graphs, with low arboricity $\alpha$, to achieve  $O(\mathrm{poly}(\log n))$ recourse using $\Delta + O(\alpha)$ colors. See Table~\ref{table_recourse_results} in Appendix~\ref{section_appendix_table_of_results} for a summary of these and other main recourse results, along with our new results.

\looseness=-1
It is worthwhile to note that when $\Delta = \Omega(\log n)$, recourse that depends on $\Delta$ may be reduced, by reducing the effective degree of the graph. One can choose a subset of colors, \emph{sub-palette} $P$, and only work with the colors of $P$, the subgraph induced by edges colored by $P$, and the new uncolored inserted edge. In order to successfully work on this subgraph, its induced maximum degree $\Delta' \le |P|$ should depend also on the algorithm in use. For instance, if the algorithm requires $1$ color more than the maximum degree then we require $\Delta' \le |P|-1$. \cite{DynamicEdgeColoring2019DuanEtal} show that by sampling $P$ such that $|P| = \ceil{B \log n / \epsilon}$, then $\Delta' \le |P|-1$ with probability of at least $1-n^{-B/2+1}$. \cite{MultistepVizing2023} shows that by sampling a few more colors, $|P| = \Theta(\epsilon^{-1} \log n \log \Delta)$, one gets that every node $v$, has an available color among the first $\ceil{(1+\epsilon) \cdot deg(v)}$ colors (assuming a total order) with high probability.

\medskip

\textbf{Our contributions are as follows.} 

\begin{enumerate}
    \item We introduce the \emph{shift-tree} technique, which powers a class of color-shifting algorithms for
    recoloring edges in order to extend a partial coloring. 
    The shift-tree is defined for a partially colored graph. It spreads from a vertex that is adjacent to an uncolored edge, such that a path in the tree corresponds to a path in the graph over which we can shift colors by moving them from one edge to the next along the path (formally defined in Definition~\ref{definition_chain_shift} and Definition~\ref{definition_shift_tree_and_terminology}). This technique is interesting both for the results we derive from it (next items), and for being an unconventional approach: Most shift-based algorithms (though not all) use Vizing's approach with fans and bicolored paths.\footnote{A path is \emph{bicolored} if its edges only have $2$ colors. The colors must alternate in a properly edge colored graph.} This new tree-based approach may spark additional variants.
    
    \item \label{item_contribute_large_C} 
    We suggest a new technique to determine when the process of shifting colors on a path can be stopped with a proper coloring that extends to an uncolored edge. This technique is based on defining good and bad neighbours for each edge (Definition~\ref{definition_dangerous}), and combined with the shift-tree technique, it allows us to derive algorithms that are tailored to a rather large color palette. A part of the novelty is that we also loosen the common requirement for paths to be non-self-intersecting, and allow the last edge to be a revisited edge (Remark~\ref{remark_path_with_possible_repeat_edge}).
    The large palette is required to make the idea work, and it is an interesting open question whether one can generalize the technique so it applies to smaller palettes. Concretely, our first algorithm (Algorithm~\ref{algorithm_generic_shift_tree} calling Algorithm~\ref{algorithm_handle_leaves_generic}, Theorem~\ref{theorem_tight_worst_case_recourse}) 
     applies for $\Delta+C$ colors, where $C \ge \frac{\Delta}{\phi} + 0.724$ and $\phi$ is the golden ratio ($\frac{1}{\phi} \approx 0.618$). The second algorithm (Algorithm~\ref{algorithm_generic_shift_tree} calling Algorithm~\ref{algorithm_handle_leaves_DeltaMinus2}, Theorem~\ref{theorem_DeltaMinus2_colors}) is tailored to $C = \Delta-2$ to cover the case of $C=\Delta-2$ when $\Delta \in \{4,5,6\}$ which is not covered by the first algorithm. Both algorithms achieve a tight worst-case recourse of $O(\frac{\log n}{\log \frac{\Delta+C}{\Delta-C}})$, with respect  to the lower bound of \cite{LowerBoundRecourse2018}, subject to $\Delta-C = O(n^{1-\delta})$ for a constant $\delta > 0$ and $\frac{C}{\Delta} - \frac{1}{\phi} = \Omega(1)$. We clarify that while existing results are tight in partial domains, such as \cite{MultistepVizing2023} (deterministic) is tight for $\Delta = \Theta(1)$ with $C=1$, and \cite{2025LinearEdgeColoring} (randomized) is tight for any $\Delta$ when $C = \Theta(\Delta)$ \emph{and} $\Delta - C = \Theta(\Delta)$, our result (deterministic) shows tightness for all $C \ge 0.62 \Delta$. It is the first deterministic algorithm to be tight for this range of $C$, and also the first among randomized algorithms to be tight for $\Delta - C = o(\Delta)$, thus settling the theoretical complexity of the recourse for large palettes when $\Delta - C = O(n^{1-\delta})$.\footnote{Even though in practice  $C = \Theta(\Delta)$ colors may be large, the theoretical differences are still of interest. Some applications in other models even consider super-linear palettes, for example: the classical distributed result of \cite{LinialDistributedDeltaSquared}, and recent streaming studies~\cite{StreamingChechikEtal2024ICALP,GhoshEtalWStreamingEdgeColoringICALP2024,StreamingEdgeColoringICALP2025}.}

    \item We study the shift-tree technique also in the context of graphs with low arboricity, denoted by $\alpha$. In this case, no additional idea is necessary other than the shift-tree, and we show that for $C \ge \minCarboricity{}$ we get recourse of $O(\frac{1}{\epsilon} \log n)$ (Theorem~\ref{theorem_adaptive_Delta_for_low_arboricity}). This improves over previous results~\cite{ArboricitySWAT2024_sayan,ArboricitySWAT2024_Christiansen} that achieve the same recourse but require $C \ge (4+\epsilon) \alpha$. We also note that the algorithm we get is $\Delta$-adaptive, that is, it uses $\Delta_t + C$ colors at time $t$ where the maximum degree in the graph is $\Delta_t$ without requiring a known bound $\Delta$ in advance. We complement this with an almost matching worst-case lower bound that shows recourse of $\Omega(\frac{\log n}{\log \Delta})$ even when $C = \Delta-2$ (Remark~\ref{remark_generalized_lower_bound}). This lower bound generalizes that of \cite{LowerBoundRecourse2018} to graphs with low arboricity.

    \item Most algorithms in the literature, including our own, are based on shifting colors (see Definition~\ref{definition_chain_shift}). We show that in a worst-case scenario, such algorithms might have a large recourse compared to more generic recoloring algorithms. We show this with a simple reduction that maintains a recourse lower bound, say of \cite{LowerBoundRecourse2018}, to hold against shift-based algorithms, while the actual achievable recourse is $O(1)$ (Theorem~\ref{theorem_gap_shift_recoloring_versus_general_separation_simple}).
\end{enumerate}

\textbf{Paper organization:}
Section~\ref{section_overview} motivates our contributions and presents our technique in more details, though due to space constraints formal proofs and some exact definitions are deferred. It is designed to convey our main idea without too many technical details. Section~\ref{section_summary_conclusion_open_questions} concludes with a discussion and interesting open questions. Section~\ref{section_appendix_table_of_results} summarizes in a table the main recourse results, including our new ones, and also mentions related work in a broader context of edge coloring that was not mentioned in Section~\ref{section_introduction}. Then follow the formal discussions and proofs: Section~\ref{appendix_section_shift_tree_algorithms_overview} covers the shift-tree related results and Section~\ref{appendix_section_lower_bounds_and_separation} focuses on lower bounds.

\section{Overview of Technique and Results}
\label{section_overview}

In this section we lay the intuitive ground for our new approach and the results we derive from it. We consider a dynamic graph setting, where the number of vertices is fixed, denoted by $n$, and we insert and delete edges. Formally, denote the graph at step $t$ by $G^t$ and its number of edges by $m_t$, initially $G^0$ has no edges ($m_0=0$). We assume that $G^t$ is always a simple graph. Note that $G^t$ and $G^{t+1}$ differ by exactly one edge, the one that was inserted (missing from $G^t$) or was deleted (missing from $G^{t+1}$). Like most edge coloring algorithms, we maintain an explicit proper edge coloring. Since our algorithms only depend on the current coloring, we omit the time superscript and simply refer to the (current) graph as $G$, with (fixed) $n$ vertices and (variable) $m$ edges. Most results in this work only update colors upon insertions so deletions become trivial.

We are guaranteed that the maximum degree of any vertex, at any time, does not exceed $\Delta$. We have $\Delta+C$ colors at our disposal, usually $C \ge 1$, and our objective is to maintain a proper $\Delta+C$ edge coloring of the graph. We are allowed to change the color of edges in order to maintain a proper coloring, and whenever we change the color of an edge we pay $1$ for it. The total number of color changes is called the \emph{recourse}, and our goal in this paper is to minimize this quantity as a function of $n$, $\Delta$ and $C$. Note that the recourse lower bounds the running time of an algorithm that maintains an explicit proper edge coloring.

We can properly color a new edge $e$ by either (1) finding a color in the palette not used by any neighbour of $e$, or by (2) introducing a completely new color to the palette and using it to color $e$.
Extending the palette for every edge is of course wasteful, but it is useful when carefully applied. For example, the folklore greedy algorithm can be thought of as always looking for an available color, with a fallback to extend its palette if none can be found. It is easy to see that the palette size does not exceed $2\Delta-1$ in this case. The whole line of research on the \emph{online} model (mentioned in Section~\ref{section_introduction}) is based on the idea that the palette must increase when no free color is available. Another prominent example is the \emph{Nibbling Method} used by \cite{NibblingMethod2021,NibblingCycles2024} in the \emph{dynamic} model. Without getting into the details, in these algorithms, edges that fail to be colored by the existing palette are put in a \emph{failed} set, which is eventually colored greedily by a disjoint palette.

When extending the palette is not an option, and there is no free color for a new uncolored edge, we should design a way to recolor edges that is guaranteed to \emph{terminate} with a proper coloring that includes the new edge. A concrete class of such methods is based on recoloring bicolored paths, as in Vizing's original theorem~\cite{Vizing1964}. The subgraph induced by a pair of colors is a collection of paths and cycles. If we choose this pair properly, possibly by first applying a \emph{Vizing fan}, then we can guarantee that our uncolored edge $e$ connects two paths in this subgraph rather than closing a cycle. Thus, by flipping colors along one of these bicolored paths we can vacate one of these two colors for $e$. This high-level argument is the basis of any recoloring method of a Vizing-based algorithm, including \cite{Bernshteyn2022SmallRecoloring,MultistepVizing2023,2025LinearEdgeColoring}. What differentiates these algorithms is the way they find the terminal bicolored paths, in order to argue that the recourse is small.

\looseness=-1
A different recoloring method appears in recent works on coloring graphs with low arboricity $\alpha$~\cite{ArboricitySWAT2024_sayan,ArboricitySWAT2024_Christiansen}, using $\Delta + \Theta(\alpha)$ colors. The argument in these works does not consider bicolored paths but instead relies on a dynamic decomposition of the graph into layers \cite{LayersDecomposition2015}. The layering process recursively peels the graph, where vertices at level $0$ are those of degree at most $d = \Theta(\alpha)$, then these vertices and their edges are removed to compute recursively the vertices at level $1$ and so on. Consider coloring an edge $(u,v)$ such that $\ell \equiv level(v) \le level(u)$. The vertex $u$ has degree at most $\Delta$, and $v$ has at most $d$ edges to vertices in level $\ell$ or higher (including $(u,v)$). Therefore with $\Delta + d$ colors there must be a color we can use, which at best finishes the coloring, and at worst shifts the problem by stealing the color of an edge $(v,w)$ such that $level(w) < level(v)$. Since the level of the lowest endpoint of the uncolored edge monotonically decreases, the recoloring is guaranteed to finish, and the number of layers upper bounds the recourse. Note that for this argument we required more than $\Delta+1$ colors, and that this algorithm is clearly not based on bicolored paths.

\medskip

We define an object which we refer to as a \emph{shift-tree} (Definition~\ref{definition_shift_tree_and_terminology} below) which encompasses multiple recoloring options of a partially colored graph $G$ in order to extend its coloring to an uncolored edge $e$. The shift-tree turns out to be a powerful tool for designing dynamic edge coloring algorithms with low recourse, and in particular lets us derive tight worst-case recourse for a certain regime of parameters (Theorem~\ref{theorem_tight_worst_case_recourse}). The recolorings encoded in the shift-tree are \emph{shift-based}.

\begin{definition}[Chain, Useful, Shift-based]
\label{definition_chain_shift}
A sequence of edges $S \equiv [e_1,e_2,e_3,\ldots,e_k]$ is a \emph{chain} if for every $i \in [1,k-1]$ the edges $e_i$ and $e_{i+1}$ share a single endpoint (vertex). Given a graph $G$, a chain $S$ is \emph{shiftable} if for every prefix of $S$, shifting the colors over the edges such that $e_i$ is colored by the color of $e_{i+1}$ and leaving the last edge uncolored, gives a new proper partial coloring of $G$. If $e_1$ is not colored and after shifting according to $S$ we can color $e_k$ by some color not used by any of its neighbour edges, we say that $S$ is a \emph{useful} chain. We say that an algorithm is \emph{shift-based} if it only extends edge colorings by using shiftable chains.
\end{definition}


\looseness=-1
Definition~\ref{definition_chain_shift} allows a chain to stagnate at a vertex. The \emph{fan} object commonly used in the literature is such a chain. In contrast, we can think of a chain that is a path, in which case every edge $e_i$ shares different endpoints with $e_{i-1}$ and $e_{i+1}$. The bicolored paths in the literature are of this type, and in our paper, we only use chains that are paths but not necessarily bicolored. Hence, we will commonly use the term \emph{path}.
See Figure~\ref{figure_basic_example_path_fan_non_shift} for a basic example of fan, path, and non-shift-based recolorings. Furthermore, shift-based algorithms in the literature usually do not repeat edges ($i \ne j \Rightarrow e_i \ne e_j$). In contrast, our approach slightly relaxes this constraint, and the last edge in our path may also appear earlier on the path.

\begin{figure*}[t]
	\centering
     
    \begin{subfigure}[t]{.45\textwidth}
        \centering
        \includegraphics[width=0.75\textwidth]{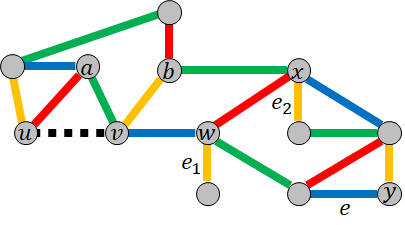}
        \subcaption{}
        \label{fig_basic_example1}
    \end{subfigure}
    \hfill 
    \begin{subfigure}[t]{.45\textwidth}
        \centering
        \includegraphics[width=0.75\textwidth]{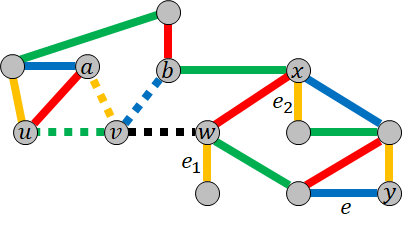}
        \subcaption{}
        \label{fig_basic_example2}
    \end{subfigure}
    \hfill 

    \bigskip
    
    \begin{subfigure}[t]{.45\textwidth}
        \centering
        \includegraphics[width=0.75\textwidth]{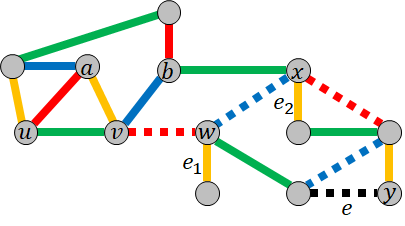}
        \subcaption{}
        \label{fig_basic_example3}
    \end{subfigure}
    \hfill 
    \begin{subfigure}[t]{.45\textwidth}
        \centering
        \includegraphics[width=0.75\textwidth]{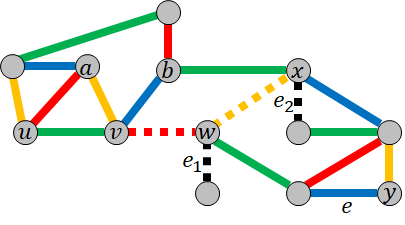}
        \subcaption{}
        \label{fig_basic_example4}
    \end{subfigure}
    \hfill 
    
    \caption{\small{
    A clarifying example for \emph{fans} (b), \emph{(bicolored) paths} (c), and \emph{non-shift-based} recolorings (d).
    (a) A new edge $(u,v)$ has been inserted.
    (b) We apply a fan around $v$ by shifting colors from one edge of $v$ to the next. The fan's order is $u,a,b,w$, now $(v,w)$ is uncolored.
    (c) We apply a bicolored path shift to color $(v,w)$ by shifting the red-blue path from $w$ to $y$ (possible since $v$ has no red edge). Now the final edge $e$ can be colored by red and recoloring is finished. The combined application of a fan and a bicolored path (i.e., (b)+(c)) is known as a \emph{Vizing chain}.
    (d) Instead of bicolored path shift, we could color $(v,w)$ by red and then $(w,x)$ by yellow. This does not shift the problem but duplicates it to $e_1$ \emph{and} $e_2$. But, we can still finish recoloring now: color $e_1$ by blue and $e_2$ by red.
    }}
	\label{figure_basic_example_path_fan_non_shift}
\end{figure*}

It is crucial to understand that every shift-based algorithm can be represented by a tree $T$ whose nodes correspond to edges of $G$, rooted at the uncolored edge, and the ability to shift color from $e$ to $e'$ implies that $e'$ is a parent of $e$ in $T$. All the possible paths which an algorithm can take implicitly induce $T$. In general, $T$ might be exponential in size even if a short useful chain exists because a single edge may correspond to many nodes (visited by multiple paths or multiple times on the same path). Therefore, a clever algorithm must handle this issue either by never explicitly exploring the tree as in \cite{2025LinearEdgeColoring}, or by strictly constraining the paths in a way that lets us compute the whole tree and find a useful chain as in the algorithm implied by the analysis in \cite{MultistepVizing2023}. Our shift-tree (definition below) encodes a collection of shiftable paths. Since it encodes only paths, and not fans, its nodes correspond to vertices of $G$ (rather than edges). The root of this tree is an endpoint of the uncolored edge, and each path to a leaf is a shiftable path.

\begin{definition}[Shift-tree]
\label{definition_shift_tree_and_terminology}
Let $G$ be a graph of maximum degree $\Delta$ that is $(\Delta+C)$ edge colored except for one uncolored edge $(u,v)$. We define the \emph{shift-tree} $T$ with respect to $G$ and its coloring  recursively level by level as follows. First, $v$ is the root of $T$, and we consider $u$ as the parent of $v$ (but $u$ is not part of $T$). Every leaf of $T$ is either \emph{active} or \emph{inactive}, initially the root $v$ is active.

Assume that we constructed levels $0$ to $i$ of $T$. We construct the level $i+1$ as follows. Inactive nodes of level $i$ remain leaves. Let $x$ be an active leaf with parent $p$. Loosely speaking, we define the children of $x$ to be its neighbour $y$ that can expand the shiftable path that ends in $(p,x)$. Formally, we make $y$ a child of $x$ if $(x,y) \in G$, and $color(x,y) \in A'(p)$ where $A'(p)$ are the colors available at $p$ after we shift the colors of the edges on the chain $P$ that is the path in $T$ from $(u,v)$ to $(p,x)$ such that $(p,x)$ becomes uncolored. A new child $y$ is set as active if this is the first time we see $y$ in $T$, otherwise it is set as inactive. The construction must stop at the first level that does not have an active node, but we can stop it earlier. Note that while we expand $T$ in a BFS manner, inactive copies of $y$ may appear much deeper than the active copy of $y$. We use the term \emph{vertex} to refer to $y \in G$, and \emph{node} to refer to specific copies of $y$ in $T$.

The \emph{depth} of a node $x \in T$ counts the edges on the path from the root of $T$ to $x$. $depth(T)$ is the maximal length of such path, which is always well-defined. \emph{Level} $\ell$ is the set of nodes of $T$ in depth $\ell$. 

Given a shift-tree $T$ and a vertex $x \in G$ that appears in $T$, we define the \emph{skeleton} of $T$ with respect to $x$, denoted by $T^x$, to be the minimal connected subtree of $T$ that connects the root of $T$ and all the nodes that correspond to $x$ in $T$.
\end{definition}

\begin{figure*}[!ht]
	\centering
    \begin{subfigure}[t]{.5\textwidth}
        \centering
        \includegraphics[width=0.85\textwidth]{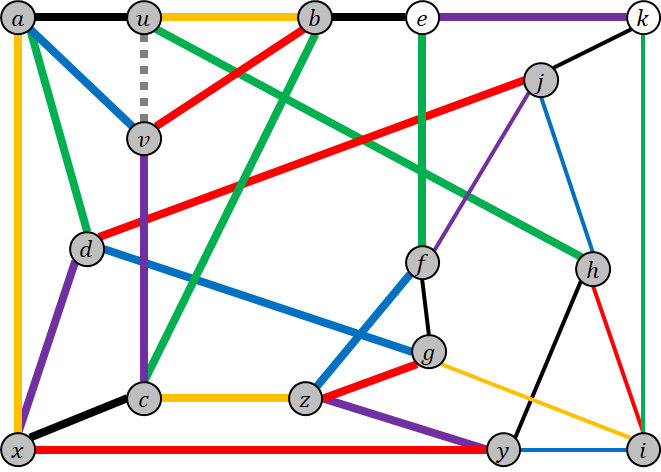}
        \subcaption{} 
        \label{fig_colored_G_example}
    \end{subfigure}
    
    \bigskip 
     
    \begin{subfigure}[t]{.3\textwidth}
        \centering
        \includegraphics[width=0.75\textwidth]{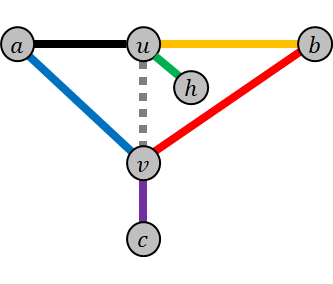}
        \subcaption{}
        \label{fig_colored_step1_graph}
    \end{subfigure}
    \hfill 
    \begin{subfigure}[t]{.15\textwidth}
        \centering
        \includegraphics[width=0.85\textwidth]{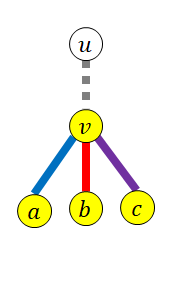}
        \subcaption{}
        \label{fig_colored_step1_tree}
    \end{subfigure}
    \hfill 
    \begin{subfigure}[t]{.35\textwidth}
        \centering
        \includegraphics[width=0.75\textwidth]{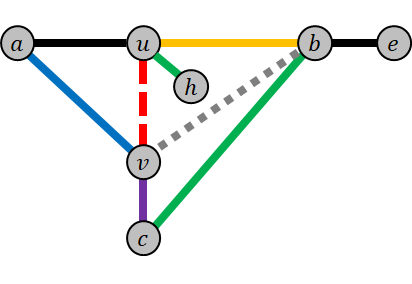}
        \subcaption{}
        \label{fig_colored_step2_graph}
    \end{subfigure}
    \hfill 
    \begin{subfigure}[t]{.15\textwidth}
        \centering
        \includegraphics[width=0.75\textwidth]{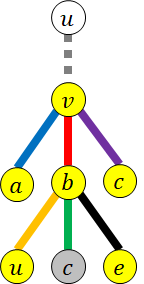}
        \subcaption{}
        \label{fig_colored_step2_tree}
    \end{subfigure}
    
    \bigskip
    
    \begin{subfigure}[t]{.38\textwidth}
        \centering
        \includegraphics[width=\textwidth]{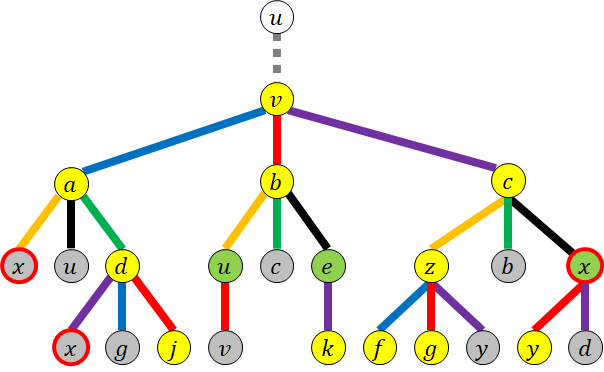}
        \subcaption{} 
        \label{fig_colored_shift_tree_T_example}
    \end{subfigure}
    \hfill 
    \begin{subfigure}[t]{.25\textwidth}
        \centering
        \includegraphics[width=.7\textwidth]{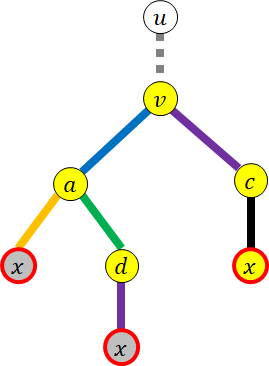}
        \subcaption{} 
        \label{fig_colored_shift_tree_T_skeleton_example}
    \end{subfigure}
    \hfill 
    \begin{subfigure}[t]{.25\textwidth}
        \centering
        \includegraphics[width=.65\textwidth]{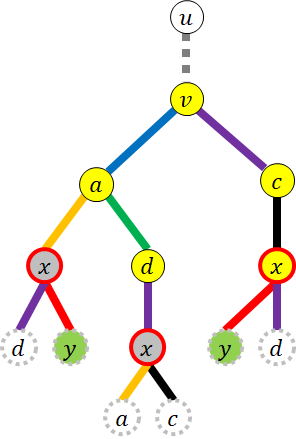}
        \subcaption{} 
        \label{fig_colored_shift_tree_T_expand_Tx}
    \end{subfigure}
    
    \caption{\small{An example of a colored graph $G$ and a shift-tree $T$ (Definition~\ref{definition_shift_tree_and_terminology}).
    (a) The graph $G$, with $\Delta=4$ and extra colors $C=2$ (total of $\Delta+C = 6$). White vertices have less than $\Delta$ edges, the edge $(u,v)$ is considered part of the graph but uncolored yet. Fat edges (such as $(u,a)$) mark the ones that appear in the shift-tree in (f), thin edges (such as $(j,h)$) do not.
    (b) Neighbourhood $1$ of $(u,v)$ in the graph.
    (c) Depth $1$ shift-tree: the color of every edge of $v$ can be used for $(u,v)$.
    (d) \emph{If} we color $(u,v)$ by red, taking it from $(v,b)$, we now consider which colors available at $v$ can be used to color $(v,b)$.
    (e) The consideration in (d) implies the shift-tree subtree expansion of the node $b$. Note that $c$ is marked inactive (gray) because it will be expanded as a sibling of $b$.
    (f) A whole shift-tree $T$, expanded $3$ levels down. Every vertex of $G$ may correspond to multiple nodes of $T$ (for example $x$), but is only expanded with children once. A yellow/green node is an active copy of a vertex, gray nodes are inactive and remain leaves. The green nodes have less than $C+1 = 3$ children, which by Lemma~\ref{lemma_basic_shift_children} guarantees a useful path (Definition~\ref{definition_chain_shift}). For example, shifting over $(u,v,b,e)$ lets us color $(b,e)$ by either red or blue. Notice that $(u,v)$ appears as a red edge in $T$ because if we track the color shifting on the path to $u$, at the moment we expand children for $u$ the edge $(u,v)$ is red (stole the color from $(v,b)$).
    (g) The skeleton of $T$ when reduced only to paths that lead to $x$.
    (h) In the proof of Theorem~\ref{theorem_large_palette_general_parameterized} we expand each leaf of $T^x$ one step further regardless of being active or inactive. Then, some neighbour of $x$, say $y$, appears multiple times, so there are multiple possible paths to shift colors and reach the edge $(x,y)$.}}
	\label{figure_example_shift_tree_concepts}
\end{figure*}

Definition~\ref{definition_shift_tree_and_terminology} restricts the shiftable paths that we consider to be such that we do not expand a vertex more than once (a vertex appears at most once as an internal node). See also Figure~\ref{figure_example_shift_tree_concepts} for a visual example. This requirement is strict, but it simplifies the analysis by ensuring that the shift-tree does not include multiple copies of similar subtrees.

\begin{lemma}[Few Children Imply a Useful Shiftable Path]
\label{lemma_basic_shift_children}
A shift-tree $T$ with an active node $x$ that has at most $C$ children implies a useful shiftable path on the path leading from the root of $T$ to the edge $(parent_T(x),x)$.
\end{lemma}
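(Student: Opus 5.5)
Let $x$ be an active node with parent $p$, and let $P$ be the path in $T$ from $(u,v)$ to $(p,x)$. The plan is a counting argument. After shifting colors along $P$, the edge $(p,x)$ is the only uncolored edge, and we look for a color that is missing at both $p$ and $x$. We may assume $x$ has already been expanded, i.e., its children were computed with respect to $A'(p)$; if $x$ is an active leaf, we expand it once more.

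\textbf{Step 1 (the shifted coloring).} Since $P$ is a path in $T$, every prefix of it was certified when its endpoint was expanded: each child $y$ was added only because $color(x,y) \in A'(p)$. By induction along $P$, shifting along every prefix gives a proper partial coloring, so $P$ is shiftable. After the shift, $(p,x)$ is uncolored. The vertex $x$ has at most $\Delta-1$ other colored edges, so at least $C+1$ colors are free at $x$. The vertex $p$ likewise has at most $\Delta-1$ colored edges besides $(p,x)$, so $|A'(p)| \ge C+1$.

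\textbf{Step 2 (children count the conflicts).} By Definition~\ref{definition_shift_tree_and_terminology}, the children of $x$ are exactly the neighbours $y$ of $x$ with $color(x,y) \in A'(p)$. Colors at $x$ are distinct, so the number of children equals the number of colors in $A'(p)$ that appear on edges at $x$. If this number is at most $C$, then $A'(p)$ contains a color not used at $x$, since $|A'(p)| \ge C+1$. Coloring $(p,x)$ with that color completes a proper coloring, so $P$ is useful in the sense of Definition~\ref{definition_chain_shift}.

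\textbf{Main obstacle.} The delicate point is which coloring the colors at $x$ are read from. Shifting along $P$ may recolor edges incident to $x$ when the path passes near $x$ earlier. Since $x$ is active, this is its first appearance in $T$, so $P$ does not visit $x$ before $(p,x)$. The only edge at $x$ that could be affected is one lying on $P$ between two earlier vertices, and no such edge exists. Hence the colors at $x$ other than $(p,x)$ are unchanged by the shift. The children of $x$ are, by definition, computed against the colors $color(x,y)$ in this shifted state, as Figure~\ref{figure_example_shift_tree_concepts}(f) shows with the red edge $(u,v)$. The count in Step 2 is therefore exact.

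One subtle case is $x=u$, reached via a path that returns to $u$. Here the edge $(u,v)$ itself is incident to $x$ and carries its shifted color. This is consistent with the definition, because children are read from the shifted coloring. If $p=u$ arises, the same counting still applies.
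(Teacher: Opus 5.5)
Your proposal is correct and is essentially the paper's argument. After shifting along $P$, $p$ has at least $C+1$ free colors; the children of $x$ correspond exactly to those free colors that are used at $x$, so having at most $C$ children forces a color free at both ends of $(p,x)$. Your extra care about reading colors in the shifted state, including the $x=u$ case, makes explicit what the paper leaves implicit; it is not a different route.
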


\begin{proof}
By definition, every path of edges in a shift-tree that begins at the root's edge $(u,v)$ (uncolored) and goes  down the tree to some edge $e = (p,x)$ 
is shiftable. Right after shifting the colors so that $(p,x)$ is uncolored, $|\overline{A}(p)| \le \deg_G(p) - 1$,\footnote{Denote for $v \in G$: $\deg_G(v)$ for its degree, $A(v)$/$\overline{A}(v)$ for its sets of available/used colors, respectively.} therefore $|A(p)| \ge \Delta + C + 1 - \deg_G(p) \ge C+1$, and thus if $x$ has less than $C+1$ children then it must be because $|A(x) \cap A(p)| \ge 1$, so the shiftable path to $(p,x)$ is useful.
\end{proof}

\looseness=-1
When the shift-tree contains a useful shiftable path, the recourse is bounded by its depth. A priori such paths are not guaranteed (in particular an active node with few children is not guaranteed), and we require an additional concept to find a useful shiftable path. The resulting recourse would be up to twice the depth of the shift-tree. When analyzing general graphs, our idea is completely different from the two methods described earlier of Vizing chains and strictly-decreasing layers. To best explain it, we first define good and bad neighbours.

\begin{definition}[Good/Bad Neighbours/Colors, Dangerous Edge]
\label{definition_dangerous}
Given a fully or partially colored graph $G$, fix an edge $e=(u,v)$, and a neighbour $e'=(u,x)$. If we shift the color of $e$ to $e'$ (ignoring possible color conflicts over the vertex $x$), and are able to color $e$ without any additional change to the colors of its neighbours, we say that $e'$ is a \emph{good neighbour} of $e$. If all the neighbours of $e$ are good we say that $e$ is \emph{not dangerous}. Otherwise, $e$ is \emph{dangerous} and every neighbour that is not good is a \emph{bad neighbour}. A \emph{bad color} (of $e$) is a color of one of the bad neighbours. Observe that for every bad color there are exactly two bad neighbours, one incident to $u$ and the other incident to $v$. See Figure~\ref{figure_example_dangerous} for an example. Intuitively, a shiftable chain that ends in a non-dangerous edge is \emph{useful}. A shiftable chain that ends in a dangerous edge $e$ may also be useful, if it reaches $e$ from a good neighbour.

For $e$ to be dangerous, each color must appear in its neighbourhood, otherwise a missing color $c$ can be used to finish recoloring after we shift the color of $e$ to any neighbour $e'$. So if $e=(u,v)$ is dangerous, we have $(\deg_G(u)-1+\deg_G(v)-1) - (\Delta+C-1)$ \emph{bad colors} that are incident to both $u$ and $v$, excluding the color of $e$. This expression is upper bounded by $\Delta-1-C$, which we get when both $u$ and $v$ have degree $\Delta$.
\end{definition}

\begin{figure*}[t]
	\centering
    \begin{subfigure}[t]{.2\textwidth}
        \centering
        \includegraphics[width=0.8\textwidth]{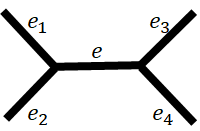}
        \subcaption{} 
        \label{fig_dangerous_no_colors}
    \end{subfigure}
    \hspace{4mm} 
    \begin{subfigure}[t]{.2\textwidth}
        \centering
        \includegraphics[width=0.8\textwidth]{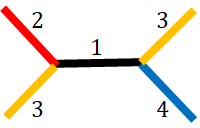}
        \subcaption{} 
        \label{fig_dangerous_colors_example1}
    \end{subfigure}
    \hspace{6mm} 
    \begin{subfigure}[t]{.2\textwidth}
        \centering
        \includegraphics[width=0.8\textwidth]{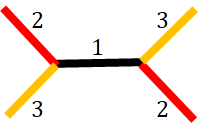}
        \subcaption{} 
        \label{fig_dangerous_colors_example2}
    \end{subfigure}
    
    \caption{
    \looseness=-1
    \small{An example for Definition~\ref{definition_dangerous}. (a) Edge $e$ and its neighbours. In (b) and (c) the numbers correspond to the colors, assume a total of $4$ colors ($\Delta=3$, $C=1$): black, red, orange, blue. (b) The edge $e$ is dangerous with \emph{bad color} orange whose \emph{bad neighbours} pair is $e_2,e_3$. The \emph{good neighbours} of $e$ are $e_1,e_4$. (c) The edge $e$ is not dangerous and all the neighbours are good (blue is always available).}}
	\label{figure_example_dangerous}
\end{figure*}

The new key idea is that with $\Delta+C$ colors for $C \ge 1$, every edge $e$ has at least $C$ good neighbours incident to each of its vertices. Therefore, if a shiftable path happens to arrive to $e$ via such a good neighbour, it is \emph{useful}.

As a first step to ensure reaching $e = (x,y)$ from a good neighbour, we would like to reach $e$ on multiple different paths each arriving from a different neighbour of a fixed endpoint of $e$. We count arrivals from $x$ and from $y$ separately because they might correspond to the same \emph{bad colors}. A combinatorial argument (Lemma~\ref{lemma_shift_tree_max_depth} and the note that follows it) shows that the maximum number of paths we can guarantee to arrive at \emph{some} vertex $x$ is $C$. Then, if we extend each of these paths by one more edge, we would reach many neighbours of $x$, in varying quantities (due to repetitions). The analysis shows that there has to be a neighbour $y$ that appears in approximately $N = \frac{C^2}{\Delta}$ of these extensions. In a worst-case scenario many of these paths might arrive to $(x,y)$ via a bad neighbour, but there are at most $\Delta-1-C$ bad neighbours of $(x,y)$ incident to $x$, so if $N \ge \Delta-C$ some shiftable path is guaranteed to be useful. The exact argument is a little more subtle, because a path that seems to arrive from a good neighbour initially might turn out to be bad if the path passes through $y$ before reaching $(x,y)$, and the technical nuances are given in Theorem~\ref{theorem_large_palette_general_parameterized} and its proof. However, at a high-level, we roughly get the condition that $\frac{C^2}{\Delta} \ge \Delta-C$, which when solved for $C$ gives $C \ge \frac{\Delta}{\phi}$ where $\phi \approx 1.618$ is the golden ratio. To clarify, when $C \ge \frac{\Delta}{\phi}$, we are guaranteed that the shift-tree constructively gives us a path over which to shift colors in order to extend the coloring to a new uncolored edge.
We can identify this path once we see a sufficient number of nodes that correspond to the same vertex, a parameter we denote by $b$ in Section~\ref{appendix_section_shift_tree_algorithms_overview} (such as in Algorithm~\ref{algorithm_generic_shift_tree}, Lemma~\ref{lemma_shift_tree_max_depth} and Theorem~\ref{theorem_large_palette_general_parameterized}). Recall that a large $b$ means that there are multiple shiftable paths to choose from a useful path. We can upper bound the depth of the shift-tree (Lemma~\ref{lemma_shift_tree_max_depth}), and thus we get an upper bound on the recourse.

\begin{restatable}[Tight Recourse]{theorem}{theoremTightWorstCaseRecourse}
\label{theorem_tight_worst_case_recourse}
We can deterministically maintain a dynamic $\Delta+C$ edge coloring in $O(m + \Delta^2)$ time with a worst-case recourse of $O(\frac{\log n}{\log {\frac{\Delta+C}{\Delta-C}}} \cdot \frac{1}{(C/\Delta) - 1/\phi})$ per insertion, for $C \ge \frac{\Delta}{\phi} + 0.724 \approx 0.618 \Delta$ ($\phi$ is the golden ratio). If $\Delta - C = O(n^{1-\delta})$ for a constant $\delta > 0$, and if $(C/\Delta) - 1/\phi = \Omega(1)$, the recourse is tight.
\end{restatable}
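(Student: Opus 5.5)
The plan is to combine a depth bound for the shift-tree with a counting argument that produces a useful path at the bottom of the tree, and then to read tightness off the known lower bound.

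\textbf{Step 1: the algorithm.} On each insertion of $(u,v)$, if some color is free at both endpoints we use it. Otherwise we build the shift-tree $T$ rooted at $v$ level by level (Definition~\ref{definition_shift_tree_and_terminology}). We stop as soon as one of two things happens:
\begin{enumerate}
    \item an active node has at most $C$ children, in which case Lemma~\ref{lemma_basic_shift_children} gives a useful path of length at most $depth(T)$; or
    \item some vertex $x$ has appeared as at least $b$ nodes of $T$, for a threshold $b\approx C$ fixed later.
\end{enumerate}
Deletions do nothing. Computing the tree is a BFS in which each vertex is expanded at most once, and each expansion costs $O(\Delta)$ after shifting along the path. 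The shifts can be tracked incrementally, which gives the $O(m+\Delta^2)$ time bound.

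\textbf{Step 2: depth bound.} While neither stopping rule has fired, every active node has at least $C+1$ children. Each vertex can absorb at most $b-1$ of these copies before stopping rule 2 fires, and only the first copy of a vertex is active. A charging argument in the spirit of Lemma~\ref{lemma_shift_tree_max_depth} then shows that the number of \emph{distinct} vertices grows per level by a factor of roughly $\frac{\Delta+C}{\Delta-C}$. The intuition is that a vertex of degree at most $\Delta$ has at most $\Delta$ incoming copies, and each new vertex "uses up" at most about $\Delta - C$ of the out-degree slack. Since there are only $n$ vertices, $depth(T)=O\big(\log n/\log\frac{\Delta+C}{\Delta-C}\big)$. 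The factor $\frac{1}{(C/\Delta)-1/\phi}$ in the statement is the price of setting $b$ slightly below what the pure expansion allows, so that Step 3 still works.

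\textbf{Step 3: the key combinatorial claim (the main obstacle).} Suppose stopping rule 2 fires at a vertex $x$ with $b$ copies. We extend each of the $b$ paths ending at copies of $x$ by one further edge from $x$. Each extension lands on a neighbour of $x$ through a colour available at the respective parent, and each parent has at least $C$ such colours. This gives about $bC$ extensions spread over at most $\Delta$ neighbours, so some neighbour $y$ is reached roughly $bC/\Delta\approx C^2/\Delta$ times, each time along a different path and via a different last edge at $x$.

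By Definition~\ref{definition_dangerous}, the edge $(x,y)$ has at most $\Delta-1-C$ bad neighbours at $x$. So if $C^2/\Delta>\Delta-1-C$, one of these arrivals comes through a good neighbour, and that path is useful, with length at most $depth(T)+1$.

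The delicate part is that a path may pass through $y$, or even use $(x,y)$ itself, before arriving. That changes the colours around $(x,y)$, so "good" must be evaluated after the shift. I would handle this by considering the revisited edge separately (Remark~\ref{remark_path_with_possible_repeat_edge}) and losing an additive constant in the count. This additive loss is where the $+0.724$ in $C\ge\Delta/\phi+0.724$ should come from. Solving $C^2+\Delta C-\Delta^2\ge\Theta(\Delta)$ gives exactly that threshold.

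\textbf{Step 4: tightness.} The lower bound of \cite{LowerBoundRecourse2018} is $\Omega\big(\log\frac{nC}{(\Delta-C)^2}/\log\frac{\Delta+C}{\Delta-C}\big)$. Under $\Delta-C=O(n^{1-\delta})$ and $C\le\Delta$, the numerator is $\Omega(\delta\log n)$. When $(C/\Delta)-1/\phi=\Omega(1)$, the extra factor in our upper bound is $O(1)$, so the two bounds match.
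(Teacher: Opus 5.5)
\textbf{Main gap: Step 2 (the depth bound with $b\approx C$).} With stopping threshold $b$, the only expansion you can certify comes from the stopping rule itself. Every active node has at least $C+1$ children, and a vertex may absorb up to $b-1$ inactive copies before the rule fires. So the number of distinct vertices is only guaranteed to grow by a factor $\frac{C+1}{b}$ per level. This is exactly Lemma~\ref{lemma_shift_tree_max_depth}: $depth(T)\le\lfloor\log_{(C+1)/b}n\rfloor+1$.

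With $b\approx C$ that factor is $1+O(1/C)$. The bound you get is therefore only $O(C\log n)$, i.e.\ $\Theta(\Delta\log n)$ when $C=\Theta(\Delta)$, far from $O(\log n/\log\frac{\Delta+C}{\Delta-C})$. Your claimed growth factor $\frac{\Delta+C}{\Delta-C}$ is not supported. The bound of at most $\Delta$ copies per vertex only gives growth $(C+1)/\Delta<1$, and the ``out-degree slack'' heuristic is not an argument.

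The missing idea is a two-sided trade-off on $b$. Step 3 needs $b$ large, namely $b(C-1)+2>\Delta(\Delta-C+1)$, while the depth bound needs $b$ small.
\begin{itemize}
\item The choice $b=C$ only determines \emph{which} pairs $(\Delta,C)$ are feasible; it yields $C>\Delta/\phi+f(\Delta)$ with $f<0.724$.
\item For the recourse, one takes $b$ as small as feasibility allows, roughly $\Delta(\Delta-C)/C$. The paper uses $b=1+\lfloor\frac{(\Delta-C+1)\Delta-2}{C-1}\rfloor$ for $C<0.9\Delta$ and $b=5(\Delta-C)$ for $C\ge 0.9\Delta$.
\item One then verifies $\ln\frac{C+1}{b}=\Omega\bigl(((C/\Delta)-1/\phi)\ln\frac{\Delta+C}{\Delta-C}\bigr)$.
\end{itemize}
Your explanation of the factor $\frac{1}{(C/\Delta)-1/\phi}$ is therefore backwards: lowering $b$ \emph{improves} expansion. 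The factor appears because as $C/\Delta\downarrow 1/\phi$ the smallest feasible $b$ tends to $C$, so $\frac{C+1}{b}\to 1$.

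\textbf{Second gap: Step 3 (paths through $y$).} Choosing $y$ by pigeonhole on arrivals alone fails. Suppose a path to a copy of $x$ passes through $y$, either as an inner node of the skeleton $T^x$ or as the virtual parent $u$. The shift then places at $y$ the colour of the edge $(z,w)$, where $z$ is a child and $w$ a grandchild of $y$. This can turn $(p_i,x)$ into a bad neighbour of $(x,y)$.

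The number of arrivals spoiled this way is bounded by the number of grandchildren of $y$ in $T^x$. For a single $y$ this can be about $b$, which exceeds your $\approx bC/\Delta$ arrivals, so it is not an additive constant. The revisited-edge case you mention is just a special instance of passing through $y$.

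The paper instead does the following:
\begin{itemize}
\item It chooses $y$ to maximize arrivals minus grandchildren.
\item It bounds the total number of grandchildren over all neighbours of $x$ by $2(b-1)+\deg(x)$, via Lemma~\ref{lemma_descendants_sum}.
\item Averaging gives some $y$ with effectiveness at least $\frac{b(C-1)+2}{\deg(x)}-1$.
\item One more unit is lost for the possible case $(p_i,x)=(u,v)$, which has no original colour.
\end{itemize}
This produces exactly the condition $\frac{b(C-1)+2}{\Delta}>\Delta-C+1$.

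\textbf{Minor points.} Charging $O(\Delta)$ per expansion does not give $O(m)$. Instead, each scanned colour either creates a tree edge or terminates, and the $\Delta^2$ term comes from selecting $y$ in the leaf handler. In Step 4 you need $C\ge\Delta-C$, which holds since $C>\Delta/2$; $C\le\Delta$ alone is not enough.
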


Extending our technique so it applies to a smaller color palette is a major challenge for future work, and might require additional tools or a generalization of the shift-tree (see Remark~\ref{remark_half_Delta_at_best}).

Theorem~\ref{theorem_tight_worst_case_recourse} claims a \emph{tight} recourse, which indeed up to a constant factor matches the lower bound due to \cite{LowerBoundRecourse2018}, stated as follows.

\begin{theorem}[Lower Bound by \cite{LowerBoundRecourse2018}]
\label{theorem_lower_bound_2018}
Let $1 \le C \le \Delta-2$. For any $n$ there exists a graph $G$ such that $|G|=n$, that is $(\Delta+C)$ edge colored except for one edge $e$, and any extension of the coloring to $e$ requires \lowerBoundExplicitExpression recourse. If $1 \le C \le \frac{\Delta}{3}$, it simplifies to $\Omega(\frac{\Delta}{C} \log \frac{Cn}{\Delta^2})$.\footnote{\cite{LowerBoundRecourse2018} and many of its citations give the expression $\Omega(\frac{\Delta}{C} \log \frac{Cn}{\Delta})$. This discrepancy in the denominator inside the logarithm is due to the following oversight: in the original notation, a recursive relation roughly of the form $n_{i+1} \approx \frac{k \cdot n_i}{k'}$ with $n_1 = k$ was used to conclude that $n_i = \Theta((k/k')^i)$, and get $\frac{\log{\frac{nC}{\Delta-C}}}{\log{\frac{\Delta+C}{\Delta-C}}}$ (before simplification). However, $n_i = \Theta((k/k')^i \cdot k')$. This extra $k' \equiv \frac{\Delta-C}{2}$ factor makes the difference in the numerator.}
\end{theorem}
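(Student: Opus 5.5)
The plan is to reconstruct the construction of \cite{LowerBoundRecourse2018}: a tree-like gadget in which the uncolored edge $e$ sits at the root, and every recoloring that extends the coloring to $e$ must reach edges at distance $\Omega(L)$ from $e$. Here $L$ is the depth of the gadget.

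\textbf{Building block.} I would start from a complete bipartite-like gadget whose maximum degree is $\Delta$, together with an explicit $(\Delta+C)$-coloring in which every edge near the root is ``forced''. Concretely, the endpoints of $e$ have degree $\Delta$ and together see all $\Delta+C$ colors (possible since $2(\Delta-1)\ge \Delta+C$ when $C\le \Delta-2$). Then every color that could go on $e$ is blocked by an edge at $u$ or at $v$. Recoloring one of these blocking edges moves the conflict one level down. Each vertex at level $i$ has its incident colors arranged so that, whichever of the $\Delta+C-(\deg-1)$ free options is chosen, the conflict propagates to $\Theta(\Delta-C)$ new ``bad'' edges and does not terminate.

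\textbf{Potential argument.} The core claim is an invariant. Any recoloring $\chi'$ that differs from $\chi$ only inside the ball of radius $i$ around $e$ forces some vertex at level $i$ to lose or gain a color in a way that is incompatible with its children. This is proved by induction on $i$ via counting: at a vertex of degree $\Delta$, changing the color set requires at least one edge to a child to change color. The branching factor satisfies a recursion $n_{i+1}\approx \frac{\Delta+C}{\Delta-C}\, n_i$, with base size $n_1\approx \frac{\Delta-C}{2}$, which matches the corrected footnote. Solving $n_L\approx n$ then gives
\[ L=\Theta\!\left(\log\frac{n C}{(\Delta-C)^2}\Big/\log\frac{\Delta+C}{\Delta-C}\right). \]
The factor $C$ comes from the fan-out at the root and the factor $(\Delta-C)^2$ from the base size together with the per-level bad count. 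A recoloring that must touch a path of length $L$ recolors at least $L$ edges, which yields the bound.

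\textbf{Simplification.} For $C\le \Delta/3$ we have $\log\frac{\Delta+C}{\Delta-C}=\Theta(C/\Delta)$ and $\Delta-C=\Theta(\Delta)$. Substituting gives $\Omega(\frac{\Delta}{C}\log\frac{Cn}{\Delta^2})$.

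\textbf{Main obstacle.} I expect the hardest step to be the propagation invariant. The difficulty is ensuring that a general, not necessarily shift-based, recoloring cannot ``short-circuit'' the gadget by simultaneously changing many colors at one level. I would handle this by making the levels disjoint layers of a tree-like graph, with no cross edges other than those forced, and by using a counting argument on color classes at each vertex rather than tracking individual paths.
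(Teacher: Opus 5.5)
There is a genuine gap. You never pin down the gadget, and the one concrete invariant you state is false.

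\textbf{The invariant, and the missing palette idea.} A vertex of degree $\Delta$ has $C\ge 1$ free colors. It can therefore change its color set by moving a parent edge to a free color, without touching any child edge, so nothing is forced to propagate. The missing idea is to argue about \emph{palettes} rather than colors. The paper's construction is the case $\alpha=\frac{\Delta-C}{2}$, $b_0=b_1=\frac{\Delta+C}{2}$ of Theorem~\ref{theorem_lower_bound_generalized}:
\begin{itemize}
\item The colors are split into two disjoint palettes $P_0,P_1$, each of size $\frac{\Delta+C}{2}$.
\item Layers grow from $u$ and from $v$ with opposite parities, so that $u$ and $v$ together see all $\Delta+C$ colors.
\item Each group of $\alpha$ vertices in layer $i$ is joined by a complete bipartite graph to $\frac{\Delta+C}{2}$ new vertices, colored from $P_{i \bmod 2}$.
\end{itemize}
An internal vertex then has $\alpha$ rising and $\frac{\Delta+C}{2}$ falling edges, and its falling edges \emph{exhaust} $P_{i\bmod 2}$. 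So while none of them leaves that palette, all its rising edges are confined to the other palette. Arguing bottom-up: if no rising edge from some three consecutive layers changes palette, induction shows no edge at $u$ or $v$ does, and $e$ cannot be colored. Hence every window of three layers contains a recolored edge, and the recourse is at least $\floor{L/3}$. This window argument also replaces your step ``a recoloring that must touch a path of length $L$ recolors at least $L$ edges.'' For general (non-shift) recolorings that step needs justification, e.g.\ restricting to the connected component of changed edges that contains $e$.

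\textbf{The ``tree-like'' remedy fails.} In your own recursion the growth $\frac{\Delta+C}{\Delta-C}$ is the ratio children/parents $=\frac{(\Delta+C)/2}{(\Delta-C)/2}$, which requires every vertex to have $\frac{\Delta-C}{2}$ parents. With a single parent per vertex the growth is $(\Delta-1)(C+1)$ per two layers. The bound then degrades to the $\alpha=1$ case of Theorem~\ref{theorem_lower_bound_generalized}, roughly $\Omega(\frac{\log n}{\log\Delta})$. In fact, by Theorem~\ref{theorem_low_arboricity}, forests admit $O(\log n)$ recourse already for $C\ge 2$. So for, say, $C=2$ and large $\Delta$, no forest can witness the claimed bound.

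\textbf{The counting does not give the stated expression.} Let $r=\frac{\Delta+C}{\Delta-C}$. Layer $j$ has $\Theta((\Delta-C)r^j)$ vertices, so $n_1=\Delta+C$, i.e.\ $\frac{\Delta+C}{2}$ per side, not $\frac{\Delta-C}{2}$. The budget constraint is on the \emph{total}:
\[
\sum_{j\le L}n_j\approx(\Delta-C)\frac{r^{L+1}}{r-1}\le n .
\]
Since $r-1=\frac{2C}{\Delta-C}$, this gives $r^{L+1}\approx\frac{2Cn}{(\Delta-C)^2}$. So the factor $C$ comes from this geometric sum, not from the fan-out at the root. Solving $n_L\approx n$ as you propose would instead give numerator $\log\frac{n}{\Delta-C}$, which overstates the lower bound when $C\ll\Delta$.
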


Actually, \cite{LowerBoundRecourse2018} only states the simplified expression, and the general expression only appears in the proof. The simplification uses $\ln(1+x) \approx x$ for $x = \frac{2C}{\Delta-C} \le 1$ in the denominator. Because we consider large values of $C$, the tightness of Theorem~\ref{theorem_tight_worst_case_recourse} is with respect to the more general expression. The explicit comparison and the reason why we require $\Delta-C = O(n^{1-\delta})$ is in the proof of Theorem~\ref{theorem_tightness_domain}.\footnote{
Ironically, correcting the lower bound (footnote of Theorem~\ref{theorem_lower_bound_2018}) causes our imperfect tightness. Forgetting the factor $\frac{1}{\Delta-C}$ inside the logarithm ``makes'' the tightness apply without this restriction.} The condition on $\Delta-C$ can be informally interpreted as follows: The recourse is tight if $\Delta$ is not extremely large compared to $n$, that is, $\Delta = O(n^{1-\delta})$, or if $\Delta$ is extremely large but we have an almost matching extreme number of extra colors.

Note that Theorem~\ref{theorem_lower_bound_2018} is stated for a specific colored graph and is a worst-case claim. It does not say anything about amortized or expected recourse, which could be much lower. In addition, it might not apply to certain algorithms if they cannot reach the coloring required by the construction due to invariants that they maintain. As an example for an additional invariant, \cite{MultistepVizing2023,ArboricitySWAT2024_Christiansen} discuss scenarios where the palette available to an edge depends on the degrees of its endpoints (less neighbours, smaller palette). See also Section~\ref{subsection_adaptive_palettes}.

The shift-tree object can also be used to study graphs with low arboricity $\alpha$. Interestingly this analysis uses the same object (shift-tree), but its arguments are completely different and much simpler. We show that if $C = \Omega(\alpha)$ the shift-tree simply finds a short useful path, relying on Lemma~\ref{lemma_basic_shift_children}. Our shift-tree requires roughly half the number of extra colors compared to \cite{ArboricitySWAT2024_sayan,ArboricitySWAT2024_Christiansen} in order to achieve the same recourse, and does not require the decomposition into layers of \cite{LayersDecomposition2015}. Our simplicity comes at the price of a slower running time, polynomial rather than poly-logarithmic. The following theorem formalizes our result.

\begin{restatable}[Low Arboricity Graphs]{theorem}{theoremLowArboricity}
\label{theorem_low_arboricity}
Let $\alpha$ be an upper bound on the arboricity of a dynamic graph $G$ and let $\epsilon > 0$. If $C \ge \minCarboricity{}$, Algorithm~\ref{algorithm_generic_shift_tree} deterministically maintains a dynamic $\Delta + C$ edge coloring in $O(m)$ time with a worst-case recourse $O(\frac{1}{\epsilon} \log n)$ per insertion. In particular, $C = \minCarboricityExample{}$ ($\epsilon = \frac{1}{\alpha}$) yields $O(\alpha \log n)$ recourse.
\end{restatable}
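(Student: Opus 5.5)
Our plan is to show that, on insertion of $(u,v)$, the shift-tree $T$ rooted at $v$ (Definition~\ref{definition_shift_tree_and_terminology}) must contain an active node with at most $C$ children within depth $O(\frac{1}{\epsilon}\log n)$. Lemma~\ref{lemma_basic_shift_children} then yields a useful shiftable path of that length, and shifting along it and coloring its last edge costs recourse equal to its length. Deletions need no recoloring. The algorithm is simply Algorithm~\ref{algorithm_generic_shift_tree}: expand $T$ level by level and stop at the first active node with at most $C$ children. Building $T$ touches each vertex's adjacency list only once, since only active nodes are expanded and each vertex is active at most once. This gives $O(m)$ time. The $\Delta$-adaptivity follows because the argument only uses $\deg_G(p)\le\Delta_t$ in Lemma~\ref{lemma_basic_shift_children}.

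The core is a growth argument. Suppose every active node in levels $0,\dots,i$ has at least $C+1$ children. Let $S_i$ be the set of distinct vertices appearing as active nodes in levels $0,\dots,i$, equivalently the vertices expanded so far. Every child of a node $x\in S_i$ is a neighbour of $x$, and the tree edge $(x,y)$ is a distinct edge of $G$ for each child (distinct children of $x$ are distinct neighbours). Hence the subgraph on $S_{i+1}$ contains at least $(C+1)|S_i|$ edges. These edges are distinct across different parents $x$ except that an edge $(x,y)$ with both endpoints expanded may be counted twice, once from each side. So the edge count is at least $(C+1)|S_i|/2$.

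By arboricity, any vertex set $S$ spans fewer than $\alpha|S|$ edges. Therefore $\alpha|S_{i+1}| > (C+1)|S_i|/2 \ge \frac{(2+\epsilon)\alpha}{2}|S_i|$, which gives $|S_{i+1}| \ge (1+\epsilon/2)|S_i|$. Since $|S_i|\le n$, after $O(\log_{1+\epsilon/2} n)=O(\frac{1}{\epsilon}\log n)$ levels some active node has at most $C$ children. For $\epsilon\le 1$ this bound holds as stated; for large $\epsilon$ it only improves. The special case $C=2\alpha$ corresponds to $\epsilon=1/\alpha$ and gives $O(\alpha\log n)$.

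\textbf{Main obstacle.} The delicate step is the double-counting constant, since it is exactly what makes $(2+\epsilon)\alpha-1$ suffice rather than $4\alpha$. The concern is whether an edge $(x,y)$ with both endpoints expanded could fail to be counted from both sides. We must also check the root: the edge $(u,v)$, and the treatment of $u$ as a non-tree parent, could add one extra edge or vertex.

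We would handle these carefully as follows. First, note that a child $y$ of $x$ is never $p=parent(x)$, because $color(x,p)$ is not in $A'(p)$ after the shift. Second, include $u$ in the counted vertex set, so $S_{i+1}\cup\{u\}$ spans all counted edges. This adds only an additive constant, which is absorbed once $|S_i|$ grows. If the inequality is slightly off at small $|S_i|$, we use the strict bound $m(S) \le \alpha(|S|-1)$ to absorb it.
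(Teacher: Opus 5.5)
Your proposal is correct and is essentially the paper's proof. Both arguments track the distinct vertices reached by level $i$, lower-bound the edges they span using the $C+1$ children of each expanded vertex (counting each edge at most twice), and compare this with the arboricity bound $\alpha(N-1)$ to obtain geometric growth. The paper only refines the count by noting that tree edges to newly discovered vertices are counted once, which gives growth ratio $\frac{C}{2\alpha-1}$ instead of your $\frac{C+1}{2\alpha}$; this changes neither the threshold $C\ge 2\alpha$ nor the $O(\frac{1}{\epsilon}\log n)$ depth.

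Two small remarks. Adding $u$ to the vertex set is unnecessary, since every counted tree edge already has both endpoints among vertices of $T$. Your side claim that $\Delta$-adaptivity comes for free is not part of this statement, and it is not right as stated: a shrinking $\Delta_t$ requires the local-palette restriction and recoloring on deletions described in Section~\ref{subsection_adaptive_palettes}.
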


In fact, we can adapt the algorithm slightly to be $\Delta$-adaptive. That is, using $\Delta_t + C$ colors where $\Delta_t$ is the maximum degree of $G^t$ and $C$ is only a function of $\alpha$. The adaptation requires us to recolor also due to deletions. Theorem~\ref{theorem_adaptive_Delta_for_low_arboricity}, proven in Section~\ref{subsection_adaptive_palettes}, states an even slightly stronger claim.

\begin{restatable}[$\Delta$-adaptive for Low Arboricity]{theorem}{theoremAdaptiveDeltaForLowArboricity}
\label{theorem_adaptive_Delta_for_low_arboricity}
Let $\alpha$ be an upper bound on the arboricity of a dynamic graph $G$ and let $\epsilon > 0$. If $C \ge \minCarboricity{}$, we can deterministically maintain a dynamic edge coloring in $O(m)$ time as follows. At any time $t$, for each edge $e=(u,v)$: $color(e) \le (\max\{\deg_t(v),\deg_t(u)\}+C)$ (in particular, $color(e) \le \Delta_t + C$), with worst-case recourse $O(\frac{1}{\epsilon} \log n)$ per update.
\end{restatable}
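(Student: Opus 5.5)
The plan is to reduce to the low-arboricity argument behind Theorem~\ref{theorem_low_arboricity}, now run with a \emph{per-edge} palette bound. We maintain the invariant that every edge $e=(u,v)$ satisfies $color(e) \le \max\{\deg(u),\deg(v)\}+C$. On an insertion of $(u,v)$ we build a shift-tree from $v$ exactly as in Definition~\ref{definition_shift_tree_and_terminology}, with one change: when deciding which colors are available at a vertex $p$ for the edge $(p,x)$ currently being uncolored, we only count colors in $[1,\max\{\deg(p),\deg(x)\}+C]$. The analogue of Lemma~\ref{lemma_basic_shift_children} then reads as follows. After shifting, $p$ uses at most $\deg(p)-1$ colors, so among the allowed colors for $(p,x)$ at least $C+1$ are free at $p$. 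If $x$ has at most $C$ children, one of these colors is also free at $x$, and the path is useful. Shifting a color $c=color(x,y)$ onto $(p,x)$ keeps the invariant only if $c \le \max\{\deg(p),\deg(x)\}+C$; since $c \le \max\{\deg(x),\deg(y)\}+C$, the only problematic case is $\deg(y)>\max\{\deg(p),\deg(x)\}$. We simply forbid such children. Such a child corresponds to a color $c>\deg(x)+C$ at $x$, so forbidding it only removes colors that the counting bound already allows for.

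The depth bound is the arboricity argument. Suppose every active node in the first $L$ levels has at least $C+1$ children. Take the set $S_i$ of distinct vertices expanded (active) up to level $i$. Each vertex in $S_i$ contributes at least $C+1$ edges to the subgraph induced on $S_{i+1}$, and each such edge is counted at most twice. Hence $|E(S_{i+1})| \ge (C+1)|S_i|/2$. Arboricity gives $|E(S_{i+1})| \le \alpha(|S_{i+1}|-1)$, so $|S_{i+1}| \ge \frac{C+1}{2\alpha}|S_i| \ge (1+\epsilon/2)|S_i|$ when $C \ge (2+\epsilon)\alpha-1$. This is where the $-1$ is absorbed. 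It also explains why inactive copies are harmless: they are edges inside the vertex set already seen. Since $|S_i| \le n$, some active node with at most $C$ children appears within depth $O(\frac{1}{\epsilon}\log n)$, which bounds the recourse of an insertion. The running time is $O(m)$ per update because each vertex is expanded once and each edge examined $O(1)$ times.

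Deletions are the main obstacle. Deleting $(u,v)$ lowers $\deg(u)$ and $\deg(v)$, so edges $(u,w)$ may now violate $color \le \max\{\deg(u),\deg(w)\}+C$. Only edges whose color equals the old bound can become bad, and since degrees drop by one there is at most one offending color at each of $u$ and $v$, hence $O(1)$ offending edges. For each of them, I would uncolor it and reinsert it through the insertion procedure. A cleaner alternative is to first swap it with the edge at $u$ carrying the freed color $color(u,v)$ whenever that color is smaller, which often fixes it directly. Either way the cost is $O(1)$ insertion-like recolorings, each of recourse $O(\frac{1}{\epsilon}\log n)$.

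The subtle point to verify is that a recoloring along a shift path never creates a new violation elsewhere. Shifting changes no degrees, and the invariant is checked edge by edge at each shift, so it holds after every shift; I expect this check, together with the forbidden-children counting, to need the most care.
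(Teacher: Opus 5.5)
Your proposal is correct and follows the paper's route. Like the paper, you run the shift-tree with a local palette so that every shifted color respects $color(e)\le\max\{\deg(u),\deg(v)\}+C$, reuse the arboricity growth argument (your factor $\frac{C+1}{2\alpha}\ge 1+\frac{\epsilon}{2}$ is a slightly cruder but equivalent form of the paper's $\frac{C}{2\alpha-1}$), and on a deletion uncolor and reinsert the at most two edges that held the top allowed color at an endpoint. One point in your favour: capping candidate colors at $\max\{\deg(p),\deg(x)\}+C$, as you do, is what makes the ``at least $C+1$ free colors at $p$'' count go through, whereas the paper's literal cap of the first $\deg(x)+C$ colors can leave $x$ with no children and no shared free color when $\deg(p)>\deg(x)$.
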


\looseness=-1
In the case of low arboricity, we have an almost-matching lower bound. Theorem~\ref{theorem_lower_bound_2018} does not apply to graphs with low arboricity because the graph used in its proof has a relatively large arboricity ($\Theta(\Delta-C)$, where $C = O(\alpha) \ll \Delta$), so we generalize the construction in Theorem~\ref{theorem_lower_bound_generalized}.\footnote{The proof in \cite{LowerBoundRecourse2018} assumes that $\Delta+C$ is even. Our generalization also explicitly completes the proof for the case of odd $\Delta+C$.} Taking $\alpha=1$ (forests) as a concrete example, the lower bound turns out to be $\Omega(\frac{\log n}{\log \Delta})$.

\begin{restatable}[Lower Bound with Arboricity]{theorem}{theoremLowerBoundGeneralized}
\label{theorem_lower_bound_generalized}
Let $\Delta \ge 3$, $0 \le C \le \Delta-2$ and integer $1 \le \alpha \le \frac{\Delta-C}{2}$. For any $n$ there exists a graph $G$ with $n$ vertices and arboricity in $[\frac{\alpha+1}{2},\alpha]$, that is $(\Delta+C)$ edge colored except for one edge $e$, and satisfies the following property. If $\alpha < \frac{\Delta}{2}$ then any extension of the coloring to $e$ requires a recourse of
$\Omega \Big ( \log_{(\Delta/\alpha - 1)} \big ( \frac{n \cdot (\beta - 1)}{\Delta} \big ) \Big )$ where $\beta = \frac{(\Delta - \alpha)(C+\alpha)}{\alpha^2}$ and if
$\alpha = \frac{\Delta}{2}$ (then $C=0$ since $\alpha \le \frac{\Delta-C}{2}$), the recourse is $\Omega(n/\Delta)$.
\end{restatable}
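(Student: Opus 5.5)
We will prove Theorem~\ref{theorem_lower_bound_generalized} by generalizing the construction of \cite{LowerBoundRecourse2018}. The base is a rooted tree-like layered gadget, built so that all non-forced colors are saturated: every vertex except a few leaves has degree exactly $\Delta$. The idea is to force any extension of the coloring to $e$ to propagate through many layers. The construction proceeds recursively. Let $H_1$ be a small gadget in which a designated edge $f_1$ (colored) is ``stuck'': any recoloring that changes the color set at the gadget's root vertex must recolor at least one edge. Then $H_{i+1}$ is built from many copies of $H_i$ glued at a new vertex, such that changing the coloring near the top of $H_{i+1}$ forces a change in some copy of $H_i$. A counting argument then shows the recourse is at least $i$.

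To keep the arboricity in $[\frac{\alpha+1}{2},\alpha]$, we do not use dense bipartite pieces as in the original, whose arboricity is $\Theta(\Delta-C)$. Instead each internal vertex $x$ has the following structure:
\begin{itemize}
  \item $\alpha$ edges of a ``core'' going toward the root side, each realized as a small complete bipartite $K_{\alpha,\alpha}$-like block. Such a block has arboricity about $\frac{\alpha+1}{2}$, which supplies the lower end of the range.
  \item The remaining $\Delta-\alpha$ edges going to children.
\end{itemize}
The colors are arranged so that the $C+\alpha$ colors missing at $x$ (counting the palette $\Delta+C$) all appear on the child edges of every neighbour that could receive a shifted color. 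Consequently every attempt to free a color at $x$ must recolor an edge one level down. An orientation in which each vertex has out-degree at most $\alpha$ certifies arboricity $\le\alpha$.

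The recursion for sizes is $n_{i+1}\approx(\Delta/\alpha-1)\,n_i$ with base $n_1=\Theta(\Delta/(\beta-1))$. Here $\beta=\frac{(\Delta-\alpha)(C+\alpha)}{\alpha^2}$ measures how many child subgames are needed per level so that, by pigeonhole over the $C+\alpha$ candidate colors and $\alpha$ core partners, some color class is blocked. Solving gives depth $\log_{\Delta/\alpha-1}\frac{n(\beta-1)}{\Delta}$, tracking the $k'$ factor correction noted in the footnote to Theorem~\ref{theorem_lower_bound_2018}. If $\alpha=\Delta/2$ (so $C=0$), the branching factor is $1$ and the gadget degenerates into a path-like chain of $K_{\alpha,\alpha}$ blocks of length $n/\Delta$, giving $\Omega(n/\Delta)$. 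Padding with isolated vertices handles arbitrary $n$. The parity issue of odd $\Delta+C$ is handled by one extra dummy vertex per gadget to fix the color counts.

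The main obstacle is the inductive claim that any extension recolors an edge at each level. It must hold against arbitrary recolorings, not just shifts. We would prove it by induction on $i$ with the stronger hypothesis: ``in $H_i$, any proper coloring agreeing with the original outside a set $R$ of recolored edges, which changes the color set at the root, has $|R|\ge i$.'' The step uses the saturation (every color missing at the root is present at every child's other edges) to show some child's color set must change. The pigeonhole condition $\beta>1$ is exactly what makes this work.
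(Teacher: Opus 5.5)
\textbf{There is a genuine gap.} It sits in the sentence ``the colors are arranged so that the $C+\alpha$ colors missing at $x$ all appear on the child edges of every neighbour.'' This saturation property is the entire lower-bound mechanism, yet you never exhibit a coloring of your uniform structure ($\alpha$ up-edges and $\Delta-\alpha$ down-edges at every internal vertex) that has it.

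The natural instantiation, generalizing \cite{LowerBoundRecourse2018}, gives all vertices of layer $\ell$ one common falling palette $P_\ell$. Then every vertex of layer $\ell$ uses all $\Delta-\alpha$ colors of $P_\ell$ on edges into layer $\ell+1$. Properness at the children forces $P_\ell\cap P_{\ell+1}=\emptyset$, i.e.\ $2(\Delta-\alpha)\le\Delta+C$, which is false whenever $\alpha<\frac{\Delta-C}{2}$.

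Group-dependent palettes can be made to work in special cases, such as trees or a tree of $K_{\alpha,\alpha}$ blocks when $\alpha$ divides $\Delta$. But that route requires more than you give. You would need to observe that the $\alpha$ co-parents of a block are forced to have identical palettes, choose compatible palettes for the children inside each block, and handle leftovers when $\alpha\nmid\Delta-\alpha$. None of this is in your proposal.

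The idea you are missing is to make the branching \emph{non-uniform}. In the paper, layers alternate between $b_0=\Delta-\alpha$ and $b_1=C+\alpha$ children per vertex, colored from disjoint palettes $P_0,P_1$ of exactly these sizes. Since $b_0+b_1=\Delta+C$, each vertex's falling edges exhaust its parity's palette, so its rising edges are forced into the other palette. Propagating this upward from any three consecutive layers to $u,v$ shows that some edge in every window of three layers must change palette.

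Two further problems:
\begin{enumerate}
\item Your induction is over disjoint copies of $H_i$ glued at a new vertex, i.e.\ a tree. That is incompatible with the shared children of the $K_{\alpha,\alpha}$ blocks you need for arboricity at least $\frac{\alpha+1}{2}$. On a layered graph you need a propagation argument (or a downward chain of recolored edges) instead.
\item The size analysis is reverse-engineered to match the formula. Layer $1$ has $\Theta(\Delta)$ vertices, not $\Theta(\Delta/(\beta-1))$. Also, $\beta$ has nothing to do with pigeonholing over candidate colors, and $\beta>1$ is not what makes the rigidity work: the case $\alpha=\Delta/2$ has $\beta=1$ and is rigid too. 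In the paper, $\beta=b_0b_1/\alpha^2$ is just the growth factor over two layers. The factor $\frac{\beta-1}{\Delta}$ comes from summing the geometric series of layer sizes, and the base $\frac{\Delta}{\alpha}-1$ is obtained from $\gamma\le\beta\le\gamma^2$ with $\gamma=\frac{\Delta}{\alpha}-1$.
\end{enumerate}
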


The case of $C = 0$ was previously addressed by \cite{sadeh2024CachingInMatchingsICALP}, who studied the
\emph{competitive-ratio} of the recourse (rather than absolute recourse)
for
$\Delta$ edge coloring of bipartite graphs. The constructions there are different than those we use to prove Theorem~\ref{theorem_lower_bound_generalized} in Section~\ref{subsection_generalized_lower_bound}, and can be adapted to show $\Omega(n/\Delta)$ recourse for $\alpha \approx \frac{\Delta}{2}$ and $C=0$, also for odd $\Delta$.

\begin{remark}
\label{remark_generalized_lower_bound}
We simplify the expression of Theorem~\ref{theorem_lower_bound_generalized} in two special interesting cases:
\begin{enumerate}
    \item For $C = \Delta-2$ (the maximal non-trivial case): Then $\alpha = 1$, $\beta = (\Delta-1)^2$, and $\frac{n(\beta-1)}{\Delta} = \Theta(\Delta \cdot n)$. Therefore the lower bound is $\Omega(\log_\Delta n)$. This shows that for a fixed $\Delta$, we cannot guarantee $o(\log n)$ worst-case recourse with less than $\Delta-1$ extra colors.

    \item For $C = (2+\epsilon) \alpha$, as in the setting of Theorem~\ref{theorem_low_arboricity}: Then $\beta-1 = \Theta(\frac{\Delta}{\alpha})$. Indeed, $\beta-1 < \beta = (3+\epsilon)(\frac{\Delta}{\alpha}-1) = O(\frac{\Delta}{\alpha})$, and also $\alpha \le \frac{\Delta}{2}$ therefore $\beta - 1 > (3+\epsilon) \cdot \frac{\Delta}{2\alpha} - 1 > \frac{\Delta}{\alpha}$. Therefore the recourse is $\Omega(\log_{\Delta/\alpha}{(n/\alpha)})$. For example, for $\alpha = \Theta(\Delta)$ we get $\Omega(\log (n/\Delta))$.
\end{enumerate}
\end{remark}

As we mentioned, our shift-tree technique as its name suggests is based on shifting colors from one edge to another. Color shifting is a widely used approach, by most recoloring techniques. There is no inherent a-priori reason why one should only consider \emph{shift-based} algorithms (recall Definition~\ref{definition_chain_shift}), and it is natural to wonder whether this conceptually simplifying restriction comes at the cost of performance, in our case, in terms of recourse. We show that when considering a worst-case scenario, there is indeed a separation.

\begin{restatable}[Shift Recourse Gap]{theorem}{theoremRecourseGapShiftVersusGeneralSIMPLE}
\label{theorem_gap_shift_recoloring_versus_general_separation_simple}
\looseness=-1
Let $\Delta \ge 3$ and $0 \le C \le \Delta-3$. For any $n$ there exists a graph $G$ with maximum degree $\Delta$ that is $(\Delta+C)$ edge colored except for one edge $e$, that satisfies the following property. We can extend the coloring of $G$ to $e$ with only $O(1)$ recourse, but any shift-based algorithm will have recourse of $\Omega(n/\Delta)$ if $C=0$ and \separationExpression if $C \ge 1$. The latter expression can be simplified as follows: If $C \le \frac{\Delta}{3}$ then it is $\Omega \Big (\frac{\Delta}{C} \log \frac{nC}{\Delta^2} \Big )$, if $C = \epsilon \Delta$ for a constant $\epsilon > 0$ then it is $\Omega(\log (n/\Delta))$ and if $C = \Delta - O(1)$ then it is $\Omega(\frac{\log n}{\log \Delta})$.
\end{restatable}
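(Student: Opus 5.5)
The plan is a reduction that reuses the lower-bound construction of Theorem~\ref{theorem_lower_bound_2018} (for $C \ge 1$) and its $C=0$ analogue (Theorem~\ref{theorem_lower_bound_generalized} with $\alpha = \frac{\Delta}{2}$, or the construction of \cite{sadeh2024CachingInMatchingsICALP}). Call the hard instance $H$: it is colored with $\Delta+C$ colors except for one edge $e_0=(u_0,v_0)$, and every extension of its coloring needs the stated recourse. We then attach a small gadget $Q$ around a new uncolored edge $e$, chosen so that two things hold. First, $e$ has no free color. Second, every shiftable chain starting at $e$ is forced to pass through $e_0$, and from there it behaves like a shift-based extension of $H$. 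At the same time, a non-shift recoloring in the style of Figure~\ref{figure_basic_example_path_fan_non_shift}(d) must finish with $O(1)$ changes. Concretely, I would take $e_0$ in $H$ to be colored, and design $Q$ so that the only color we can take for $e$ through shifting is the color of $e_0$. After that shift, $e_0$ is uncolored, and any shift-based algorithm is exactly in the situation of Theorem~\ref{theorem_lower_bound_2018}.

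The steps, in order:

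\textbf{(1)} Fix the gadget. The endpoints of $e$ have degree $\Delta$, and every one of their neighbouring edges other than a single edge $e_0$ is a ``dead end''. Here a dead end is an edge whose shift would only move the problem to an edge that is itself stuck: both of its endpoints are saturated, and their color sets jointly cover the whole palette. Any shiftable chain that uses a dead end therefore has no continuation, and the chain is not useful.

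\textbf{(2)} Verify that the only useful chains must go through $e_0$. The slack $C \le \Delta-3$ is used here. Each dead-end vertex needs at least $\Delta+C-(\Delta-1) \le \Delta-2$ colors covered from the opposite side, so we need $C+1 \le \Delta-2$ spare neighbours.

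\textbf{(3)} Exhibit the $O(1)$ non-shift solution. Color $e$ with some color $c$ that duplicates the problem into two conflicting edges $e_1,e_2$ inside $Q$. The gadget is built so that $e_1$ and $e_2$ can each be recolored by swapping with each other's freed colors, exactly as in Figure~\ref{figure_basic_example_path_fan_non_shift}(d).

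\textbf{(4)} Argue that a shift-based algorithm cannot use the gadget to shortcut. Any shiftable chain inside $Q$ that avoids $e_0$ is not useful by step (2). A useful chain through $e_0$ restricts to a shift-based extension of $H$, so the lower bound of Theorem~\ref{theorem_lower_bound_2018} applies, and the recourse is at least its bound minus $O(1)$. We choose $|H| = n - O(\Delta)$ so that the asymptotics are unchanged. The simplified forms then follow by the same arithmetic the paper uses after Theorem~\ref{theorem_lower_bound_2018}:
\begin{itemize}
\item for $C \le \frac{\Delta}{3}$, use $\ln(1+x)\approx x$;
\item for $C=\epsilon\Delta$, the denominator $\log\frac{\Delta+C}{\Delta-C}$ is $\Theta(1)$;
\item for $C=\Delta-O(1)$, the denominator is $\Theta(\log\Delta)$ and the numerator is $\Theta(\log(n\Delta))$.
\end{itemize}

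The main obstacle is steps (1)–(3) together. The gadget must block all shift routes except through $e_0$, yet still allow a constant-size non-shift repair. It must also respect the maximum degree $\Delta$ and use only $\Delta+C$ colors. The tension is that any recoloring of size $O(1)$ that avoids $H$ tends to be realizable as a short shift chain unless it genuinely needs two simultaneous changes. So I would first attempt a gadget in which the cheap repair recolors two edges at one vertex, swapping two colors around a short cycle. Such a swap is a permutation of colors with no uncolored intermediate step, and by Definition~\ref{definition_chain_shift} it is not a chain shift. Then I would check carefully that every partial shift along it creates a conflict.
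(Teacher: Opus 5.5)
\textbf{There is a genuine gap.} The gadget $Q$ of steps (1)--(3) cannot work as described, and the idea that actually produces the separation is missing.

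First, your ``dead end'' is never a dead end. Suppose an uncolored edge $(a,b)$ with $\deg(a)=\deg(b)=\Delta$ has no free color. Then the $2(\Delta-1)$ colored edges at $a$ and $b$ cover all $\Delta+C$ colors. So exactly $\Delta-C-2$ colors are shared, and each endpoint carries $C+1\ge 1$ colored edges whose color is absent at the other endpoint. Every such edge is a legal next shift; at the far endpoint $b$ all $C+1$ of them are genuine continuations. Hence a chain is never locally stuck. Applied to $e$ itself, the same count gives at least $C+1$ shiftable neighbours at \emph{each} endpoint of $e$. So you cannot arrange that the color of $e_0$ is the only one $e$ can take by shifting. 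For a shift-based lower bound, every one of these branches, and recursively every later branch, must run into expensive territory. A small gadget around $e$ cannot enforce that.

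Second, step (3) is in tension with step (4). In the hard instance even general algorithms pay the full recourse. So the cheap repair must use a resource that no shift can use at any point of any chain, and your proposal never identifies one; you acknowledge the gadget is not constructed. Also, ``take $e_0$ in $H$ to be colored'' is inconsistent, since in the hard instance all $\Delta+C$ colors appear around $e_0$.

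\textbf{The missing idea is global.} The only local obstruction to a shift is a color that appears at both endpoints of the uncolored edge. The paper makes this obstruction hold everywhere, as follows.
\begin{enumerate}
\item Reserve $q$ colors. Let $H$ be the hard instance for maximum degree $\Delta-q$ and palette $\Delta-q+C$ on $k=\lfloor n/(q+1)\rfloor$ vertices; its uncolored edge is $e$.
\item Attach to \emph{every} vertex of $H$ one pendant edge to a fresh leaf for each reserved color. This gives $q$ perfect matchings, each in its own reserved color.
\item A general algorithm colors $e$ with a reserved color and recolors the two now-conflicting pendant edges. Their leaves have every color free, and each endpoint of $e$ still has at least $C+1$ colors available for its pendant edge. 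So the recourse is $2$.
\item For a shift-based algorithm, by induction the uncolored edge always lies inside $H$ with every reserved color present at both endpoints. So it can never shift in a reserved color nor finish with one. It is therefore running a shift-based extension of $H$ with the reduced palette, and pays at least $R(k,\Delta-q,C)$.
\item Taking $q=1$ yields the bound of Theorem~\ref{theorem_lower_bound_2018}. This is where $C\le\Delta-3$ enters, to keep $C\le(\Delta-1)-2$. Taking $q=\Delta-2$ with $H$ a bicolored path yields $\Omega(n/\Delta)$ for $C=0$.
\end{enumerate}
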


Observe that the expression in Theorem~\ref{theorem_gap_shift_recoloring_versus_general_separation_simple} for $C \ge 1$ is identical to the lower bound due to \cite{LowerBoundRecourse2018}, and indeed our construction uses the one of \cite{LowerBoundRecourse2018}.
The proof is very simple and goes as follows: take a worst-case graph $H$ and its coloring as constructed for the recourse lower bound in Theorem~\ref{theorem_lower_bound_2018}, for maximum degree $\Delta-1$ and $\Delta + C - 1$ colors. The color we saved, say red, is used to color a matching between $H$ and additional $|H|$ vertices. To color the uncolored edge of $H$, shift-based algorithms are restricted to shift colors over edges of $H$, while general recoloring algorithms can simply color the uncolored edge by red and recolor its two red neighbours. To prove for $C=0$ we reserve $\Delta-2$ colors for $\Delta-2$ matchings (instead of only the red color), and force the shift-based problem to pay the recourse of flipping a $2$ edge coloring of a path, as demonstrated in Figure~\ref{figure_recourse_gap_on_a_path}.

\begin{figure*}[t]
    \centering
    \begin{subfigure}[t]{.3\textwidth}
        \centering
        \includegraphics[width=\textwidth]{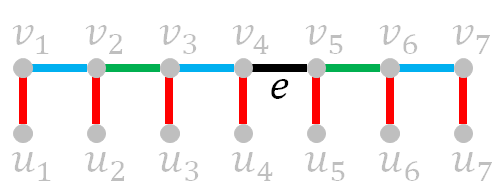}
        \subcaption{Initial Coloring}
        \label{figure_recourse_gap0}
    \end{subfigure}
    \hfill
    \begin{subfigure}[t]{.3\textwidth}
        \centering
        \includegraphics[width=\textwidth]{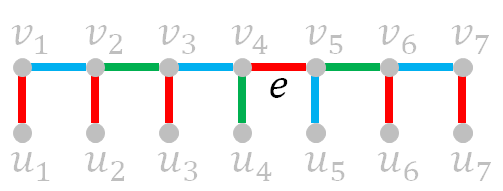}
        \subcaption{$O(1)$ general recourse}
        \label{figure_recourse_gap1}
    \end{subfigure}
    \hfill
    \begin{subfigure}[t]{.3\textwidth}
        \centering
        \includegraphics[width=\textwidth]{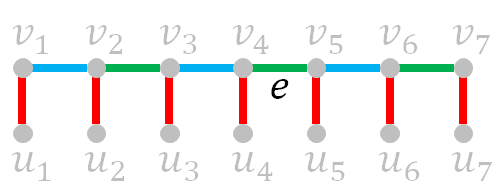}
        \subcaption{$\Omega(n)$ shift-based recourse}
        \label{figure_recourse_gap2}
    \end{subfigure}
    
    \caption{\small{A visual example for the proof of Theorem~\ref{theorem_gap_shift_recoloring_versus_general_separation_simple} for $\Delta=3$ and $C=0$. (a) The graph is a union of a red matching (vertical edges, $(v_i,u_i)$) and a subgraph $H$ constructed over the vertices $v_i$ by Theorem~\ref{theorem_lower_bound_generalized} for $\Delta'=2$, where in this case is a path (horizontal edges $(v_i,v_{i+1})$) colored as two bicolored paths and separated by an uncolored edge $e$ in the middle. (b) Extending the coloring to $e$ requires $O(1)$ recourse if we simply color it red. (c) If we insist on only shifting colors over a chain of edges, we must shift colors horizontally towards one of the ends, resulting in $\Omega(n)$ recourse.}
    }
	\label{figure_recourse_gap_on_a_path}
\end{figure*}

The proof outlined above is oblivious to the particular lower bound construction. It just modifies this bound to only apply to shift-based algorithms. Theorem~\ref{theorem_gap_shift_recoloring_versus_general_separation} generalizes Theorem~\ref{theorem_gap_shift_recoloring_versus_general_separation_simple} to formally state and prove the separation as a parameterized expression that depends on a given lower bound for general worst-case recourse.

Known recoloring algorithms that break the $\log n$ barrier are \cite{NibblingMethod2021,NibblingCycles2024} (the Nibbling Method) with $O(\mathrm{poly}(1/\epsilon))$ recourse \emph{in expectation}, using $(1+\epsilon)\Delta$ colors with high probability and provided that $\Delta$ is sufficiently large. They are not shift-based, so 
this intuitively supports the hypothesis that shift-based algorithms may be inherently weaker.

\section{Conclusions and Open Questions}
\label{section_summary_conclusion_open_questions}

We developed the \emph{shift-tree} technique and used it to derive tight recourse bounds. The main disadvantage of this technique is the large number of extra colors, $C$, that it requires. Thus, the most natural next question is: \emph{Can we augment the technique so it can be used with a smaller palette?} To clarify the core issue, observe that the shift-tree (Definition~\ref{definition_shift_tree_and_terminology}) expands potential recoloring paths exactly once from each vertex. This is fine when $C$ is large, because we utilize ``most'' of the edges of each vertex. However, when $C$ is small, say $C = O(1) \ll \Delta$ we utilize only a small number of paths through each vertex.

To give an alternative perspective, while $\Theta(\Delta^2)$ bicolored paths pass through or originate from a vertex of degree $\Delta$, our under-utilization results in only considering $O(C)$ bicolored paths through it, and $O(C^2)$ bicolored paths that originate from it. As a contrasting example, if we translate any of the \emph{multi-step Vizing chains} algorithms to a ``tree language'', then their tree is not expanded in a BFS manner, and it does expand a vertex multiple times when it is part of multiple possible bicolored paths. Therefore, a generalization of our technique should probably allow a vertex to be expanded multiple times, in some clever way so we can still control the size of the tree. We believe that such generalizations should be able to apply the technique with lower values of $C$.

\looseness = -1
Working beyond the worst-case is also of interest, such as achieving low amortized or expected recourse. The \emph{Nibbling Method} shows expected recourse that is independent of $n$, and offline coloring results also show that the average time per edge (and thus recourse) is independent of $n$. \emph{Can the shift-tree be used to design algorithms with good amortized recourse?} 

In parallel to further understanding the best recourse, one may also consider improving the running time while keeping the recourse small. The shift-tree by itself takes polynomial time because we might essentially BFS-scan the whole graph to determine a useful path, unless there is a better way to trim undesired sub-trees. A randomized alternative is to expand a random path, that will be useful with probability $p > 0$. We could amplify $p$ by repetitions, but to succeed with high-probability in poly-logarithmic time, $p$ must be sufficiently large, and the shift-tree only guarantees a single useful path in the whole tree. One approach to increase $p$ is to expand paths farther. Since we anyway expand a single path and not the whole shift-tree, we do not know or care if a vertex should be inactive due to another branch, as long as the path itself does not revisit vertices. Another helpful approach is to use more colors, as in the case of low arboricity where $C = (2+\epsilon)\alpha \to (4+\epsilon)\alpha$ reduces the update time from linear to poly-logarithmic (in this case the algorithms are even deterministic). Also, perhaps other clever ideas and data structures could help.

\appendix
\section{Related Work and Recourse-results Table}
\label{section_appendix_table_of_results}

This section has two purposes. The first is to summarize the results that were mentioned throughout Section~\ref{section_introduction}, in Table~\ref{table_recourse_results}. These are the main results for dynamic edge coloring. The table is quite heavy due to the multiple aspects to consider: The maximum degree, the number of colors, arboricity, running time and of course, our main focus, recourse.

\begin{table}[ht]
    \scriptsize
            \begin{tabular}{|l|c|c|c|c|l|}
    
            \hline
            Reference & Colors & Recourse (per update) & \multicolumn{2}{|c|}{Runtime} & Notes \\
            \hline
            \hline

            \hline
            \cite{Vizing1964} & $\Delta + 1$ & $O(n)$ & $O(n)$ & D & Implicit running time. \\

            \hline
            \cite{GP2020TwoStepVizing} & $\Delta + 1$ & $O(\mathrm{poly}(\Delta) \cdot \sqrt{n})$ & $O(\mathrm{poly}(\Delta) \cdot n)$ & D & 
            \begin{tabular}{@{}l@{}}
                \cite{Bernshteyn2022SmallRecoloring} discusses how \\
                \cite{GP2020TwoStepVizing} implies this result.
            \end{tabular} \\
            
            \hline
            \cite{2025SODASayanOfflineVizingSqrtNDelta} & $\Delta + 1$ & $O(runtime)$ & $O((\Delta^2 + \sqrt{\Delta n}) \cdot \log^2 n)$ & R & Theorem 8.1 \\
    
            \hline
            \cite{Bernshteyn2022SmallRecoloring} & $\Delta + 1$ & $ O(\Delta^6 \cdot \log^2 n)$ & $O(\Delta^6 \cdot n \log ^2 n)$ & R & 
            \begin{tabular}{@{}l@{}}
                Runtime implied by an \\
                $O(\Delta^6 \cdot \log^2 n)$-rounds \\
                distributed algorithm.
            \end{tabular} \\
    
            \hline
            \cite{MultistepVizing2023} & $\Delta + 1$ & $O(\Delta^7 \cdot \log n)$ & $O(\mathrm{poly}(\Delta) \cdot n)$ & D & Implicit running time. \\ 

            \hline
            \cite{DynamicEdgeColoring2019DuanEtal} & $(1+\epsilon)\Delta$ &  $O(runtime)$ & 
            $O(\epsilon^{-2} \log^7 n)$ worst-case & R & 
            $\Delta = \Omega(\epsilon^{-2} \log^2 n)$ \\

            & $(1+\epsilon)\Delta_t$ & $O(runtime)$ &
            $O(\epsilon^{-4} \log^8 n)$ amortized
             & R & $\Delta_t = \Omega(\epsilon^{-2} \log^2 n)$
            \\
            
            \hline
            \cite{MultistepVizing2023} & $(1+\epsilon)\Delta_t$ &
            $O(\Delta_t ^6 \log^2 n)$ &
            $O(\epsilon^{-6} \log^9 n \log^6 \Delta_t )$ & R & . \\
            
            \hline
            \cite{NibblingMethod2021} & $(1+\epsilon)\Delta \ w.h.p.$ & $O(\epsilon^{-9})$ in expectation & $O(\mathrm{poly}(n))$ & R & 
            \begin{tabular}{@{}c@{}}
                Implicit runtime; \\
                $\Delta = \Omega(\epsilon^{-6} \log n)$ \\
            \end{tabular} \\
            
            
            \hline
            \cite{NibblingCycles2024} & $(1+61 \epsilon)\Delta  \ w.h.p.$ & $O(\epsilon^{-4})$ in expectation &
            $O(\epsilon^{-9} \cdot \log^4(1/\epsilon))$ & R &
            \begin{tabular}{@{}c@{}}
                $\Delta \ge (100 \log n / \epsilon)^{x}$ \\
                $x \equiv 30 \log (1/\epsilon) / \epsilon$
            \end{tabular} \\

            \hline
            \cite{2025LinearEdgeColoring} & $(1+\epsilon)\Delta$ & $O(\epsilon^{-4} \cdot \log n)$ & $O(m \epsilon^{-4} \log(1/\epsilon))$ & R & 
            \begin{tabular}{@{}l@{}}
                Can color a whole graph \\
                within this runtime \\
                (offline algorithm).
            \end{tabular} \\
            
            \hline
            \cite{bhattacharya2018dynamic} & $2\Delta - 1$ & 0 & $O(\log \Delta)$ & D & Folklore recourse. \\

            \hline
            \begin{tabular}{@{}l@{}}
                \textbf{New}, \\ Theorem~\ref{theorem_tight_worst_case_recourse} 
            \end{tabular} &
            \begin{tabular}{@{}c@{}}
                $\Delta+C$ \\
                $C \ge \frac{\Delta}{\phi} + 0.724$ \\ ($\approx 0.618 \Delta$)
            \end{tabular} &
            $O \left (\frac{\log n}{{\log \frac{\Delta+C}{\Delta-C}}} \cdot \frac{1}{\frac{C}{\Delta} - \frac{1}{\phi}} \right )$ & $O(m + \Delta^2)$ & D & 
            \begin{tabular}{@{}l@{}}
                If $\Delta - C = O(n^{1-\delta})$ \\
                and $(C/\Delta) - 1/\phi = \Omega(1)$: \\
                \textbf{Tight} worst-case recourse.
            \end{tabular} \\

            \hline
            \begin{tabular}{@{}l@{}}
                \textbf{New}, \\ Theorem~\ref{theorem_DeltaMinus2_colors} 
            \end{tabular} &
            $2\Delta-2$ & $O(\log_{\Delta} n)$ & $O(m)$ & D & 
            \begin{tabular}{@{}l@{}}
                $\Delta \ge 4$. \\
                \textbf{Tight} worst-case recourse. \\
            \end{tabular} \\

            \hline
            \hline
            \cite{ArboricitySWAT2024_sayan} & $\Delta + (4+\epsilon)\alpha$ & $O(\epsilon^{-1} \log n)$ & $O(\epsilon^{-2} \log^2 n \log \Delta)$ & A & Theorem 8 \\

            & $\Delta_t + (4+\epsilon)\alpha_t$ & $O(\epsilon^{-5} \log^4 n)$ amortized & $O(\epsilon^{-6} \log^6 n)$ & A & Theorem 1 \\
            
            \hline

            \cite{ArboricitySWAT2024_Christiansen} & $\Delta + (4+\epsilon)\alpha$ &  & $O(\epsilon^{-2} \log n \log \Delta)$ & A & Theorem 17 \\

            & $\Delta_e + (8+\epsilon)\alpha$ & $O(runtime)$ amortized & $\epsilon^{-3} \cdot \mathrm{polylog}(n,\Delta, \alpha)$ & A & Theorem 18 \\

            & $\Delta_e + O(\alpha_t)$ & & $\mathrm{polylog}(n, \Delta, \alpha)$ & A & Theorem 14 \\

            \hline
            \begin{tabular}{@{}l@{}}
                \textbf{New}, \\
                Theorem~\ref{theorem_adaptive_Delta_for_low_arboricity}
            \end{tabular} &
            $\Delta_e + \minCarboricity{}$ & $O(\epsilon^{-1} \log n)$ & $O(m)$ & D & $\epsilon \ge \frac{1}{\alpha}$ \\
            
            \hline
            \hline
            \cite{LowerBoundRecourse2018} & 
            \begin{tabular}{@{}c@{}}
                $\Delta+C$ \\
                $C \in [1,\Delta-2]$
            \end{tabular} &
            $\Omega \left (\frac{\log \frac{nC}{(\Delta-C)^2}}{\log \frac{\Delta+C}{\Delta-C}} \right )$ &
            N/A & . & 
            \begin{tabular}{@{}l@{}}
                Specific graph and coloring. \\
                $C \le \frac{\Delta}{3} \Rightarrow \Omega( \frac{\Delta}{C} \cdot \log \frac{Cn}{\Delta^2} )$
            \end{tabular} \\


            \hline
            \begin{tabular}{@{}l@{}}
                \textbf{New}, \\
                Theorem~\ref{theorem_lower_bound_generalized}
            \end{tabular} & 
            \begin{tabular}{@{}c@{}}
                $\Delta + C$ \\
                $C \in [1,\Delta-2]$
            \end{tabular} &
            \begin{tabular}{@{}c@{}}
                $\Omega \Big ( \log_{(\frac{\Delta}{\alpha'} - 1)} \big ( \frac{n \cdot (\beta - 1)}{\Delta} \big ) \Big )$ \\ 
                $\beta \equiv (\frac{\Delta}{\alpha'}-1)(\frac{C}{\alpha'}+1)$
            \end{tabular} &
            N/A & . & 
            \begin{tabular}{@{}l@{}}
                Generalization of \cite{LowerBoundRecourse2018}, \\
                to arboricity $\alpha \in [\frac{\alpha'+1}{2},\alpha']$ \\
                where $\alpha' \in [1, \frac{\Delta-C}{2}] \cap \mathbb{N}$.
            \end{tabular} \\
            \hline
    \end{tabular}

    \caption{\scriptsize{Summary of recourse results. The table begins with upper bounds (most rows), and ends with few lower bounds. \emph{w.h.p.} means with-high probability (could exceed the number of colors). Recourse is worst-case unless specified otherwise. While we only focus on recourse, we also mention running times, and $D$, $A$ and $R$ stand for deterministic, amortized (deterministic) and randomized (oblivious adversary, success w.h.p.). Parameters in the table are: $n$ (vertices), $m$ (edges), $\Delta$ (max degree), $\alpha$ (arboricity), $\epsilon > 0$. subscript implies adaptivity, where $\Delta_t$ and $\alpha_t$ are the maximum degree and arboricity (respectively) at time $t$, and $\Delta_e \equiv \max \{ \deg_t(u),\deg_t(v) \}$ for edge $e=(u,v)$. Adaptivity means that the algorithm does not require prior knowledge of the parameter, but performs with respect to it.}}
    \label{table_recourse_results}
        
\end{table}

The second purpose of this section is to mention additional edge coloring models and results which are not directly relevant to the discussion in Section~\ref{section_introduction}.
We preface this part with the fact that edge coloring is an interesting combinatorial problem, even before considering it as an algorithmic problem. See books~\cite{GraphEdgeColoring2012Book, BookGraphColoring2024Book} for references and results, both combinatorial and algorithmic.

Focusing on algorithmic areas, Section~\ref{section_introduction} did not cover all the models in which edge coloring has been studied, and we list a few more. Some recent works in the \emph{streaming model} are~\cite{StreamingChechikEtal2024ICALP,GhoshEtalWStreamingEdgeColoringICALP2024,StreamingEdgeColoringICALP2025}. The work~\cite{sadeh2024CachingInMatchingsICALP} is in a \emph{caching model} (similar to the \emph{dynamic model} but not quite), and studies the recourse in terms of \emph{competitive-analysis} rather than in absolute terms. Edge coloring has also been studied in the \emph{PRAM model} (parallel computing)~\cite{CHROBAK1989,CHROBAK1990,HE1990,ApproxEdgeColorPRAM2001}. Another adjacent model is when the coloring is implicit~\cite{DynamicColoringImplicitExplicit2020,christiansen2022fullydynamicArboricity}.

In addition to categorizing work by models, some works focus within a specific model on special sub-classes of graphs. Planar graphs were shown by Vizing~\cite{Vizing1965_10up,Vizing1965_8up} to be $\Delta$ edge colorable for $\Delta \ge 8$ with instances that show the necessity of $\Delta+1$ colors if $\Delta \le 5$. \cite{SANDERS2001} showed that $\Delta=7$ planar graphs are also $\Delta$ edge colorable ($\Delta=6$ remains open), and there are other non-algorithmic results that show $\Delta$ edge colorability of certain ``almost planar graphs'', for example \cite{ZHANG20111}. Therefore much work has been put to $\Delta$ edge coloring of planar graphs in the offline centralized and parallel models~\cite{CHROBAK1989,CHROBAK1990,HE1990,Cole2008}. Bipartite graphs are also an interesting family of graphs, and are also $\Delta$ colorable. Some recent works on bipartite graphs are \cite{sadeh2024CachingInMatchingsICALP, onlineMatchingSODA2025}. An even more special family of graphs are trees (both bipartite and planar), which is of particular interest in the \emph{distributed model} setting~\cite{BookDistributedAlgorithms2013,LowerBoundRecourse2018}.
\section{Analysis of Shift-Tree Algorithms (Formal Section)}
\label{appendix_section_shift_tree_algorithms_overview}

In this section we cover the formal analysis of shift-trees and the results that follow from this technique. We start with formal definitions and observations, and then proceed to the results.

\subsection{Definitions and Observations}

For completeness and exact notations, we formally cover some definitions that were already mentioned.

\begin{definition}[Reserved Notations]
$G = (V,E)$ is a simple (dynamic) graph, where $|V|=n$ is fixed in time and $|E|=m$ changes over time based on insertions and deletions. $\Delta$ is an upper bound on the maximum degree of $G$, and similarly $\alpha$ is an upper bound on the arboricity. In adaptive contexts, $\Delta_t$ and $\alpha_t$ are the actual maximum degree and arboricity at time $t$. $\deg_H(v)$ is the degree of node $v$ in a graph $H$. $C$ is the number of ``extra'' colors at our disposal, out of a total of $\Delta+C$ colors.
\end{definition}

\begin{definition}[Recourse]
Let $G$ be a fully dynamic graph. Given an update to $G$, the amount of recourse per update is the number of edges being \emph{recolored}. Coloring a new edge is not per-se recourse. If we did consider it as recourse, it would simply add $+1$ per insertion.
\end{definition}

\begin{definition}[Neighbourhood]
Given a graph $G=(V,E)$, the neighbourhood of a vertex $v \in V$ is the set of all vertices such that there is an edge connecting $u$ and $v$ in $G$. Formally, $N(v) \equiv \{ u \in V : \exists (u,v) \in E \}$. The neighbourhood of an edge $e=(u,v) \in E$ is the set of all edges that share an endpoint with $e$. Formally, $N(e) \equiv \{ e' \in E : e' \cap e \ne \emptyset\}$. (One can define neighbours and neighbourhood of an edge in terms of vertices of the line-graph of $G$.)
\end{definition}

\begin{definition}[Available, Used]
\label{definition_available_colors}
Given a graph $G=(V,E)$ and $v \in V$, let $A(v)$ be the set of available colors, that are not used by any edge incident to $v$. Note that $|A(v)| \ge \Delta + C - \deg_G(v) \ge C$. For an edge $e=(u,v)$, let $A(e) = A(v) \cap A(u)$ be the colors available for $e$. The complementary set of colors, denoted $\bar{A}(x)$ (whether $x$ is a vertex or an edge) are \emph{unavailable}, or \emph{used} near $x$.
\end{definition}

We proceed from basic definitions and notations to shift-trees. The basic definition was given in Definition~\ref{definition_shift_tree_and_terminology}. We make the following simple observations on shift-trees.

\begin{observation}[Unique Internal Appearance]
\label{observation_simple_path_until_leaf}
Since we do not expand inactive copies of a vertex, any path $P$ from the root of a shift-tree $T$ to a leaf contains only one internal (non-leaf) node corresponding to each vertex. The vertex that corresponds to the leaf at the end of $P$ may also have an additional internal active node on $P$. For example, in Figure~\ref{fig_colored_shift_tree_T_example}, $v$ is both the root and a leaf of $T$.
\end{observation}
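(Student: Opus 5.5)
The statement is essentially structural: it follows from the definition of the shift-tree and a short induction on the depth of a root-to-leaf path. I will fix an arbitrary root-to-leaf path $P = (x_0 = v, x_1, \ldots, x_k)$ in $T$ and prove two things. First, every internal node $x_i$ with $i<k$ is an \emph{active} copy of its vertex. Second, no vertex has two active copies in all of $T$. Together these give that the internal nodes of $P$ correspond to distinct vertices. The leaf $x_k$ may be an inactive copy of some vertex $w$, and the only other copy of $w$ that can lie on $P$ is the unique active copy of $w$.

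For the first claim, I argue directly from Definition~\ref{definition_shift_tree_and_terminology}. A node receives children only if it is an active leaf at the moment its level is expanded. Inactive nodes "remain leaves" forever. Hence any node with a child, in particular every $x_i$ with $i<k$, is active.

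For the second claim, I use the activation rule. A newly created child $y$ is marked active only if this is the first time $y$ appears in $T$; every later copy is inactive. One point needs care: several copies of $y$ may be created simultaneously within the same level, so I fix an arbitrary processing order of the nodes in each level. Then "first time" is well defined, and at most one copy per vertex is active. This also agrees with the definition's remark that inactive copies may appear deeper than the active one. The root $v$ is the unique active copy of $v$, since it is created first.

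Combining the two claims, the internal nodes of $P$ are active copies and therefore correspond to pairwise distinct vertices. The leaf $x_k$ is either active or inactive. If it is inactive, its vertex $w$ has exactly one active copy in $T$, which may or may not be an internal node of $P$. This gives the "additional internal active node" allowed in the statement; an example is $v$ in Figure~\ref{fig_colored_shift_tree_T_example}, which appears both as the root and as a leaf.

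The only real obstacle is this same-level tie-breaking subtlety, and I resolve it by fixing the processing order as above. The remaining steps are immediate unfolding of the definition.
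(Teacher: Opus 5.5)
Your proposal is correct and follows essentially the same reasoning as the paper, which justifies the observation in one line: only active copies are ever expanded, and each vertex has at most one active copy, so the internal nodes of a root-to-leaf path are distinct vertices and only the leaf's vertex can recur. Your explicit same-level tie-breaking is a worthwhile clarification of the ``first time we see $y$'' rule. The paper leaves it implicit, and without it two same-level copies could both be active, which would break the claim.
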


\begin{observation}[Parent-Child]
\label{observation_parent_child_in_shift_tree}
Two neighbours $x,y \in G$ may be parents of each other in different locations in $T$. This can happen if we expand $x$ and $y$ in separate branches of $T$, and each happens to have the other as a child. For example, in Figure~\ref{fig_colored_shift_tree_T_example}, the pairs $x$, $d$ and $b$, $c$ satisfy this. However, $x$ cannot be a grandchild of itself in $T$. In such a case, let $y$ be the node in between. Then by definition when we expand the children of $y$, we look for edges of $y$ that are colored by colors currently available in $x$. In particular, since currently $(x,y) \in G$ is considered uncolored (since we shifted colors along the path), no available color in $x$ corresponds to $(x,y)$.
\end{observation}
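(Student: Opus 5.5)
The observation has two parts. The first part says that two neighbours $x,y$ may be parents of each other in different branches of $T$. This is only a possibility claim, so I will justify it by an explicit instance, namely the pairs $x,d$ and $b,c$ in Figure~\ref{fig_colored_shift_tree_T_example}. I will add the structural remark that nothing in Definition~\ref{definition_shift_tree_and_terminology} forbids it. When $x$ is expanded in one branch and $y$ in another, each expansion is computed with respect to a different shifted coloring (the one induced by its own root-to-node path). Hence $color(x,y)$ may be available at $x$'s parent in one branch and at $y$'s parent in the other.

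The substantive part is that no node has a grandchild that is a copy of the same vertex. I will argue by contradiction. Suppose a node $x$ has an active child $y$ (only active nodes are expanded), and $y$ has a child corresponding to $x$. Let $P$ be the chain from $(u,v)$ down to $(x,y)$ in $T$. By the definition, the children of $y$ are the neighbours $z$ with $(y,z)\in G$ and $color(y,z)\in A'(x)$. Here $A'(x)$ is the set of colors available at $x$ after shifting colors along $P$, so that $(x,y)$ becomes uncolored.

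The key step is to evaluate the candidate $z=x$. Its edge $(y,z)$ is exactly $(x,y)$, which in the shifted coloring is the uncolored edge and carries no color. Hence the membership condition $color(y,x)\in A'(x)$ cannot hold, and $x$ is not a child of $y$, which is the desired contradiction. I will also check the base case where $x$ is the root $v$ and its parent is the virtual vertex $u$. In that case $y=u$ is never a node of $T$, since $u$ is not part of $T$, so nothing needs checking there.

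The only point needing care, which I expect to be the main (mild) obstacle, is to be precise that ``$color(x,y)$'' is read in the shifted coloring along $P$ and not in the original coloring. In the original coloring $(x,y)$ may well be colored by a color that becomes available at $x$ after the shift. I will make this explicit by recalling that shifting along $P$ moves the color of $(x,y)$ onto its predecessor edge on $P$.
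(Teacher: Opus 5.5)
Your argument is the same as the paper's: the children of the middle node $y$ are chosen by colors in $A'(x)$, computed after shifting along the path so that $(x,y)$ is uncolored, so the edge $(y,x)$ carries no color and $x$ cannot reappear as a child of $y$. Two side remarks are slightly off but do not affect the proof: (i) the shift moves the original color of $(x,y)$ onto the predecessor edge $(p,x)$, or onto $(u,v)$ when $x=v$, which is incident to $x$, so that color is never in $A'(x)$ and the conclusion holds under either reading of $color(y,x)$; (ii) the root case needs no separate treatment, and copies of the vertex $u$ can be nodes of $T$ (only the virtual parent is excluded), but $u$ cannot be a child of the root by this same argument.
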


\begin{observation}[Degree Implications]
\label{observation_degree_in_T_versus_G}
If $x$ corresponds to an internal node in $T$ then the degree of $x$ in $G$ is at least the number of its children in $T$ plus $1$ for its parent (the root also has a ``virtual'' parent). We also know that the degree of $x$ in $G$ is at least the number of nodes that are parents of $x$ in $T$ since every vertex can only correspond to one internal node in $T$. However, we cannot simply sum these two quantities due to Observation~\ref{observation_parent_child_in_shift_tree}.
\end{observation}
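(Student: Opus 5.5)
The plan is to prove the two inequalities separately, each by an injection from the relevant set of tree nodes into the set of edges of $G$ incident to $x$. Then we explain why the two injections may have overlapping images.

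\emph{First inequality.} Let $x$ be the internal node of $T$ corresponding to vertex $x$; it is active, since only active nodes are expanded.
\begin{enumerate}
\item Let $p$ be its parent in $T$, with $p=u$ if $x=v$ is the root. By construction of $T$, $(p,x)$ is an edge of $G$. For the root this is the uncolored edge $(u,v)$, which is still counted in $\deg_G(v)$.
\item Each child $y$ of $x$ is a neighbour with $(x,y)\in G$. By Definition~\ref{definition_shift_tree_and_terminology}, a vertex $y$ either is or is not made a child of $x$, so distinct children give distinct neighbours. Since $G$ is simple, they also give distinct edges.
\item It remains to check that $p$ is not among the children, by the argument of Observation~\ref{observation_parent_child_in_shift_tree}. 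When the children of $x$ are chosen, the colors on the root-to-$(p,x)$ path have been shifted so that $(p,x)$ is uncolored. A child $y$ requires $color(x,y)\in A'(p)$, so $(x,y)$ must be colored. Hence $y=p$ is impossible.
\end{enumerate}
Thus the parent and the children of $x$ give pairwise distinct edges at $x$, so $\deg_G(x)\ge \#\text{children}+1$.

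\emph{Second inequality.} Consider all nodes of $T$ that correspond to $x$, both active and inactive. Each one other than the root has a parent node in $T$. Every such parent node is internal, because it has a child. The parent of a copy of $x$ is a neighbour vertex $w$ with $(w,x)\in G$, since children are always joined by graph edges.
\begin{enumerate}
\item By the definition of active and inactive nodes, and as recorded in Observation~\ref{observation_simple_path_until_leaf}, each vertex of $G$ corresponds to at most one internal node of $T$.
\item Hence distinct parent nodes of copies of $x$ correspond to distinct vertices $w$, and therefore to distinct edges $(w,x)$.
\end{enumerate}
This injection gives $\deg_G(x)\ge$ the number of nodes that are parents of $x$ in $T$. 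If the root copy is included, its virtual parent $u$ contributes the edge $(u,v)$. This edge is distinct from the others, because $u\notin T$.

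\emph{Why the bounds cannot be summed.} The two injections can hit the same edge. A neighbour $y$ may be a child of the internal copy of $x$ and, in another branch, the internal parent of an inactive copy of $x$; Figure~\ref{fig_colored_shift_tree_T_example} shows this with the pair $x,d$. In that case the edge $(x,y)$ is counted by both injections, so adding the two counts would overcount it.

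The only step needing care is the distinctness of $p$ from the children in the first inequality, and the recolored-path argument above handles it.
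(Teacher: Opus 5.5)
Your argument for both stated inequalities, and for why they cannot be added, is correct. It is the paper's inline justification made explicit:
\begin{itemize}
\item the parent and the children of the unique internal copy of $x$ are distinct neighbours of $x$;
\item the parents of copies of $x$ are distinct neighbours, because each vertex has at most one internal node;
\item the pair $x,d$ of Observation~\ref{observation_parent_child_in_shift_tree} shows that the two counts can overlap.
\end{itemize}

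One aside is wrong and should be dropped: the claim that the virtual parent's edge $(u,v)$ is distinct from the others ``because $u\notin T$''. Only the \emph{virtual} parent is outside $T$. The vertex $u$ can itself occur as an internal node of $T$ whose child is a leaf copy of $v$. This happens in Figure~\ref{fig_colored_shift_tree_T_example}; see also Observations~\ref{observation_simple_path_until_leaf} and~\ref{observation_unique_edges_path_in_shift_tree}. In that case your injection would count the edge $(u,v)$ twice.

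The statement only counts actual nodes of $T$, so nothing is lost by dropping the aside. If you do want the extra $+1$ for $x=v$, it holds for a different reason. A child $b$ of the root can only be expanded at its level-$1$ node, and there the edge $(v,b)$ is uncolored. Hence $b$ is never the parent of a copy of $v$, and that unused neighbour leaves room for $(u,v)$. If the root has no children, $T$ is just the root and there is nothing to count.
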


\begin{observation}[Unique Edges of Paths]
\label{observation_unique_edges_path_in_shift_tree}
Let $P$ be a shiftable path contained in $T$ with $root(T)=v$, that begins at $e_1 = (u,v)$ and ends at $e_k = (u',v')$ for some $k \ge 2$. Every edge in $P$ is unique, 
except possibly $e_1$ and $e_k$ that may be the same, with
$u'=u$ and $v'=v$.
Indeed, by Observation~\ref{observation_parent_child_in_shift_tree} no vertex appears internally twice on the path or is a grandchild of itself. Therefore an edge $e'=(x,y)$ cannot appear before $e''=(y,x)$ because either $e''$ is the successor of $e'$ and $x$ is a grandchild of itself, or $y$ is internal twice. So, the only possibility for a repeating edge is if $u'=u$ and $v'=v$, which is possible (see an example in Figure~\ref{fig_colored_shift_tree_T_example}).
\end{observation}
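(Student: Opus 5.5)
The plan is to read the claim off Observation~\ref{observation_simple_path_until_leaf} and Observation~\ref{observation_parent_child_in_shift_tree}. Write $P=[e_1,\dots,e_k]$ as the tree path $v=x_1,x_2,\dots,x_k$ in $T$, with $e_1=(u,v)=(x_0,x_1)$ where $x_0=u$ is the virtual parent, and $e_i=(x_{i-1},x_i)$. Every $x_i$ with $1\le i\le k-1$ is an internal node of $T$. By Observation~\ref{observation_simple_path_until_leaf}, the vertices $x_1,\dots,x_{k-1}$ are therefore pairwise distinct. Only $x_k$ may coincide with one of them, and $x_0=u$ is not part of $T$ as a node expanded on this path, except that $u$ itself may later appear as a node.

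\textbf{Step 1.} Suppose $e_i=e_j$ for some $i<j$, as unordered edges. There are two cases for the orientation.

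If $(x_{i-1},x_i)=(x_{j-1},x_j)$ with $j\le k$, then $x_{i}=x_{j}$. When $j<k$ both are internal, which contradicts distinctness. So $j=k$, and $x_{j-1}=x_{i-1}$. Since $j-1\ge 1$, $x_{j-1}$ is internal. If $i-1\ge1$, then $x_{i-1}$ is also internal and $i-1\ne j-1$, a contradiction. Hence $i=1$: we get $x_{k-1}=u$, $x_k=v$, $e_1=e_k$. This is exactly the allowed exception.

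If instead the orientation is reversed, $x_{i-1}=x_j$ and $x_i=x_{j-1}$, first take $j=i+1$. Then $x_{i-1}=x_{i+1}$, so a node is a grandchild of itself. For $i\ge2$ this contradicts Observation~\ref{observation_parent_child_in_shift_tree}. For $i=1$ it means $x_2=u$ is a child of $v$ via the edge $(v,u)$, which is uncolored, so it cannot be a child. If $j>i+1$, then $x_i=x_{j-1}$ with both indices in $[1,k-1]$ and distinct, again contradicting distinctness.

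\textbf{Step 2.} Check that the allowed case matches the statement. There $u'=x_{k-1}=u$ and $v'=x_k=v$, so $e_k=(u,v)$, and this is realizable as in Figure~\ref{fig_colored_shift_tree_T_example}.

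\textbf{Main obstacle.} The delicate point is the boundary index involving the virtual parent $u$, which is not an internal tree node. I need to handle it carefully so that distinctness is invoked only for $x_1,\dots,x_{k-1}$, and so that the grandchild argument at $i=1$ relies on $(u,v)$ being uncolored rather than on Observation~\ref{observation_parent_child_in_shift_tree} directly.
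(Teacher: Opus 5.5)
Your proof is correct and follows the paper's own argument. Both rest on the distinctness of the internal vertices $x_1,\dots,x_{k-1}$ (Observation~\ref{observation_simple_path_until_leaf}) and on the fact that no vertex is a grandchild of itself (Observation~\ref{observation_parent_child_in_shift_tree}), with the case split by the orientation of the repeated edge. You are more explicit than the paper, which spells out only the reversed-orientation case; it leaves implicit both the same-orientation case and the boundary case $i=1$ with the virtual parent $u$, which you correctly settle by noting that $(u,v)$ is uncolored.
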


\subsection{Bounding the Depth of a Shift-tree}
\label{appendix_proof_parameterized_lemma}

In order to analyze the recourse, we require a bound on the depth of a shift-tree $T$. Lemma~\ref{lemma_shift_tree_max_depth} gives this bound in terms of a multiplicity parameter $b$. Recall that the intuitive interpretation of $b$ is how many shiftable paths in $T$ reach a specific vertex $x$ from different edges of $x$. The more we have, the more likely we are to find a useful shiftable path.

\begin{restatable}[Depth Bound]{lemma}{lemmaBoundShiftTreeDepth}
\label{lemma_shift_tree_max_depth}
Let $T$ be a shift-tree that we extend until either: (1) we find a useful shiftable path, or (2) some vertex has $b \le C$ inactive copies. Let $\beta \equiv \frac{C+1}{b}$ and $H = \floor{\log_\beta n}$. Then $depth(T) \le H + 1$.
\end{restatable}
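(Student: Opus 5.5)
Throughout, level $\ell$ of $T$ is complete when we build level $\ell+1$, and neither stopping condition has occurred.

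We count active nodes level by level and show they grow geometrically by a factor $\beta$. Since every vertex of $G$ has at most one active copy in $T$, the total number of active nodes is at most $n$. So the number of levels containing active nodes is at most about $\log_\beta n$, and at most one further level (of inactive leaves) can follow.

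\textbf{Step 1: every active node has many children.} Let $x$ be an active node at level $\ell$. Because we did not stop under condition (1), Lemma~\ref{lemma_basic_shift_children} implies that $x$ has at least $C+1$ children. Let $a_\ell$ be the number of active nodes at level $\ell$. Then level $\ell+1$ contains at least $(C+1)a_\ell$ nodes.

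\textbf{Step 2: few of those children are inactive copies of the same vertex.} Each node at level $\ell+1$ is either the first (active) copy of its vertex or an inactive copy. Because we did not stop under condition (2), every vertex has fewer than $b$ inactive copies in $T$. I would read condition (2) as triggering as soon as some vertex accumulates $b$ copies. Hence each vertex appearing at level $\ell+1$ contributes fewer than $b$ nodes there, counting its active copy together with its inactive copies. The number of distinct vertices at level $\ell+1$ is therefore at least $(C+1)a_\ell/b=\beta a_\ell$.

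\textbf{Step 3: converting distinct vertices into new active nodes.} This is the main obstacle. A vertex appearing at level $\ell+1$ might already have its active copy at a shallower level, so that all its copies at level $\ell+1$ are inactive. To handle this, I would replace per-level counts by cumulative ones. Let $V_\ell$ be the set of vertices whose active copy lies at depth at most $\ell$. Every node at depth $\le \ell+1$ corresponds to a vertex of $V_{\ell+1}$, and each such vertex owns at most $b$ nodes. The nodes at depth $\ell+1$ alone number at least $(C+1)a_\ell$, so
$$b\,|V_{\ell+1}| \ge (C+1)a_\ell + |V_\ell|.$$
If $\beta\le 1$ the claim is trivial or vacuous, so I would assume $\beta>1$. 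A cleaner route is to count children of all active nodes at depth $\le\ell$, which is at least $(C+1)|V_\ell|$ nodes, each at depth $\le \ell+1$. This gives $|V_{\ell+1}|\ge \beta|V_\ell|$ directly.

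\textbf{Conclusion.} With $|V_0|=1$ (the root), induction gives $|V_\ell|\ge\beta^\ell$. Since $|V_\ell|\le n$, active nodes exist only up to depth $H=\lfloor\log_\beta n\rfloor$. Children of the deepest active nodes sit at depth at most $H+1$, so $depth(T)\le H+1$. Finally, I would double-check the off-by-one between ``$b$ inactive copies'' and ``$b$ copies in total''. If the counts are off by the single active copy, the bound $b|V_{\ell+1}|$ may need to become $(b+1)$ times that. I would fix this by adjusting the exact moment at which condition (2) fires.
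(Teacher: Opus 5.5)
\textbf{Verdict.} Your proof is correct once you take the ``cleaner route'' in Step 3, and it takes a genuinely different and much shorter path than the paper's.

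\textbf{The paper's route.} The paper sets up a continuous single-player game on states $(A_k,I_k,L_k)$. It ends the game via the pigeonhole relaxation $I_k>(b-1)A_k$. An exchange argument then shows that Greedy (inactivate as much as allowed, as early as possible) maximizes the number of steps. Solving Greedy's recursion gives $A_k=1+a_0\frac{\beta^k-1}{\beta-1}$. Finally, the optimality argument is redone with variable stretch factors $\rho_k\ge C+1$ to handle nodes with more than $C+1$ children.

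\textbf{Your route.} You observe directly that every node at depth $\le\ell+1$ is either the root or a child of one of the $|V_\ell|$ active nodes at depth $\le\ell$. Each such node belongs to a vertex of $V_{\ell+1}$, and each vertex owns at most $b$ nodes. Hence $b|V_{\ell+1}|\ge 1+(C+1)|V_\ell|$. This is exactly the paper's game constraint $A_{k+1}+I_{k+1}=1+(C+1)A_k\le bA_{k+1}$. Because it is a lower bound that is monotone in $|V_\ell|$, plain induction gives $|V_\ell|\ge\beta^\ell$. Keeping the $+1$ even reproduces Greedy's trajectory exactly. So the optimality-of-Greedy argument is unnecessary. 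The irregular case (more than $C+1$ children) is also handled for free, since you only use ``at least $C+1$''. Your strict inequality $|V_{\ell+1}|>|V_\ell|$ further shows the construction cannot run out of active nodes before (1) or (2) fires. What the paper's game adds is intuition about the extremal structure (early inactivation is what prolongs the tree), not a stronger bound.

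\textbf{Clean-ups.}
\begin{enumerate}
\item Your closing off-by-one worry is unfounded. With the trigger as stated ($b$ inactive copies), before stopping every vertex has at most $b-1$ inactive copies plus exactly one active copy. So ``at most $b$ nodes per vertex'' is exactly right and nothing needs adjusting. Changing when (2) fires would alter the lemma anyway.
\item The case $\beta\le 1$ never arises, because $b\le C<C+1$.
\item The sentence ``active nodes exist only up to depth $H$'' is not quite what you proved, since the final level may contain active nodes. State the conclusion via the stopping level instead. If $T$ stops at depth $D$, then (2) can have fired only at level $D$, and every active node at depth $\le D-2$ was expanded with at least $C+1$ children. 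So the recursion is valid for all $\ell\le D-1$. This gives $\beta^{D-1}\le|V_{D-1}|\le n$, hence $D-1\le H$ and $D\le H+1$.
\item Step 2, the per-level distinct-vertex count, is superseded and can be dropped.
\end{enumerate}
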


Lemma~\ref{lemma_shift_tree_max_depth} is the reason for the bound $b \le C$ in Theorem~\ref{theorem_large_palette_general_parameterized} that appears later. $b = C+1$ is also achievable, but then the depth of the tree might become linear due to the rate of encountering new vertices. By a simple counting argument, one cannot guarantee having $b \ge C+2$ inactive copies of some vertex.\footnote{A tree with $N$ inner nodes each having exactly $C+1$ children has a total of $1+N(C+1)$ nodes. Then on average each vertex appears $C+1 + \frac{1}{N}$ times, one copy is active, so it could be that no vertex has more than $C+1$ inactive copies.}

\begin{remark}
\label{remark_nitpick_full_level}
\looseness=-1
The pedantic reader may wonder when \emph{exactly} we stop the expansion of $T$. It does not truly matter. Our discussions take the arguably more natural approach to consider $T$ in whole levels. Implementation-wise, one may wish to immediately stop even within a level.
\end{remark}

\begin{proof}[Proof of Lemma~\ref{lemma_shift_tree_max_depth}]
Consider the process of extending the shift-tree level by level. By Lemma~\ref{lemma_basic_shift_children} every inner node of $T$ must have at least $C+1$ children because by the first stopping condition we stop successfully just before expanding a node with fewer children. Next, we show that we must encounter $b$ inactive copies of some vertex at level $\le H + 1$. To simplify the proof, we first assume that every active node has \emph{exactly} $C+1$ children. We remove this assumption at the end of the proof.

\textbf{We define a single-player game} that is based on the way we extend $T$. It is played in \emph{steps}, not to be confused with depth/levels in $T$. We think of the graph $G$ as a player in this game who plays exactly according to the way we defined the shift-tree. 
The \emph{state} of the game at \emph{step} $k$ is  a triplet $(A_k,I_k,L_k)$ whose components correspond to the total active nodes in $T$ at depth $\le k$, the total inactive nodes in $T$ at depth $\le k$, and leaves of $T$ at level $k+1$ prior to being assigned their active/inactive status, respectively. The initial state of the game, played from the children of the root of $T$, is $(A_0,I_0,L_0) = (1,0,C+1)$ because the root of $T$ is the only node at depth $\le 0$, it is active, and we need to assign active/inactive status to its children.\footnote{A more faithful version would start the game from the grandchildren in $T$, with the state $(C+2,0,(C+1)^2)$ since we know that every child of the root corresponds to a different vertex in $G$. However, this extra care for details does not improve the bound by much, so we simplify.}

A \emph{step} of the game proceeds by choosing a quantity $0 \le a_k \le L_k$ and setting $a_k$ of the leaves to be active. The rest, $L_k-a_k$ are set as inactive. Then, each active leaf has $C+1$ children, so the new state of the game is: $(A_{k+1},I_{k+1},L_{k+1}) = (A_k + a_k , I_k + (L_k-a_k) , (C+1) \cdot a_k)$. We define the game to end consistently with the termination of the construction of the shift-tree. So, the game ends either when we see $b$ inactive copies of some vertex (think for now that we keep track of the identity of the inactive nodes counted in $I_k$ but this would not be needed in the next paragraph where we relax the game) or when $A_k \ge n$. In the latter case the leaves of the last state belong to the tree (and all must be inactive) whereas in the former they do not, therefore the number of steps in the game plus $1$ upper bounds the depth of $T$. The objective of the game is to play as many steps as possible. Our goal is to show that the game stops at step $\le H$.

Critically, even though $G$ always chooses $a_k \in \mathbb{N}$, and its states are triplets of integers, we define the game over the real numbers. That is, alternative strategies may populate these triplets with fractional values, in which case the correspondence to actual nodes in the shift-tree is just conceptual (but, the strategy $G$ still produces $T$ in this relaxation). Because we study the game over the reals, we relax the first stopping condition to end the game if $(b-1) \cdot A_k < I_k$. In the integer case, this indeed guarantees that we have $b$ inactive nodes of some vertex by the pigeon-hole principle. Instead of referring to nodes in a tree and expanding each active node by its $C+1$ children, we adopt the continuous language of \emph{segments}. We group all nodes in a level into a single segment. This segment is split into two disjoint parts, one represents the active nodes (blue in Figure~\ref{figure_continuous_tree_segments}), and its length is $a_k$. The other represents the inactive nodes (gray in Figure~\ref{figure_continuous_tree_segments}) and its length is $L_k-a_k$. The active segment then expands by a factor of $C+1$ to generate the entire segment (union of blue and gray) of the next level. In this continuous setting we can consider an arbitrary segment $[x,y]$ as a ``node'' in the ``continuous tree''. It has a parent-segment $[\frac{x}{C+1},\frac{y}{C+1}]$, and if it is active then it has a child-segment $[(C+1)x,(C+1)y]$ stretched out of it.

\begin{figure*}[t]
	\centering
    \begin{subfigure}[t]{.35\textwidth}
        \centering
        \includegraphics[width=\textwidth]{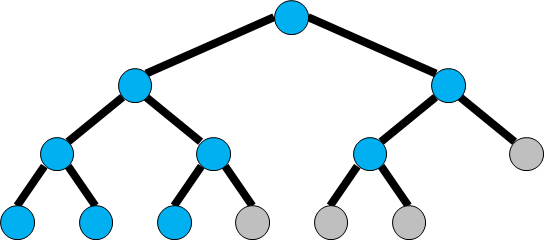}
        \subcaption{A tree $T$}
        \label{fig_game_tree}
    \end{subfigure}
    \hfill 
    \begin{subfigure}[t]{.3\textwidth}
        \centering
        \includegraphics[width=0.6\textwidth]{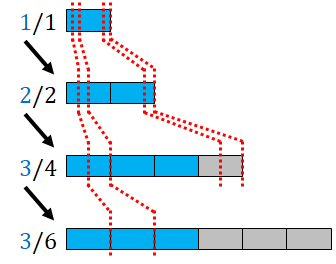}
        \subcaption{$T$ as integer segments}
        \label{fig_game_segments1}
    \end{subfigure}
    \hfill 
    \begin{subfigure}[t]{.3\textwidth}
        \centering
        \includegraphics[width=0.6\textwidth]{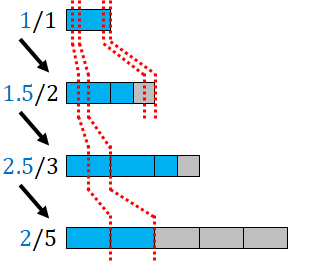}
        \subcaption{Continuous Segments}
        \label{fig_game_segments2}
    \end{subfigure}

	\caption{\small{An example of the expansion game in the proof of Lemma~\ref{lemma_shift_tree_max_depth}. (a) A standard tree $T$, with expansion of $2$ ($C=1$). Blue marks active nodes and gray marks inactive nodes. (b) A presentation of $T$ as integer segments, same color-coding. To the left of each segment we have two numbers, black for the length of the entire segment and blue for the length of its active subsegment. The length of the whole segment is twice the length of the active subsegment of the previous level. The red funnels illustrate the fractional ancestry hierarchy of these segments, where a segment $[x,y]$ of level $k$ is a child of the segment $[\frac{x}{2},\frac{y}{2}]$ of level $k-1$ and, if active, the parent of segment $[2x,2y]$ of level $k+1$. (c) Illustration of a game with fractional inactivations. The first two fractional inactivations are $\frac{1}{2}$. Observe that the total inactive amount is $4$ in both (b) and (c), yet the total blue area (active) in (c) is smaller. This is natural since an early inactivation reduces the size of future segments.}}
	\label{figure_continuous_tree_segments}
\end{figure*}

The continuous game allows us to simplify the analysis and avoid rounding issues. We try to understand how deep this tree can grow without having $b$ inactive copies of the same vertex. Intuitively, we show that this constraint, coupled with the fact that each non-leaf node has $C+1$ children, forces the player to introduce many new vertices into the game at each level and thereby quickly add all available vertices into the tree. We show that to maximize the depth we would like to have as many deactivations at each level as possible, so our growth is as small as possible. We can introduce $b-1$ inactive nodes per active node in the tree. But for example, when $b$ does not divide $C+1$ and we insist on an integral number of nodes we may not be able to do this exactly, and will have to keep track of the leftovers. The continuous game allows us to use our ``deactivation budget'' immediately, without having to track leftovers to be used in future steps. Figure~\ref{figure_continuous_tree_segments} exemplifies this where in the first expansion we have a budget to inactivate $\frac{1}{2}$. In the integer case we would have waited for the next level to accumulate another $\frac{1}{2}$ for a whole inactivation unit, while the continuous version does not have to wait.

We define the greedy strategy (called Greedy, whose values are annotated with superscript $g$) to play the game by choosing the minimal $a^{g}_k$ possible such that $(b-1) \cdot A^{g}_{k+1} \ge I^{g}_{k+1}$. We prove three properties of Greedy: (1) Greedy can never choose $a^{g}_k=0$, (2) Greedy is optimal, and (3) Greedy plays at most $H$ steps. Recall that the number of steps played by $G$ is tied to the depth of $T$. The optimality of Greedy provides us with an upper bound to the number of steps played by $G$, and proving these three claims will give us that $depth(T) \le steps(G) + 1 \le steps(Greedy) + 1 \le H+1$.

\textbf{Greedy can never choose $a^g_k = 0$.} We prove by induction that $\forall k \ge 1: a^g_{k-1} >0$ and $(b-1) \cdot A^{g}_{k} = I^{g}_{k}$. The base $(i=1)$ is the first step: Choosing $a^{g}_0=0$ results in $A^{g}_1=1,I^{g}_1=C+1$, but $\frac{I^{g}_1}{A^{g}_1} = C+1 > b$ is a violation of the definition of Greedy. By definition Greedy chooses $a^{g}_0 = \frac{C+2-b}{b}$, and gets $A^{g}_1 = 1+a_0=\frac{C+2}{b}$ and $I^{g}_1 = C+1-\frac{C+2-b}{b} = \frac{(C+2)(b-1)}{b}$ so $I^{g}_1 = (b-1) A^{g}_1$. Henceforth, for $i \ge 1$, by the induction hypothesis $a^g_{i-1} > 0$ and $I^g_{i} = (b-1) A^g_{i}$. Since $a^g_{i-1} > 0$ we get that $L^g_{i} > 0$. By definition Greedy maintains a ratio of (at most) $b-1$ and thus chooses $a^g_{i} = \frac{L^{g}_{i}}{b} > 0$, because then:
$\frac{I^g_{i} + (L^{g}_{i} - a^g_{i})}{A^g_{i} + a^g_{i}} = \frac{(b-1)A^g_{i} + (b-1) \cdot a^g_{i}}{A^g_{i} + a^g_{i}} = (b-1)$. Choosing $a^g_{i} < \frac{L^{g}_{i}}{b}$ would increase the ratio, violating the definition of Greedy.

\textbf{Greedy is optimal.} This is shown by a simple but technical substitution argument. Recall that an active segment $[x,y]$ has a child-segment $[(C+1)x,(C+1)y]$ and a parent-segment $[\frac{x}{C+1},\frac{y}{C+1}]$. Recall also the intuition that inactivating segments early results in smaller segments in the future, which serves to prolong the game, and this is why Greedy is optimal. Revisit Figure~\ref{figure_continuous_tree_segments} for an example. The formal technical argument follows.

Let $S$ be some non-greedy algorithm that plays the game $N$ steps. Use the superscript $s$ to annotate values related to it. Let $t^s \ge 0$ be the first step at which $S$ deviates from Greedy, so $a_t^s > a_t^g$. Let $j^s > t^s$ be the earliest future step where $S$ inactivates some quantity, that is $a^{s}_j < L^{s}_j$. If no such step exists, $j^s = \infty$. We show that we can modify $S$ to $S'$ (annotate values of $S'$ with prime) that also plays the game $N$ steps and either $t' \ge t^s + 1$ (same as greedy for one more step) or $j' \ge j^s + 1$ (another kind of progress). In particular, if $j^s = \infty$ then it must be that $t' \ge t^s + 1$.

We can repeat this argument to improve $S'$ to $S''$, etc. Because $N$ is finite, the increments of $t^s$ and $j^s$ eventually yield a strategy $S^*$ that is a prefix of Greedy. Then we may simply continue to play as Greedy, which would show that the number of steps played by Greedy is the maximum possible, being at least as large as that of any other strategy $S$. It may be strictly larger, because each improvement $S \to S'$ reduces the total accumulated active amount, and this reduction may let us play additional steps while $A^{g}_k < n$. We proceed to detail the modification $S \to S'$, see Figure~\ref{figure_modified_strategy}. In the following paragraphs, due to heavy use of $t^s$ and $j^s$ we simply write $t \equiv t^s$ and $j \equiv j^s$.

\begin{figure*}[t]
	\centering
    \begin{subfigure}[t]{.3\textwidth}
        \centering
        \includegraphics[width=0.9\textwidth]{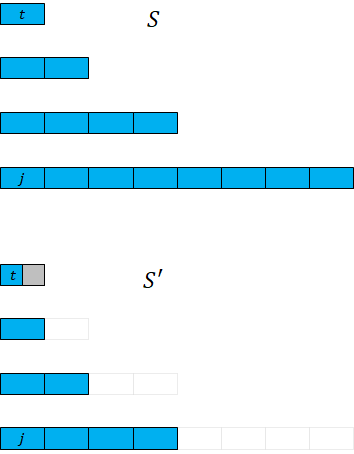}
        \subcaption{No future inactivation}
        \label{figure_strategy_modified_j_infinity}
    \end{subfigure}
    \hfill 
    \begin{subfigure}[t]{.3\textwidth}
        \centering
        \includegraphics[width=0.9\textwidth]{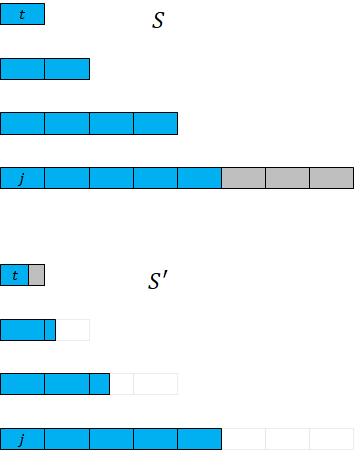}
        \subcaption{Small future inactivation}
        \label{figure_strategy_modified_small_quantity}
    \end{subfigure}
    \hfill 
    \begin{subfigure}[t]{.3\textwidth}
        \centering
        \includegraphics[width=0.9\textwidth]{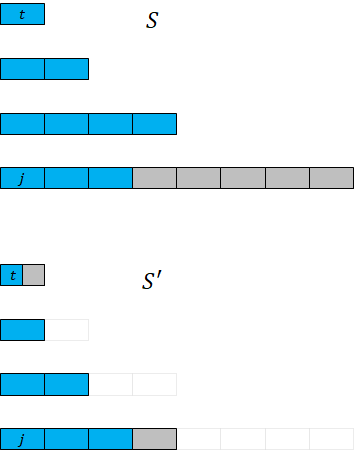}
        \subcaption{Large future inactivation}
        \label{figure_strategy_modified_large_quantity}
    \end{subfigure}

	\caption{\small{Three cases for the modification from strategy $S$ (top) to $S'$ (bottom) as detailed in the proof of Lemma~\ref{lemma_shift_tree_max_depth}. We focus on the full active (blue) segment at step $t$, and the next inactivation (grey) is at step $j$, if such exists. The growth factor of blue segments in this example is $2$ ($C=1$). The segments are divided for visual purposes to ease comparing sizes, these are not necessarily integer units. Phantom dashed segments in $S'$ mark quantities that are saved (do not exist) by early inactivation. (a) $S$ makes no inactivation after $t$ ($j = \infty$), in this case we just inactivate in step $t$ an extra segment such that $S'$ conforms with Greedy at step $t$. (b) Past step $t$, $S$ makes a little inactivation at step $j$, insufficient for $S'$ to conform to Greedy in step $t$. Then we inactivate an extra $\epsilon$ such that $\epsilon \cdot (C+1)^{t-j}$ exactly voids the inactivation of $S$ at step $j$. (c) Past step $t$, $S$ makes a large inactivation at step $j$ that lets us conform $S'$ to Greedy in step $t$. Some leftover inactivation in $S'$ may remain at step $j$. Observe that in both (b) and (c) $S'$ accumulates less activity in steps $t+1$ to $j-1$, but the current active segment in step $j$ is the same for $S$ and $S'$ by construction.}}
	\label{figure_modified_strategy}
\end{figure*}

If $j = \infty$, let $a'_t = a^{g}_t < a^{s}_t$ and $a'_i = L'_i$ for $t < i \le N$ (See Figure~\ref{figure_strategy_modified_j_infinity}). We remind the reader that $a_i$ affects state $i+1$, so the choice at time $t$ only makes a difference between $S'$ and $S$ from state $t+1$. In words: $S'$ agrees with Greedy one additional step, in state $t+1$, and then makes no additional inactivations, like $S$. Observe that $A'_{t+1} < \ldots < A'_{N}$
(expansion) while $I'_{t+1} = \ldots = I'_{N}$ (no more inactivations). Then $\forall i \in [t+1,N]: \frac{I'_{i}}{A'_{i}} \le \frac{I'_{t+1}}{A'_{t+1}} = \frac{I^{g}_{t+1}}{A^{g}_{t+1}} = b-1$. Thus the game does not stop for $S'$ due to the ratio rule, and since $\forall i \ge t: A'_{i} < A^{s}_i$, the game also does not stop for $S'$ due to accumulating $n$ active quantity. Thus $S'$ plays at least $N$ steps.

If $j < \infty$, let $\epsilon = \min(a^{s}_t - a^{g}_t , (L^{s}_j-a^{s}_j) \cdot \frac{1}{(C+1)^{j-t}})$ and set $a'_t = a^{s}_t - \epsilon$. In words, $S'$ inactivates $\epsilon$ more than $S$ in step $t$, where
the minimum in the definition of $\epsilon$ ensures two things: (1) Not to inactivate more than Greedy because then the game ends due to the ratio rule. (2) Not to inactivate an extra segment that is too long. The segment is too long if eventually its expansion at state $j$ would be larger than the whole inactive amount of $S$ at step $j$. During steps $i \in [t+1,j-1]$, $S$ has no inactivations so we choose the same for $S'$ and define $a'_i = L'_i$ (then $a'_i = (C+1)^{i-t} a'_{t}$). When $i \in [j,N]$ we resume to play like $S$ and choose $a'_i = a^{s}_i$. We can choose $a'_i = a^{s}_i$ because by definition of $\epsilon$, the second condition guarantees that $L'_j = a^s_j$ (we promote inactivation that \emph{does not} cause $L'_j < a^s_j$).

Now we show $S'$ keeps playing for $N$ steps. We remark that $A'_i \le A^{s}_i$ for $i \ge 0$ because the early inactivation at step $t$ reduces the accumulated active quantity, so if $S'$ stops before $N$ steps, it is only because of violating the ratio rule, and we show that it does not. Consider the following periods:
\begin{enumerate}
    \item $i \in [0,t]$: Recall that we made the change at step $t$, so the first affected state is $t+1$. Thus the game of $S'$ is identical to the game of $S$ during this period, and it does not stop.
    
    \item $i \in [t+1,j]$: $A'_{t+1} = A^{s}_{t+1} - \epsilon$ and $I'_{t+1} = I^{s}_{t+1} + \epsilon$. The choice of $\epsilon \le a^{s}_t - a^{g}_t$ guarantees that $a'_t \ge a^{g}_t$, which means that $A'_{t+1} \ge A^{g}_{t+1}$ and $I'_{t+1} \le I^{g}_{t+1}$. Then $\frac{I'_{t+1}}{A'_{t+1}} \le \frac{I^{g}_{t+1}}{A^{g}_{t+1}} = b-1$. For $i>t+1$ observe that $S'$ accumulates no additional inactive quantity so $I'_{i} = I'_{t+1}$, while $A'_{i} > A'_{t+1}$, therefore $\frac{I'_{i}}{A'_{i}} \le \frac{I'_{t+1}}{A'_{t+1}} \le b-1$. Thus the game of $S'$ does not stop during this period.
    
    \item $i \in [j+1,N]$: Inactivating $\epsilon$ more in step $t$, reduced the active amount in state $k$, $t+1 \le k \le j$ by $\epsilon \cdot (C+1)^{k-(t+1)}$, so the total (accumulated) reduction until step $j$ is the sum of a geometric series: $\epsilon \cdot \frac{(C+1)^{j-t} - 1}{C}$. The change from $S$ to $S'$ also increased the inactive amount in step $t$ by $\epsilon$ and reduced $\epsilon \cdot (C+1)^{j-t}$ inactive amount at step $j$. So concisely, denote $\delta \equiv \epsilon \cdot \frac{(C+1)^{j-t} - 1}{C}$, we have $A'_{j+1} = A^{s}_{j+1} - \delta$ and $I'_{j+1} = I^{s}_{j+1} - C \delta$. Observe that we subtract from $I^{s}_{j+1}$ a term that is $C$ times larger than what we subtract from $A^{s}_{j+1}$. Together with $b \le C$, and $\frac{I^{s}_{j+1}}{A^{s}_{j+1}} \le (b-1)$ (if $N \ge j+1$, the game of $S$ does not stop at step $j+1$), we get:
    $
    \frac{I'_{j+1}}{A'_{j+1}} =
    \frac{I^s_{j+1} - C\delta}{A^s_{j+1} - \delta} \le
    \frac{(b-1)A^s_{j+1} - (b-1)\delta}{A^s_{j+1} - \delta} =
    (b-1)
    $.
    For $i > j+1$ the difference between quantities of $S$ and $S'$ remains the same because as we explained, our choice of $\epsilon$ ensures that $L'_j = a^s_j$ so that we can choose $a'_i = a^{s}_i$ for $i \ge j$. That is, $A'_{i} = A^{s}_i - \delta$ and $I'_{i} = I^{s}_{i} - C\delta$ also for $i>j+1$ so by the same argument, $\frac{I'_i}{A'_i} \le b-1$ as long as $\frac{I^s_i}{A^s_i} \le b-1$. Thus the game of $S'$ does not stop during this period.
\end{enumerate}

\textbf{Greedy plays the game at most $H$ steps.} Here we omit the ``$g$'' superscript for brevity. Earlier we computed that greedy chooses $a_0 = \frac{C+2-b}{b}$ and that $a_{i+1} = \frac{L_{i+1}}{b} = \frac{C+1}{b} \cdot a_i = \beta \cdot a_i$. The last playable step $k$ satisfies $A_k \le n$. Then in total:
\begin{equation}
\label{eq_max_step}
 n \ge A_{k} = A_0 + a_0 \cdot \sum_{i=1}^{k-1}{\beta^{i-1}} = A_0 + a_0 \cdot \frac{\beta^k - 1}{\beta - 1}
\Rightarrow
k \le \log_{\beta}{\Big ( \frac{(\beta-1)(n-A_0)}{a_0} + 1 \Big )}    
\end{equation}
Since $A_0 = 1$ and $\frac{\beta - 1}{a_0} = \frac{C+1-b}{C+2-b} \in [0.5,1)$, and $k$ is integer, we get that $k \le \floor{\log_{\beta}{n}} = H$.
Finally, to remove the assumption 
that each node in $T$ has \emph{exactly} 
$C+1$ children, we revise the definition of the game.\footnote{A lazier solution would be to \emph{define} the shift-tree such that every node has at most $C+1$ children and pick the $C+1$ arbitrarily if there are more than $C+1$ options.} We construct the split tree $T$ (according to $G$) and  define the \emph{stretch} $\rho_k$ between states $k$ and $k+1$ to be the average number of children of a node at level $k$ of $T$. This means that there are $s$ active leaves at level $k$ of $T$ with a total of $\rho_k \cdot s$ children in level $k+1$. The game changes to use the $\rho_k$'s as follows: The player chooses $0\le a_k \le L_k$ and then state $k+1$ is $(A_{k+1},I_{k+1},L_{k+1}) = (A_k + a_k , I_k + (L_k-a_k) , \rho_{k+1} \cdot a_k)$.\footnote{Note that $a_k$ and $\rho_{k+1}$ are off-by-one with their respective index because the game begins already with the children of the root as unassigned leaves, so while this is assignment $a_0$, their expansion is $\rho_1$.} Since every active node has at least $C+1$ children, $\forall k: \rho_k \ge C+1$, and we show that we encounter $b$ inactive copies of some vertex at level $\le H+1$. We also define the initial state to be $(1,0,\rho_0)$ instead of $(1,0,C+1)$.

The analysis does not change much, and we still think of $G$ as a player that plays according to the definition of the shift-tree $T$. The analysis of Greedy becomes even more cumbersome, mainly its optimality, because we do not get a fixed growth rate of $C+1$. The proof of optimality needs to use $\epsilon = \min(a^{s}_t - a^{g}_t , (L^{s}_j-a^{s}_j) \cdot \Pi_{i=t}^{j-1}{\frac{1}{\rho_i}})$ to account for the different growth rate. Also, when we analyze the third period $i \in [j+1,N]$, we do not have a neat geometric series but rather: $A'_{j+1} = A^{s}_{j+1} - \epsilon \cdot (1 + \rho_{t+1} + \rho_{t+1} \cdot \rho_{t+2} + \ldots + \Pi_{i=t+1}^{j-1}{\rho_i})$ and $I'_{j+1} = I^{s}_{j+1} - \epsilon \cdot (\Pi_{i=t+1}^{j}{\rho_i} - 1)$.
Recall that $\rho_i \ge C+1$, we can show that $A^{s}_{j+1} - A'_{j+1} \le \frac{I^{s}_{j+1} - I'_{j+1}}{C}$ so that the rest of the optimality proof is the same. Indeed, denote $X \equiv \Pi_{i=t+1}^{j}{\rho_i}$, then
$A^{s}_{j+1} - A'_{j+1} = \epsilon \cdot X \cdot \sum_{k=1}^{j-t}{\Big (1 / \Pi_{i=t+k}^{j}{\rho_i} \Big )}
\le
\epsilon \cdot X \cdot \sum_{k=1}^{j-t}{\frac{1}{(C+1)^k}}
=
\epsilon \cdot X \cdot \frac{1}{C+1} \cdot \frac{1 - \frac{1}{(C+1)^{j-t}}}{1 - \frac{1}{C+1}}
=
\frac{1}{C} \cdot \epsilon (X - \frac{X}{(C+1)^{j-t}})
\le 
\frac{1}{C} \cdot \epsilon (X - 1) = \frac{I^s_{j+1} - I'_{j+1}}{C}$.

Finally, Equation~(\ref{eq_max_step}) still applies with the modification of: $A_{k} \ge A_0 + a_0 \cdot \sum_{i=1}^{k-1}{\beta^{i-1}}$ (equality to inequality).
\end{proof}

\subsection{Shift-tree Results}

As a first warm-up application of the shift-tree, we analyze the recourse on graphs with large girth, which in particular applies to dynamic trees. Theorem~\ref{theorem_large_girth_small_recourse} corresponds to Algorithm~\ref{algorithm_generic_shift_tree} (without implementing a leaves handler).

\begin{algorithm}[t]
    
    \DontPrintSemicolon
    \KwIn{
        Integer $C \ge 0$, a graph $G$ with maximum degree $\Delta$ and an edge $(u,v) \in G$ such that $G \setminus \{ e \}$ is properly $(\Delta+C)$ edge colored. Also, a parameter $2 \le b \le C$.
    }

    \KwOut{
        A proper $(\Delta+C)$ edge coloring of $G$.
    }
    
    \SetKwFunction{funcExtendColoring}{ExtendColoring}
    \SetKwFunction{funcLeavesHandler}{LeavesHandler}
    \SetKwProg{Fn}{Function}{:}{}
    
    \Fn{\funcExtendColoring{$G$, $(u,v)$, $C$, $b$}}{
        Compute a shift-tree $T$ with respect to $G$ and $(u,v)$, with stopping conditions:

        \If{a node $x \in T$ shares a free color with its parent in $T$}{
            Let path $P$ be defined as the path in $T$ from $(u,v)$ to $(parent_T(x),x)$.
        }
        
        \If{a vertex $x \in G$ corresponds to at least $b$ leaves $x_i \in T$ for $i \in [b]$.}{
            Let path $P$ = \funcLeavesHandler{} if implemented, \textbf{otherwise} ignore this case.
        }
        Shift colors over the path $P$, then color the last edge of $P$ with an available color.
    }

    \caption{The generic shift-tree algorithm for extending coloring of a dynamic graph to a newly inserted edge $e$, assuming that a leaves handler is implemented, see Algorithm~\ref{algorithm_handle_leaves_generic} and Algorithm~\ref{algorithm_handle_leaves_DeltaMinus2} for implementations of two such handlers.
    }
    \label{algorithm_generic_shift_tree}
\end{algorithm}

\begin{restatable}[Large Girth]{theorem}{theoremUpperBoundWorstCaseLargeGirth}
\label{theorem_large_girth_small_recourse}
Let $C \ge 1$ and denote $\beta = \log_{(C+1)} \frac{n}{\Delta + C^2} + 2$. If $girth(G) \equiv g \ge 2 \beta + 1$, Algorithm~\ref{algorithm_generic_shift_tree} maintains $\Delta+C$ edge coloring of $G$ upon insertion of a new edge $e$ with recourse $R < \beta$. In particular, for $C=O(1)$, a girth of $g = \Omega(\log \frac{n}{\Delta})$ guarantees recourse of $R = O(\log \frac{n}{\Delta})$, and for $C = \Omega(\sqrt{\Delta})$, a girth of $g = \Omega(\frac{\log n}{\log \Delta})$ guarantees recourse of $R = O(\frac{\log n}{\log \Delta})$.
\end{restatable}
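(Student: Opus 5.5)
The plan is to show that while the shift-tree $T$ has depth below roughly $g/2$, it has no inactive nodes at all. Then every active node keeps branching by a factor of at least $C+1$, and the number of distinct vertices it reaches exceeds $n$ before depth $\beta$. This is impossible, so some active node must have at most $C$ children before that depth. By Lemma~\ref{lemma_basic_shift_children} this yields a useful shiftable path of length below $\beta$, and shifting along it costs recourse $R<\beta$. No leaves handler is needed, consistent with running Algorithm~\ref{algorithm_generic_shift_tree} with only its first stopping condition.

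The first step is to show that nodes at small depth are distinct vertices. Suppose two nodes of $T$ at depth at most $d$ correspond to the same vertex $x$. Their root paths in $T$ project to walks in $G$ from $v$ to $x$. Each walk has no immediate backtracking, since by Observation~\ref{observation_parent_child_in_shift_tree} a vertex is never its own grandchild. Taken together, the two walks give a closed non-backtracking walk of length at most $2d$ in $G$. A non-backtracking closed walk contains a cycle, so the girth forces $2d\ge g$.

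There are also degenerate cases to rule out. One walk could revisit a vertex, or two siblings could coincide. The second cannot happen in a simple graph, and the first again produces a cycle of length at most $d$. The root $v$ needs separate care, because the first edge $(u,v)$ may be re-entered as a leaf, as in Observation~\ref{observation_unique_edges_path_in_shift_tree}. I would include $u$ in the walk and argue that any return to $u$ or $v$ also closes a cycle of length at most $d+1$.

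So while $2d+1\le g$, and in particular for $d\le\beta$, every node up to depth $d$ is active and a fresh vertex.

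The second step is counting. Assume no active node at depth below $D$ has at most $C$ children. Then the root has at least $C+1$ children, and more carefully the root's degree allows up to $\Delta$. Every later node has at least $C+1$ children, all of them new vertices. I would lower-bound the number of vertices at depth $\le D$ by roughly $(C+1)^{D-1}$ times a first-level factor. Matching the exact constant $\Delta+C^2$ requires using that the first two levels contribute about $\Delta+C^2$ or more. I expect this bookkeeping of the first two levels to be the fiddliest part: one has to check exactly which branching counts are guaranteed at depths $0$ and $1$ in order to reproduce the $\frac{n}{\Delta+C^2}$ term and the additive $2$.

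Setting the count above $n$ gives a contradiction once $D\ge\beta$. Hence some active node at depth $<\beta$ has at most $C$ children, which gives the useful path.

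The main obstacle is the cycle argument, specifically turning the two root-paths, including the root edge and possible backtracking at the bottom, into a genuine cycle of length at most $2d$. The special cases follow by substitution: for $C=O(1)$, $\beta=\Theta(\log\frac{n}{\Delta})$; for $C\ge\sqrt\Delta$, $\log(C+1)=\Theta(\log\Delta)$ gives $O(\frac{\log n}{\log\Delta})$.
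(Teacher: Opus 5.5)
Your first step is fine and is what the paper uses: the girth forces every node in the first $\approx\beta$ levels, together with $u$, to be a distinct vertex, hence active. But the real difficulty is in the counting step, not the cycle argument, and there your plan has a genuine gap.

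Counting tree nodes alone cannot give the stated bound. The hypothesis $g\ge 2\beta+1$ only yields distinctness down to depth about $\beta$. Even a tree in which every node has exactly $C+1$ children has only $O((C+1)^{\beta})=O((C+1)^2\frac{n}{\Delta+C^2})$ nodes by that depth. That is far below $n$ when $\Delta\gg C^2$ (e.g.\ $C=1$), so no contradiction arises.

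Your suggested source of the factor $\Delta+C^2$, the first two levels, does not work. The bound $\deg(v)\le\Delta$ is only an upper bound, and the root can have exactly $C+1$ children: if $\deg(u)=\Delta$ then $|A(u)|=C+1$. Any other neighbours of $v$ are not nodes of $T$ and are never expanded. They contribute an additive $O(\Delta)$, not a factor that gets multiplied by $(C+1)^{D-2}$. Carried out as proposed, your argument only recovers the paper's preliminary bound $\beta'=\log_{C+1}n$, and only under the stronger girth $2\beta'+1$. It loses, e.g., the $O(\log\frac{n}{\Delta})$ claim for $C=O(1)$.

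The missing idea is a color-counting argument applied at the \emph{bottom} of the tree.
\begin{itemize}
\item Let $(p,x)$ be a tree edge where $x$ has children. After shifting along the root path, $(p,x)$ has no free color, so all $\Delta+C$ colors appear around it.
\item In the regular case, the tree edges around $(p,x)$ carry only $2(C+1)$ colors: the $C+1$ child edges of $x$, plus the $C$ sibling edges and the parent edge at $p$.
\item A remaining color cannot appear only at $x$, because every color of $A'(p)$ present at $x$ creates a child of $x$. Hence $p$ has at least $\Delta-C-2$ neighbours outside $T$.
\item By the girth, these are fresh vertices, distinct from tree nodes and from the off-tree neighbours of any other $p'$.
\end{itemize}
Apply this at the $\ge(C+1)^{\hat d-2}$ nodes of depth $\hat d-2$. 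Each contributes $(C+1)^2$ grandchildren plus $\Delta-C-2$ off-tree neighbours. This gives $n>(C+1)^{\hat d-2}(\Delta+C^2)$, i.e.\ $\hat d<\beta$. That is where both $\frac{n}{\Delta+C^2}$ and the additive $2$ come from.

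You must also handle irregular trees, since extra children mean fewer off-tree neighbours. The paper reduces to the regular case by a trimming argument. You can also do it directly. Note that $x$ has exactly $|A'(p)|=\Delta+C-\deg(p)+1$ children. So a node $p$ with $c_p\ge C+1$ children, all of which have children, accounts for at least $(\deg(p)-1)+c_p(\Delta+C-\deg(p)+1)\ge\Delta-1+(C+1)^2$ distinct vertices, namely its non-parent neighbours plus its grandchildren.
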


\begin{proof}
We start by proving a simpler bound of $R \le \beta'$ when the girth is $g \ge 2\beta' + 1$, for $\beta' \equiv \log_{(C+1)} n$. Let $T$ be a shift-tree with respect to $e$ and $G$, expanded where the only stopping condition is that we found a useful shiftable path (it will follow from the proof that we do not get stuck at a level such that no new vertices are revealed). Let $d$ denote the depth of $T$. We show that $d < \beta'$ and therefore $R < \beta'$.

To prove our claim, let $e=(u,v)$ and assume that $root(T) = v$. Note that every node in $T$ down to depth $d-1$ has (at least) $C+1$ children by Lemma~\ref{lemma_basic_shift_children}. There are exactly $n$ vertices in $G$, and because the girth is at least $2\beta'+1$, every node within the first $\hat{d} \equiv \min \{d,\beta' \}$ levels of $T$, plus $u$, corresponds to a distinct vertex of $G$, so we get: $n \ge 1 + \sum_{i=0}^{\hat{d}}{(C+1)^i} > (C+1)^{\hat{d}}$. Therefore $\hat{d} < \log_{(C+1)}{n} = \beta'$. Since $\hat{d} < \beta'$ it must be that $\hat{d} = d$, and we deduce that $d < \beta'$.

To prove the theorem with respect to the actual value $\beta = \log_{(C+1)} \frac{n}{\Delta+C^2} + 2$,\footnote{$\beta$ is marginally tighter than $\beta'$ when $\log (C+1) = o(\log \Delta)$ and $\log \frac{n}{\Delta} = o(\log n)$, e.g., $C=1$ and $\Delta = \frac{n}{\log n}$.} we take into account additional vertices that appear in $G$ but not in $T$. To start, assume that $T$ is regular, that is, every node down to depth $d-1$ has \emph{exactly} $C+1$ children. (We address the irregular case later.) Consider an edge $(p,x)$ in $T$ such that $p$ has grandchildren in $T$. Then it has \emph{exactly} $2(C+1)$ neighbour edges in $T$: $C+1$ edges to children of $x$, $C$ edges from $p$ to siblings of $x$ and $1$ edge from $p$ to its parent. 
Since $x$ has children we know that
$(p,x)$ has no available color (otherwise the depth of $T$ should have been smaller than $d$.)
There are $\Delta+C$ colors in total, but only $2(C+1)$ are used by the edges adjacent to $(p,x)$ in $T$, so there must be at least $(\Delta+C) - 2(C+1) = \Delta - C - 2$ additional edges adjacent to $(p,x)$ that belong to $G$ but do not appear in $T$. Furthermore, these edges must be incident to $p$ and not to $x$ because every color available in $p$ by definition reveals an edge of that same color in $x$ (to a child of $x$).\footnote{Note that these neighbours of $p$ were revealed by its child $x$, but different children of $p$ do not reveal additional neighbours, it would be double counting.} We can now revisit the counting analysis for $\hat{d} \equiv \{d, \beta \}$. Again, $\hat{d} \le \beta$ implies that every node within these levels corresponds to a unique vertex. In particular, the neighbours of $p$ in $G \setminus T$ are not neighbours of any other $p'$ in $T$ (this too will close a short cycle). Then we get:
$n \ge 1 + \sum_{i=0}^{\hat{d}}{(C+1)^i} + \sum_{i=0}^{\hat{d}-2}{(\Delta-C-2) \cdot (C+1)^i}
> (C+1)^{\hat{d}-2} \cdot (C^2 + \Delta)
$. Therefore, $\hat{d} < \log_{(C+1)}{\frac{n}{\Delta + C^2}} + 2 = \beta$. We conclude that $\hat{d} = d$ and therefore $d < \beta$.

To conclude the proof it remains to handle irregular trees. For this we describe a sequence of graphs $G_i$ with girth $g_i$ and size $n_i$, each with its corresponding shift-tree $T_i$, that satisfy the following properties: (1) All the trees have $depth(T_i) = d$; (2) Monotonicity: $n_{i+1} \le n_i$, $g_{i+1} \ge g_i$; (3) Trimming condition (towards regularity): pick a node $x \in T_i$ with more than $C+1$ children, if such exists. Then $T_{i+1}$ is the same as $T_i$ except for one subtree that was removed from $x$ so that now $x$ has one fewer child. We begin the sequence with $G_0 \equiv G$. The sequence is defined until at some point we get $G^*$ with a corresponding regular shift-tree $T^*$ such that every inner node in $T^*$ has exactly $C+1$ children. This chain must be finite because by the trimming condition the number of subtrees that remain to be trimmed is strictly reduced. Since $d < \beta^*$ (because $T^*$ is regular), and by monotonicity of $n_i$ and $g_i$, we get that $d < \beta^* = \log_{(C+1)}{\frac{n^*}{\Delta+C^2}} + 2 \le \log_{(C+1)}{\frac{n}{\Delta+C^2}} + 2 = \beta$.  In loose terms, this shows that the depth-maximizing shift trees are regular.\footnote{The condition $g_{i+1} \ge g_i$ is stronger than what we need. It suffices to have $\forall i: g_i \ge 2\beta_i + 1$, but we get it by induction from the girth relation. Indeed $g_i$ grows as a function of $i$ while $\beta_i$ decreases, because $n_i$ decreases.}

We modify $G_i$ to $G_{i+1}$ and consequently its shift tree $T_i$ to $T_{i+1}$ as follows. Let $z \in T_i$ be a leaf such that the path on $T_i$ to it is useful. We ensure not to trim it thus $depth(T_{i+1}) \le depth(T_i)$. Let $x \in T_i$ be a node with more than $C+1$ children. Let $p$ be the parent of $x$, and let $y$ be a child of $x$ such that $c \equiv color(x,y) \in A(p)$ in the coloring of $G_i$ (before shifting any colors). There are at least $C+1 \ge 2$ candidates for $y$, and we pick one that is \emph{not} an ancestor of $z$. Define $G'$ as $G_i$ except that we remove all the edges incident to every vertex that appears in the subtree of $y$ in $T_i$ (including $y$) to make these vertices isolated. To get $G_{i+1}$, add to $G'$ the edge $(p,y)$ with color $c$, and if $y$ is not a leaf in $T_i$, also add the edge $(x,y')$ with color $c$ where $y'$ is some child of $y$ in $T_i$. The coloring of $G'$ is valid because we deleted edges from $G_i$, and the coloring of $G_{i+1}$ is valid because $c \in A(p)$ in $G_i$ and hence in $G'$, and $(x,y')$ replaces $(x,y)$ in using the color $c$ near $x$. We consider any isolated vertex of $G_{i+1}$ as if being removed, therefore $n_{i+1} \le n_i$. Also, $g_{i+1} \ge g_i$ because we removed edges to get $G'$, and then only added at most two edges to isolated vertices in $G'$. Finally, $\deg_{G_{i+1}}(p) = \deg_{G_i}(p) + 1 \le \Delta$ as necessary because $x$ has at least $C+2$ children in $T_i$, which implies at least $C+2$ available colors in $p$ in $G_i$, so $\deg_{G_i}(p) \le \Delta-1$. By the choice of $y$ with respect to $z$ we have $depth(T_{i+1}) \le depth(T_i)$, and it remains to show that (a) in fact $depth(T_{i+1}) = depth(T_i)$; (b) the subtree of $y$ does not appear in $T_{i+1}$; and (c) $T_{i+1}$ did not spawn new subtrees compared to $T_i$.

Before proving that $T_{i+1}$ is a trimmed version of $T_i$ without the subtree of $y$, we clarify some intuition. We cannot simply remove the edge $(x,y)$ from $G_i$ to remove it from $T_i$, because then $c \in A(p) \cap A(x)$ and so without it the expansion of $T_{i+1}$  stops prematurely. Therefore, to prevent such premature termination, we add a neighbour of $p$ with this color. We add the new neighbour of $p$ isolated to ensure that the girth is not affected. If there are more than one isolated vertices, then we could keep the edge $(x,y)$ and add the edge $(p,y')$, but in the special case where $y$ is a leaf in $T_i$, we have to prioritize adding $(p,y)$ to $G_{i+1}$ over keeping $(x,y)$. We still need the edge $(x,y')$, if we can have it, to prevent the creation of new subtrees of $x$ in $T_{i+1}$ compared to $T_i$ as a result of having $c \in A(x)$ in $G_{i+1}$. Let us now resume the formal proof.

$T_{i+1}$ is clearly identical to $T_i$ until we reach $depth(x)-1$ because only then the the difference between $G_i$ and $G_{i+1}$ affects anything. There are three possible cases:
\begin{enumerate}
    \item $x$ is the root of $T_i$ and $p$ is its virtual parent (in other words, $(p,x)$ is the newly inserted edge). In this case, $p$ is not part of $T_{i+1}$ and therefore $y$ cannot be its child in $T_{i+1}$. So $x$ lost a child ($y$), and the rest of the expansion of $T_{i+1}$ is identical to $T_i$. Note that $c \notin A(p)$ therefore $y'$ is not a child of $x$ in $T_{i+1}$. Previously $T_i$ did not reach any vertex that was in the subtree of $y$ except via $y$, and in $G_{i+1}$ they won't be reached because they are isolated. $y$ and $y'$ can be reached via $p$ and $x$ respectively, but due to the large girth we do not revisit $p$ and $x$ within depth $d$.\footnote{Unlike $x$ which was already expanded and won't be expanded again by definition, if we did reach $p$ somehow, we would have expanded it because it would be the first expansion of $p$, possibly visiting $y$ next.} Therefore $T_{i+1}$ is the trimmed $T_i$ as argued.

    \item $y$ is a leaf in $T_i$. Then $y'$ does not exist, so in $G_{i+1}$ the color $c$ is available at $x$. However, it is not available for $(p,x)$ due to the edge $(p,y)$. The new color available in $x$ has no effect because $depth(T_{i+1})=d$ so $x$ has no grandchildren anyway. Therefore $T_{i+1}$ is the trimmed $T_i$ as argued.
        
    \item The remaining case is for $x \ne root(T_i)$ and $y$ being a non-leaf in $T_i$. Then $y'$ exists, and $p$ has a parent, denote it $p'$. Observe that $y$ is not a child of $p$ in $T_{i+1}$, otherwise (by definition) $c \in A(p')$ in both $G_i$ and $G_{i+1}$, but then $c \in A(p) \cap A(p')$ in $G_i$ in contradiction to expanding $T_i$ down to $y$ (and beyond). Similarly, $y'$ is not a child of $x$ in $T_{i+1}$ because $c \notin A(p)$ due to the edge $(p,y) \in G_{i+1}$. Finally, like in the first case, any vertex that we made isolated cannot appear in $T_{i+1}$, and $y$ and $y'$ do not appear in $T_{i+1}$ elsewhere because they are only reachable from $p$ and $x$ respectively, and we won't revisit those in $T_{i+1}$ due to the large girth of $G_{i+1}$. Therefore $T_{i+1}$ is the trimmed $T_i$ as argued. \qedhere
\end{enumerate}
\end{proof}


\begin{lemma}[Running Time]
\label{lemma_running_time_without_handler}
Algorithm~\ref{algorithm_generic_shift_tree} takes $O(m)$ time excluding the leaves handler call.
\end{lemma}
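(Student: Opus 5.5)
We plan to show that Algorithm~\ref{algorithm_generic_shift_tree}, excluding the leaves handler, runs in $O(m)$ time by charging the work to edges of $G$. The key fact is that a vertex is \emph{expanded} (gets children) only once, at its active copy. Inactive nodes are leaves and cost $O(1)$ each to create. So the total number of nodes of $T$ is at most $1+\sum_{x}\deg_G(x) = O(m)$: every node other than the root is created as a child through some edge $(x,y)$ examined during the single expansion of $x$, and each edge is examined at most twice, once from each endpoint.

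The main obstacle is expanding an active node $x$ with parent $p$ in $O(\deg_G(x)+\deg_G(p))$ time, or better $O(\deg_G(x))$. We need $A'(p)$, the colors free at $p$ after shifting along the root-to-$(p,x)$ path $P$. We will not recompute the shifted coloring from scratch, which could cost the path length per node and give $O(m\cdot depth)$. Instead we use Observation~\ref{observation_simple_path_until_leaf} and Observation~\ref{observation_unique_edges_path_in_shift_tree}: $p$ is an internal node appearing once on $P$, so shifting changes colors at $p$ only on the two path edges incident to $p$. These are $(p',p)$, which takes the old color of $(p,x)$, and $(p,x)$, which becomes uncolored; the root case uses the virtual parent $u$. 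Hence
\[
A'(p) = \big(A(p)\cup\{color(p,x)\}\big)\setminus\{color(p,x)\}\cup\{color(p',p)\} = A(p)\cup\{color(p',p)\},
\]
where $color(p',p)$ is the original color of $(p',p)$, and $A(p)$ is taken in the original coloring. We will double-check this bookkeeping carefully, including the case where the child edge $(x,y)$ revisits the root edge. For each child candidate we also need its color in the shifted coloring at $x$. The edges of $x$ other than $(p,x)$ are not on $P$ before $(p,x)$, except possibly when $y$ closes back to the root edge, which is an $O(1)$ special case. So their colors are the original ones.

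Implementation details:
\begin{enumerate}
\item We keep for each vertex a boolean array over the $\Delta+C$ colors marking its used colors. We build it once and update it after each recoloring in time proportional to the recourse, so membership of a color in $A(p)$ costs $O(1)$.
\item Expanding $x$ then scans the edges of $x$ and tests each edge color for membership in $A'(p)$ in $O(1)$.
\item While scanning, we check the first stopping condition, that $A(x)\cap A'(p)\neq\emptyset$. This holds iff $x$ has fewer than $|A'(p)|$ qualifying edges, which is just a count. Finding the actual free color costs an extra $O(\Delta+C)$ once, at termination, and this is $O(m)$ unless $m<\Delta$; in that case the count is bounded by the number of colors scanned, which we note as a negligible additive term.
\item We track active/inactive status with a visited array over vertices.
\item We count copies per vertex for the second stopping condition in $O(1)$ per node.
\end{enumerate}

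Finally, the shift along the chosen path and the final coloring cost $O(depth(T)) = O(n) \subseteq O(m+n)$, and resetting the scratch arrays costs only the number of touched entries. Summing the $O(\deg_G(x))$ expansion costs over expanded vertices gives $O(m)$ in total.
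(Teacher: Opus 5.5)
Your plan follows the paper's proof. Each vertex is expanded at most once, so $T$ has $O(m)$ nodes. Per-vertex color-indexed arrays give $O(1)$ lookups, $A'(p)=A(p)\cup\{color(p',p)\}$ is read off the original coloring, and a per-vertex counter handles the second stopping condition. The one implementation difference is the direction of the scan. The paper iterates over the colors of $A(p)\cup\{c'\}$, kept as a list, and probes $B_x[c]$, so every probe before termination creates a child and the terminating probe returns the free color directly. You iterate over the edges of $x$ and probe $p$'s array, paying $O(\deg_G(x))$ per expansion and detecting termination by a count. Both sum to $O(m)$.

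One caveat on the formula for $A'(p)$, shared with the paper. Suppose $u$ occurs as an internal node $p$ of $T$, and let $w_1$ be the root's child on the path. The shift then puts $color(v,w_1)$ on $(u,v)$, and this color lies in $A(u)$, so it must also be removed from $A'(u)$. This is an $O(1)$ fix and does not affect the time bound.

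As written, however, your argument gives only $O(m+n+\Delta+C)$, not $O(m)$. That matters here because $m$ is the current number of edges and can be far below $n$ or $\Delta$; the paper handles this case explicitly. Two steps need tightening.

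\begin{itemize}
\item \textbf{Cost of the final shift.} You bound it by $O(depth(T))=O(n)$ and then settle for $O(m+n)$. The shifted path is a root-to-node path in $T$, whose size you already showed is $O(m)$. Alternatively, by Observation~\ref{observation_unique_edges_path_in_shift_tree} its edges are distinct edges of $G$ except possibly the last. Either way this cost is $O(m)$.
\item \textbf{Finding the actual free color.} You leave this as an unresolved additive $O(\Delta+C)$ term. The missing observation, which is how the paper closes this step, is a count. After the shift, the uncolored last edge $(p,x)$ has at most $\deg_G(p)+\deg_G(x)-2\le m-1$ colored neighbouring edges. Hence among the first $\min\{m,\Delta+C\}$ color indices at least one is free at both endpoints. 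Once the shift has been performed and the arrays updated, scanning both endpoints' color-indexed arrays in index order finds it in $O(m)$ time. Alternatively, iterate over $A(p)$ as a list, as the paper does, so the terminating probe already yields the color.
\end{itemize}
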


\begin{proof}
A shift-tree $T$ expands each vertex of $G$ at most once, therefore it has at most $2m$ edges (an edge $(x,y)$ may appear twice, $x$ parent of $y$ and $y$ parent of $x$). Assume that each vertex $v \in G$ maintains its set of available colors $A(v)$ as a list, and also has an array $B_v$ of length $\Delta+C = \Theta(\Delta)$ where $B_v[c]$ either points to a neighbour $u$ such that $color(v,u) = c$ or to the free color $c$ in $A(v)$. In order to determine the children of a node $x$ in $T$, we scan all $c \in A(p) \cup \{ c'\}$ where $p=parent_T(x)$, and $c' = color(p,parent_T(p))$ is the color that gets freed in $p$ when we shift colors on the path in $T$ down to $p$. 
For each such $c$ we check $B_x[c]$ to find the children of $x$ in $O(1)$ time per edge of $T$. If $B_x[c]$ indicates that $c$ is free then we terminate with the first \emph{if} of Algorithm~\ref{algorithm_generic_shift_tree}. Otherwise, make the neighbour $B_x[c]$ of $x$ a child of $x$.
To handle the second \emph{if} we maintain a counter $count(x)$ for each vertex $x$, which we increment each time a node that corresponds to $x$ is added as a leaf of $T$. At most one such leaf can be expanded to become non-leaf, so $count(x)$ is decremented at most once, and in total the maintenance of the counters takes $O(m)$ time.

The final line of Algorithm~\ref{algorithm_generic_shift_tree} shifts colors on a path $P$, $|P| = O(m)$, that we either find directly (first \emph{if}) or get from the leaves handler (second \emph{if}). When we shift a color from an edge $(u,v)$ to its neighbour $(v,w)$, it takes $O(1)$ time to update the lists ($A_i$) and arrays ($B_i$) of $u$, $v$ and $w$ and no other vertex is affected. To color the uncolored edge $(u,v)$ post-shifting, we recall that by construction there is now a free color for it. We can find this free color in $O(\Delta)$ time by scanning $B_v$ and $B_u$ by index (color). If $m = o(\Delta)$,\footnote{Recall: $\Delta$ is a fixed bound on the maximum degree, while $m$ is the current number of edges in the graph.} then actually within the first $m+1$ entries of $B_v$ and $B_u$ there must be an available color. So overall, determining the free color of the final edge takes $O(m)$ time, and the total running time of the algorithm is indeed $O(m)$ time.
\end{proof}

Theorem~\ref{theorem_large_palette_general_parameterized} is our main tool for large palettes, first stated as a parameterized claim with parameter $b$ that represents how many leaf copies we aim to see of some vertex.

\begin{algorithm}[t]
    
    \SetKwFunction{funcLeavesHandler}{LeavesHandler}
    \SetKwProg{Fn}{Function}{:}{}
    
    \Fn{\funcLeavesHandler{caller's context: computed shift-tree $T$ with $b$ leaves $x_1,\ldots,x_b$ that correspond to the same vertex $x \in G$}}{
        Reduce $T$ to the connected subtree that contains $root(T)$ and only the leaves $x_i$ ($i=1,\ldots,b$). Expand each $x_i$ by one additional shift-tree step. If any $x_i$ has less than $C+1$ children: return the path from $e$ to $(parent_T(x_i),x_i)$, otherwise denote the new tree by $T^{x+}$.
        
        Set $y$ as the neighbour of $x$ in $G$ that maximizes in $T^{x+}$ the number of leaves that corresponds to $y$, minus the number of grandchildren that the active node of $y$ has.

        Check each leaf $y_i$ of $T^{x+}$ that corresponds to $y$: If after shifting colors over the path in $T^{x+}$ from $e$ to $e' \equiv (parent_{T^{x+}}(y_i),y_i)$, $e'$ has a free color: return this path.
    }

    \caption{A leaves-handler used with Algorithm~\ref{algorithm_generic_shift_tree}. Corresponds to  Theorem~\ref{theorem_large_palette_general_parameterized}.}
    \label{algorithm_handle_leaves_generic}
\end{algorithm}

\begin{restatable}[Parameterized Feasibility]{theorem}{theoremUpperBoundWorstCaseParameterized}
\label{theorem_large_palette_general_parameterized}
Let $G$ be a dynamic graph with maximal degree $\Delta$, that is edge colored with $(\Delta+C)$ colors for $C \in [1,\Delta-2]$. Let $2 \le b \le C$ be a parameter such that the following inequality is satisfied: $\frac{b(C-1)+2}{\Delta} > \Delta-C + 1$. Then upon insertion of a new edge $e$ to $G$, Algorithm~\ref{algorithm_generic_shift_tree} with Algorithm~\ref{algorithm_handle_leaves_generic} as the leaf handler extends the coloring to $e$ while recoloring $O(\log_\beta n)$ edges where $\beta \equiv \frac{C+1}{b}$, in $O(m + \Delta^2)$ time.
\end{restatable}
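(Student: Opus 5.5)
The proof splits into three parts: the recourse bound, the running time, and correctness of Algorithm~\ref{algorithm_handle_leaves_generic} (it must always return a useful path). The recourse is immediate from Lemma~\ref{lemma_shift_tree_max_depth}. Algorithm~\ref{algorithm_generic_shift_tree} stops either when it finds a useful path, or when some vertex $x$ has $b\le C$ inactive leaf copies. So $depth(T)\le H+1$ with $H=\floor{\log_\beta n}$. The handler extends the tree by one more level, so every returned path has length $O(\log_\beta n)$, and the recourse is at most that length. For the running time, Lemma~\ref{lemma_running_time_without_handler} covers everything except the handler. Inside the handler, reducing $T$ to the skeleton $T^x$ costs $O(m)$. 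Expanding the $b$ leaves costs $O(b\Delta)$. Scoring the neighbours $y$ costs $O(b(C+1)+\Delta)$. Testing each of the at most $b(C+1)$ candidate leaves $y_i$ for a free colour on $e'$ costs $O(\Delta)$ each, if we update the arrays $B$ incrementally along the shared skeleton paths. Since $b(C+1)\le \Delta^2$, the total is $O(m+\Delta^2)$.

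The core is the counting argument for correctness. Let $x_1,\dots,x_b$ be the leaf copies of $x$, and let $p_i$ be the parent of $x_i$. By Observation~\ref{observation_unique_edges_path_in_shift_tree} together with the unique-internal-appearance property, the $p_i$ are distinct neighbours of $x$. If some $x_i$ has at most $C$ children when expanded, Lemma~\ref{lemma_basic_shift_children} gives a useful path and we are done. Otherwise the expansion has at least $b(C+1)$ leaves, all of them neighbours of $x$. The child of $x_i$ reached along an edge of the shifted colour is a repeat and must be discounted; I expect this is where the ``$-1$'' in $b(C-1)+2$ comes from, leaving about $b(C-1)+2$ useful arrivals distributed over at most $\Delta$ neighbours. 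By pigeonhole, some neighbour $y$ receives more than $\Delta-C+1$ of them. The handler's score also subtracts the grandchildren of the active copy of $y$; I would use this to exclude arrivals whose path already passed through $y$. On such paths the colours around $y$ are altered, so the good/bad classification of the edge $(x,y)$ is not the one computed in the current colouring.

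For each remaining arrival, the path ends by shifting into the edge $(x,y)$ from the edge $(p_i,x)$. I need to show that in the final colouring $(x,y)$ has a free colour whenever $(p_i,x)$ is a good neighbour of $(x,y)$ in the sense of Definition~\ref{definition_dangerous}. The reason is that the shifted path touches $x$ only at the end, touches $y$ not at all (after the exclusion above), and touches each other vertex at most once. So the neighbourhood of $(x,y)$ differs from the static picture only in the edge $(p_i,x)$, whose colour has become the old colour of $(x,y)$. Distinct arrivals come from distinct $p_i$, hence from distinct neighbour edges of $x$. Since $(x,y)$ has at most $\Delta-1-C$ bad neighbours at $x$, having at least $\Delta-C$ distinct valid arrivals forces one of them to come through a good neighbour, and that path is useful.

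The main obstacle is the bookkeeping. The last edge of a path may repeat the root edge (Remark~\ref{remark_path_with_possible_repeat_edge}), and paths may pass through $y$ or through the $p_i$. I need to show that these degenerate arrivals cost at most the slack between $b(C-1)+2$ and $\Delta(\Delta-C+1)$, so that the strict inequality in the hypothesis leaves at least $\Delta-C$ distinct clean arrivals at $y$. My first attempt would be to charge each path through $y$ to a grandchild of the active copy of $y$, which is exactly the correction term in the handler's maximisation. I would then verify that this correction, summed over the neighbours of $x$, is at most about $2b$.
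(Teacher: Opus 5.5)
Your architecture matches the paper's: stop at $b$ leaf copies of $x$, expand the skeleton $T^x$ one step, pick a neighbour $y$, argue via good/bad neighbours of $(x,y)$, and correct for paths through $y$ using grandchildren of $y$. The gap is the counting, which is the whole content of the theorem, and it does not close as you set it up.

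\textbf{The count.} There is no ``repeat child'' of $x_i$ to discount. The children of $x_i$ are exactly the neighbours $y$ of $x$ whose colour $\mathrm{color}(x,y)$ is free at $p_i$ after shifting to $(p_i,x)$, and each is a legitimate arrival. In the paper, $b(C-1)+2=b(C+1)-2(b-1)$, and the $2(b-1)$ \emph{is} the grandchildren correction. Your bookkeeping would subtract that correction a second time from a quantity whose excess over $\Delta(\Delta-C+1)$ is only guaranteed to be positive. So the slack you plan to spend is essentially nil.

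Your estimate of the correction is also off. Apply Lemma~\ref{lemma_descendants_sum} with $d=2$ to $T^x$ together with the virtual parent $u$ (which has $L=b$ leaves) and $S=N(x)$. This gives $\sum_y N_g(y)\le 2(b-1)+\deg(x)$, not about $2b$. A single long branch of $T^x$ through many neighbours of $x$ already contributes one grandchild per neighbour.

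The fix is not to pigeonhole raw arrivals and subtract afterwards. Instead, lower-bound the maximum of $N_\ell(y)-N_g(y)$, which is what the handler maximises, by its average: $\frac{b(C+1)-2(b-1)-\deg(x)}{\deg(x)}=\frac{b(C-1)+2}{\deg(x)}-1$. Then lose one more arrival for a possible $(p_i,x)=e$. Read the hypothesis as $\frac{b(C-1)+2}{\Delta}>(\Delta-1-C)+2$: it pays for exactly these two losses, and integrality then leaves more clean arrivals than bad neighbours.

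\textbf{The charging.} Charging \emph{every} path through $y$ to a grandchild of $y$ is not injective. Many leaf copies $x_i$ can sit below one grandchild, so excluding all such paths can cost far more than $N_g(y)$. What you need is this: a path entering $y$ and continuing via child $z$ and grandchild $w$ adds exactly one new colour at $y$, namely $\mathrm{color}(z,w)$. So it can turn $(p_i,x)$ bad only if $c_i=\mathrm{color}(z,w)$, where $c_i$ is the original colour of $(p_i,x)$. Since the $c_i$ are distinct, each grandchild spoils at most one path, and unspoiled paths through $y$ are kept.

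Relatedly, the active copy of $x$ can be an ancestor of the leaf copies. So ``the path touches $x$ only at the end'' is false and needs its own check.

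\textbf{Running time (minor).} Only the at most $b$ leaves labelled $y$ need testing. Testing $b(C+1)$ leaves at $O(\Delta)$ each, as you propose, would give $O(\Delta^3)$, not $O(\Delta^2)$.
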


To prove Theorem~\ref{theorem_large_palette_general_parameterized}, we need the following simple combinatorial claim about trees.

\begin{lemma}
\label{lemma_descendants_sum}
Let $T$ be a rooted tree with $L \ge 1$ leaves. Denote the number of descendants of $v \in T$ at distance $d$  by $N_d(v)$. Then for every subset $S \subset T$: $\sum_{v \in S}{N_d(v)} \le d \cdot (L-1) + |S|$.
\end{lemma}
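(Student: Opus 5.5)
The plan is a charging argument. Every extra descendant at distance $d$ will be paid for by a \emph{branching surplus} at some node between $v$ and depth $d$, and the surpluses over the whole tree sum to $L-1$.

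For a node $u\in T$, let $c(u)$ be its number of children and set $w(u)\equiv\max\{c(u)-1,0\}$. The first step is the identity $\sum_{u\in T} w(u)=L-1$. Only internal nodes contribute, and for them $w(u)=c(u)-1$. The sum of $c(u)$ over internal nodes counts the edges, which is $|T|-1$. There are $|T|-L$ internal nodes, so the total is $(|T|-1)-(|T|-L)=L-1$.

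The second step is a per-node bound. Let $D_k(v)$ be the set of descendants of $v$ at distance exactly $k$, so $N_k(v)=|D_k(v)|$ and $D_0(v)=\{v\}$. For every $k\ge 0$,
\[
N_{k+1}(v)=\sum_{u\in D_k(v)} c(u)\le \sum_{u\in D_k(v)}\bigl(1+w(u)\bigr)=N_k(v)+\sum_{u\in D_k(v)} w(u).
\]
Iterating from $N_0(v)=1$ gives $N_d(v)\le 1+\sum_{k=0}^{d-1}\sum_{u\in D_k(v)} w(u)$. The inner double sum runs over descendants of $v$ at distance $0,\dots,d-1$, each counted once.

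The final step sums this over $v\in S$ and swaps the order of summation:
\[
\sum_{v\in S}N_d(v)\le |S|+\sum_{u\in T} w(u)\cdot\bigl|\{v\in S:\ u\in D_k(v)\text{ for some }0\le k\le d-1\}\bigr|.
\]
Each $u$ has at most one ancestor (itself included) at each distance $k\in\{0,\dots,d-1\}$, so the multiplicity factor is at most $d$. This yields $\sum_{v\in S}N_d(v)\le |S|+d\sum_u w(u)=|S|+d(L-1)$.

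I expect no real obstacle. The only care needed is to use $w(u)=\max\{c(u)-1,0\}$ rather than $c(u)-1$: leaves would otherwise contribute $-1$, and then the recursion bound would still hold but the identity $\sum_u w(u)=L-1$ would fail. The bound also holds trivially when $v$ has no descendants at distance $d$, since then $N_d(v)=0$.
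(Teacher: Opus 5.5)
Your proof is correct. It also takes a genuinely different route from the paper's once $d>1$.

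For $d=1$ the two arguments coincide. Both rest on the identity that $\sum_u (c(u)-1)$ over internal nodes equals $L-1$. Your $w(u)=\max\{c(u)-1,0\}$ is exactly how the paper discards leaves of $S$, whose $-1$ terms only help.

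For $d>1$ the paper \emph{compresses} levels. It partitions the levels of $T$ by residue modulo $d$ and builds $d$ auxiliary trees $T_0,\dots,T_{d-1}$, in which the children of a node are its distance-$d$ descendants in $T$; for $i\ge 1$ it adds an artificial root. It then applies the $d=1$ case to each $T_i$, using that each has at most $L$ leaves. There the factor $d$ counts residue classes.

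You instead \emph{telescope} through the intermediate levels of the original tree. You bound $N_{k+1}(v)-N_k(v)$ by the surplus of $D_k(v)$ and then double count. Your factor $d$ counts the ancestors of $u$ at distances $0,\dots,d-1$.

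Your version is more self-contained: it needs no auxiliary trees or artificial roots. It also sidesteps the claim that each $T_i$ has at most $L$ leaves, which the paper asserts without justification. That claim is true: two leaves of $T_i$ cannot be in an ancestor relation, since their distance would be a positive multiple of $d$, giving the upper one a distance-$d$ descendant. Hence their subtrees in $T$ are disjoint, and each contains a leaf of $T$.

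The paper's version has the appeal of treating the $d=1$ statement as a black box. It reduces the general case to a purely structural re-rooting of the tree. Yours additionally yields the explicit per-node bound $N_d(v)\le 1+\sum_{k<d}\sum_{u\in D_k(v)}w(u)$, which may be useful on its own.
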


\begin{proof}
We begin with the case of children, $d=1$. Let $T' \subset T$ be the subset of all non-leaf nodes. For a non-empty tree, $L = 1 + \sum_{v \in T'}{(N_1(v)-1)}$, by induction on the size of the tree. It is trivial for a singleton tree ($T' = \emptyset$). The claim continues to hold when we expand the tree top-down and add $N_1(u)$ children to some leaf $u$, since $u$ stops being a leaf and we gain $N_1(u)$ new leaves. Since leaves have no children:
$\sum_{v \in S}{(N_1(v) - 1)} \le \sum_{v \in T'}{(N_1(v) - 1)} = L-1 \Rightarrow \sum_{v \in S}{N_1(v)} \le L-1 + |S|$. 

For $d > 1$, we define the trees $T_0,\ldots,T_{d-1}$ as follows. First, we define the tree $T_0$ recursively as $root(T_0)=root(T)$, and the children of a node in $T_0$ are its descendants at distance $d$ in $T$ (if none exist, the node is a leaf of $T_0$). Note that $T_0$ essentially skips levels of $T$ that are not a multiple of $d$ (root has level $0$). Similarly, the forest $F_i$ for $i=1,\ldots,d-1$ corresponds to the levels that are $i$ modulo $d$, and we make each $F_i$ into a tree $T_i$ by hanging all the trees of $F_i$ under a common root that does not correspond to any node in $T$. See Figure~\ref{figure_descendants_sum_example} for an example with $d=2$ (grandchildren).

$T$ has at most $L$ leaves therefore each $T_i$ has at most $L$ leaves. Then, using superscript on $N(v)$ to clarify the tree of reference: $\sum_{v \in S}{(N^T_d(v) - 1)} = \sum_{i=0}^{d-1}{\sum_{v \in S \cap T_i}{(N^{T_i}_1(v) - 1)}} \le d \cdot (L-1)$. Therefore indeed $\sum_{v \in S}{N^T_d(v)} \le d \cdot (L-1) + |S|$. 

Finally, this bound is tight. Consider a node $v_0$ with $L$ children $u^1_0,\ldots,u^L_0$. Now extend this tree upwards with nodes $v_1,\ldots,v_{d-1}$ such that $v_i$ is a single child of $v_{i+1}$ and downwards with nodes $u^i_j$ for $i \in \{1,\ldots,L\}, j \in \{0,\ldots,d-1\}$ such that $u^i_{j+1}$ is a single child of $u^i_j$. Let $S = \{ v_0,v_1,\ldots,v_{d-1} \}$, then $|S| = d$ and each of the nodes in $S$ has $L$ descendants at distance $d$, see Figure~\ref{figure_descendants_sum_example_tightness} for an example.
\end{proof}

\begin{figure*}[t]
	\centering
    \begin{subfigure}[t]{.22\textwidth}
        \centering
        \includegraphics[width=\textwidth]{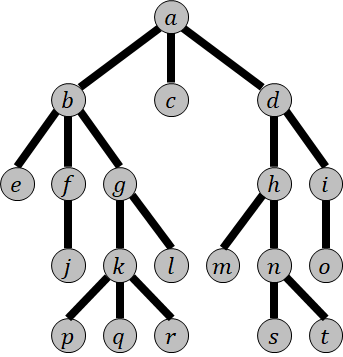}
        \subcaption{A tree $T$}
        \label{figure_descendants_sum_example_T}
    \end{subfigure}
    \hfill 
    \begin{subfigure}[t]{.22\textwidth}
        \centering
        \includegraphics[width=\textwidth]{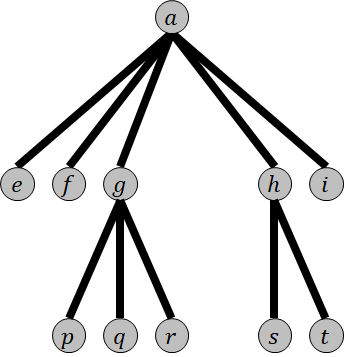}
        \subcaption{$T_0$ of even levels}
        \label{figure_descendants_sum_example_T0}
    \end{subfigure}
    \hfill 
    \begin{subfigure}[t]{.22\textwidth}
        \centering
        \includegraphics[width=\textwidth]{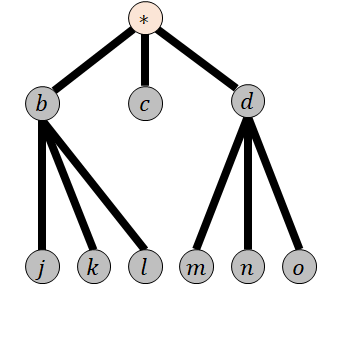}
        \subcaption{$T_1$ of odd levels}
        \label{figure_descendants_sum_example_T1}
    \end{subfigure}
    \hfill 
    \begin{subfigure}[t]{.25\textwidth}
        \centering
        \includegraphics[width=0.8\textwidth]{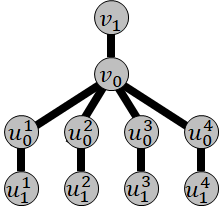}
        \subcaption{Tightness example}
        \label{figure_descendants_sum_example_tightness}
    \end{subfigure}

    \caption{\small{Illustrating Lemma~\ref{lemma_descendants_sum} and its proof, for $d=2$. (a) A given tree $T$. (b) A tree $T_0$ over the even layers, linking nodes to their grandchildren. (c) A tree $T_1$ over the odd layers, its root ensures that $T_1$ is a tree rather than a forest, but this root does not correspond to any node of $T$. (d) Tightness example: In this given tree with $L=4$ leaves, for $S = \{v_0,v_1\}$ we get $\sum_{x \in S}{N_2(x)} = 2(L-1) + |S| = 2L = 8$.}}
	\label{figure_descendants_sum_example}
\end{figure*}

\begin{proof}[Proof of Theorem~\ref{theorem_large_palette_general_parameterized}]
The high-level idea is as follows. We expand a shift-tree $T$ so that we either find an edge with an available color, or until we see many nodes that correspond to the same vertex, say $x$. Seeing many such nodes means that we found many different shift paths to reach $x$. However, we are truly interested in edges rather than vertices, so we will determine a suitable neighbour of $x$, say $y$, such that we reach $(x,y)$ over multiple shift paths. Then, we show that there are so many paths that one of them must be useful. We proceed to formal arguments.

Let $e=(u,v)$ be a newly inserted edge, and let $T$ be a shift-tree (Definition~\ref{definition_shift_tree_and_terminology}) with respect to $G \cup \{e\}$. We expand $T$ until we find a useful path, or see $b$ leaf copies of some vertex $x$. We will prove that if we did not find a useful path, then we can shift colors from $e$ to $e'  = (parent_T(x),x)$ for some copy of $x$, then shift one more color from an edge $(x,y)$ where $y$ is a neighbour of $x$, to $e'$, and can then finish and finally color the (now) uncolored edge $(x,y)$. By Lemma~\ref{lemma_shift_tree_max_depth}, $depth(T) \le O(\log_\beta n)$ where $\beta \equiv \frac{C+1}{b}$, and the number of \emph{re}-colorings, would be at most $1 + depth(T)$. If we count colorings including the newly inserted edge, then add $1$ extra to the bound. In total, this expression is $O(\log_\beta n)$.

We focus on the skeleton $T^x$ that is formed by these nodes of $x$ (Definition~\ref{definition_shift_tree_and_terminology}), which has $b$ leaves.\footnote{Technically, $T^x$ could have $b+1$ leaves, if the shift tree has a non-leaf copy of $x$ such that none of the $b$ leaf copies of $x$ are among its descendants. This non-leaf copy of $x$ can even be useful to take the role of the $b$-th leaf in $T^x$ so we can stop the expansion of $T$ faster. That being said, for simplicity of presentation and because in the worst-case we won't get this extra leaf, we ignore this possibility.} Given $T^x$ we expand each of its leaves (a node corresponding to $x$) one step further, as in Figure~\ref{fig_colored_shift_tree_T_expand_Tx}, and denote this expanded skeleton by $T^{x+}$. The expansion is by the same logic of the shift-tree expansion: Consider each leaf $x_i$ of $T^x$, for $i=1,\ldots,b$. Let $p_i$ be the parent of $x_i$ in $T^x$. We define the children of $x_i$ as the neighbours of $x$ via colors that are free at $p_i$ after shifting the colors of the edges on the path in $T$ from $(u,v)$ to $(p_i,x_i)$ such that the only uncolored edge is $(p_i,x_i)$. Recall that by Lemma~\ref{lemma_basic_shift_children} stopping due to $b$ copies implies that every inner node in $T$ has at least $C+1$ children. So henceforth, assume that each $x_i$ has (at least) $C+1$ children. There could be more children if $deg(p_i) < \Delta$, but in the worst-case $deg(p_i) = \Delta$ and then $x_i$ has at most $C+1$ children.

The set of leaves of $T^{x+}$ is a multiset of vertices of $G$ that are neighbours of $x$, with repetitions. The size of this set is at least $b \cdot (C+1)$, so by the pigeon-hole principle there is a neighbour of $x$ that corresponds to at least $\frac{b \cdot (C+1)}{deg(x)} \ge \frac{b \cdot (C+1)}{\Delta}$ leaves in $T^{x+}$. Denote this neighbour by $y$, and let $Y \subseteq \{1,\ldots,b\}$ be the set of indices $i$ such that $y$ is a child of $x_i$ in $T^{x+}$.\footnote{We abuse notation and refer to each of these nodes in $T$ as $y$ when there is no confusion, rather than giving each a distinct name.} Observe that for each $i\in Y$, $T^{x+}$ gives us a path in $G$ that begins with $e$ and ends with $(p_i,x)$ followed by $(x,y)$. We can shift colors over each such path such that after shifting, $(x,y)$ is the only uncolored edge. Denote the set of these paths by $\mathcal{P}$. If any path of $\mathcal{P}$ is such that $(p_i,x)$ is a good neighbour of $(x,y)$ (recall Definition~\ref{definition_dangerous}: this means that when we shift the color from $(x,y)$ to $(p_i,x)$, $(x,y)$ has an available color) then we can use this path to finish the coloring.

By the definition of the shift-tree, every inner node of $T^x$ corresponds to a different vertex, therefore the penultimate edge $(p_i,x)$  is different (as an edge in $G$) in each path in $\mathcal{P}$. Furthermore, by Observation~\ref{observation_unique_edges_path_in_shift_tree} $(p_i,x)$ appears once in its path and thus maintains its original color throughout the whole shifting until we shift the color of $(x,y)$ to it, unless $p_i=u$ and $x=v$ in which case $(p_i,x) = e$ has no original color.  Overall, we have at least $|Y| - 1 \ge \frac{b \cdot (C+1)}{\Delta} - 1$ paths in $G$ such that we can shift colors on each of them to eventually end up with the edge $(x,y)$ uncolored, where the penultimate edge $(p_i,x)$ in each path is colored by its initial color before shifting (and in particular $(p_i,x) \ne (u,v)$).

To conclude the proof naively, assume that no path of $\mathcal{P}$ out of the $|Y|-1$ paths, passes through $y$ before reaching the last edge $(x,y)$. This assumption is equivalent to $y$ not being an inner node of $T^x$, or $u$ (being the parent of $root(T^x)$). (This assumption is demanding, and later we overcome it.) Provided that this assumption is true, then we know that the colors used around $y$ do not change by any shifting path that we choose. All that happens is that eventually $(x,y)$ becomes uncolored (shifting its color to $(p_i,x)$). Recall that there can be at most $\Delta-1-C$ colors associated with bad neighbours (Definition~\ref{definition_dangerous}) of the edge $(x,y)$ at any given time. Since the colors around $y$ do not change, the set of bad colors is pre-determined by the initial coloring. If $|Y| - 1 \ge \Delta-C$, we are guaranteed that some path in $\mathcal{P}$ reaches $(x,y)$ from a good neighbour of $x$ so we can shift colors on this path and finish the coloring of $(x,y)$ safely. Note that it could technically be possible that some path is fine for shifting even if $|Y| - 1 < \Delta-C$, yet this is what we require for a guarantee.

Alas, paths might go through $y$. In this case, the set of bad colors may change so that in some path $P \in \mathcal{P}$ through $p_i$ the edge $(p_i,x)$ is a good neighbour of $(x,y)$ based on the initial coloring, but turns out to be a bad neighbour when we shift colors along $P$. See Figure~\ref{figure_shift_color_makes_bad} for an example. If every path in $\mathcal{P}$ could ``invalidate itself'' by making $(p_i,x)$ bad after we recolor along it, we would be in trouble. Observe that the number of colors that may be shifted to edges incident to $y$ as bad colors depends on the number of grandchildren that $y$ has in $T^x$. Indeed, revisit Figure~\ref{figure_shift_color_makes_bad}: Shifting colors on a path where the child and grandchild of $y$ are $z$ and $w$ respectively results in $(y,z)$ receiving the color of $(z,w)$ (formerly free at $y$). 

So, the solution is simple: Let $N_\ell(y)$ the number of paths in $\mathcal{P}$ that lead to a leaf $y$, and let $N_g(y)$ be the number of grandchildren of $y$ as an inner node in $T^x \cup \{ parent(root(T^x)) \}$ ($0$ if $y \ne u$ and is not an inner node; If $y=u$ we may count grandchildren for two instances of $u$, one in $T^x$ and another as the virtual parent of $root(T^x)$). Then define the \emph{effectiveness} of $y$ to be $N_\ell(y) - N_g(y)$. We can lower bound the maximum effectiveness by the average effectiveness, which is: $\frac{1}{deg(x)} \cdot \sum_{y : (x,y) \in G}{(N_\ell(y) - N_g(y))} = \frac{1}{deg(x)} \cdot \Big ( \sum_{y : (x,y) \in G}{N_\ell(y)} - \sum_{y : (x,y) \in G}{N_g(y)} \Big ) \ge \frac{b(C+1) - (2(b-1)+deg(x))}{deg(x)}$. The last inequality follows since there are at least $b(C+1)$ leaves in total in $T^{x+}$, and by applying Lemma~\ref{lemma_descendants_sum} 
to $T^{x} \cup \{parent(root(T^x))\}$ (it has $L=b$ leaves and 
our set $S$ consists of the 
$deg(x)$ neighbours of $x$).

It follows that there is a neighbour $y$ of $x$ with \emph{integer} effectiveness of at least $\frac{b(C-1)+2}{deg(x)} - 1$. If this effectiveness is strictly larger than the number of bad neighbours of $(x,y)$ on the side of $x$, which is at most $(deg(x)+deg(y)-1) - (\Delta+C) \le (deg(x)-1-C)$, then even though $N_g(y)$ paths in $\mathcal{P}$ might have been spoiled, there are still enough unspoiled paths to guarantee that at least one reaches $(x,y)$ from a good neighbour. Therefore, in conclusion, we require the following condition:
$\frac{b(C-1)+2}{deg(x)} - 2 > (deg(x)-1-C)$. (The $-2$ instead of $-1$ in the left-hand side is due to the possible case of $(p_i,x) = (u,v)$, and the strict inequality guarantees having at least one more path in $\mathcal{P}$ than the number of bad neighbours, because both right hand side, and the effectiveness of $y$ which is at least the left-hand side, are integers.) The condition is hardest to satisfy when $deg(x) = \Delta$, so in total we get that if $\frac{b(C-1)+2}{\Delta} > \Delta-C+1$ we can extend the coloring to $(u,v)$.

\looseness=-1
By Lemma~\ref{lemma_running_time_without_handler},
it takes $O(m)$ time to run Algorithm~\ref{algorithm_generic_shift_tree} excluding the call to the leaves handler. The leaves handler, Algorithm~\ref{algorithm_handle_leaves_generic}, has to determine the correct neighbour $y$ and edge $(x,y)$ among the multiple copies. Computing the node $y$ with maximum \emph{effectiveness} takes $O(\Delta^2)$ time, by scanning the $deg(x) = O(\Delta)$ neighbours of $x$ and for each counting its appearances in $T^{x+}$, and subtracting the number of its grandchildren in $T^x$. To determine a copy of the edge $(x,y)$ that ends a useful path in $T^{x+}$, we compare the colors available at $x$ and at $y$ as follows. The colors available in $y$ are those originally available in $y$, $A(y)$, minus colors that could get spoiled due to grandchildren-edges, denote this list $A'(y)$ (can be computed in $O(\Delta)$ time). Next, go over the $O(\Delta)$ leaves that correspond to $x$ and recall that the color that gets freed in $x$ is $color(p_i,x)$, which is the original color of this edge. So if this color is free in $A'(y)$, we found a useful path. The total time is $O(\Delta^2)$, dominated by the time required to determine $y$.
\end{proof}

\begin{remark}[Self-intersection]
\label{remark_path_with_possible_repeat_edge}
As discussed following Definition~\ref{definition_chain_shift}, most shift-based algorithms in the literature refrain from shifting over a self-intersecting path. In contrast, our algorithm implied by Theorem~\ref{theorem_large_palette_general_parameterized} (Algorithm~\ref{algorithm_generic_shift_tree} with Algorithm~\ref{algorithm_handle_leaves_generic}) is mostly non-self-intersecting in edges \emph{except} for the last edge. This possibly gives us a slight advantage, and lets us tighten the recourse in Theorem~\ref{theorem_tight_worst_case_recourse}. To clarify the self intersection: By the definition of the shift-tree $T$ every inner node on a path is unique, but the leaf might be repetitive. Then, when we expand the leaf $x$ one step further to a new leaf $y$, the final edge $(x,y)$ might in fact also appear earlier on the shifted path.
\end{remark}

\begin{figure*}[t]
	\centering
    \begin{subfigure}[t]{.16\textwidth}
        \centering
        \vspace{-0.9in} 
        \includegraphics[width=\textwidth]{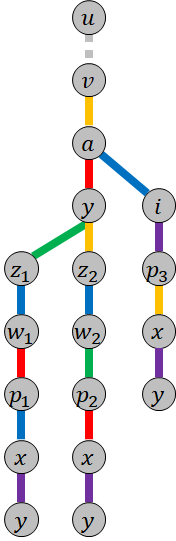}
        \subcaption{Shift-tree $T^x$}
        \label{fig_bad_shift_tree}
    \end{subfigure}
    \hspace{3mm} 
    \centering
    \begin{subfigure}[t]{.8\textwidth}
            \begin{subfigure}[t]{.4\textwidth}
                \centering
                \includegraphics[width=\textwidth]{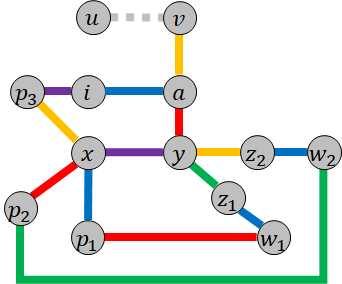}
                \subcaption{Initial Coloring}
                \label{fig_bad_shift_graph}
            \end{subfigure}
            \hspace{10mm} 
            \begin{subfigure}[t]{.4\textwidth}
                \centering
                \includegraphics[width=\textwidth]{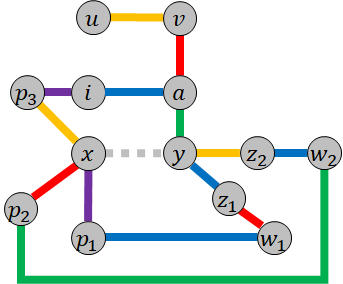}
                \subcaption{Path 1 shifted}
                \label{fig_bad_shift_graph1}
            \end{subfigure}
            
            \bigskip 
            \hspace{3mm} 
            \begin{subfigure}[t]{.4\textwidth}
                \centering
                \includegraphics[width=\textwidth]{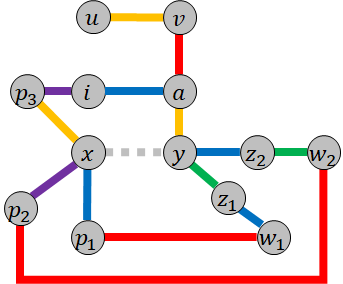}
                \subcaption{Path 2 shifted}
                \label{fig_bad_shift_graph2}
            \end{subfigure}
            \hspace{6mm} 
            \begin{subfigure}[t]{.4\textwidth}
                \centering
                \includegraphics[width=\textwidth]{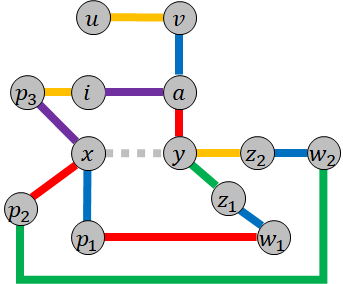}
                \subcaption{Path 3 shifted}
                \label{fig_bad_shift_grap3}
            \end{subfigure}
    \end{subfigure}
    \caption{\small{Illustration for the proof of Theorem~\ref{theorem_large_palette_general_parameterized}, of how shifting colors through $y$ might change the good/bad neighbours of an edge $(x,y)$, with $\Delta=4$, $C=1$. (a) A shift-tree (expanded skeleton) $T^{x+}$, showing multiple shift paths from $(u,v)$ (uncolored) to $(x,y)$ (we only show the children $y$ of $x$). (b) A sub-graph of $G$ induced only by the edges of $T^{x+}$, other edges of $G$ are omitted. Based on the initial coloring, the edge $(p_1,x)$ is a good neighbour of $(x,y)$, while $(p_2,x)$ and $(p_3,x)$ are bad neighbours due to having the same color as $(y,a)$ and $(y,z_2)$ respectively. However, (c) shows color shifting on the leftmost path (through $(y,z_1)$) that turns $(p_1,x)$ into a bad neighbour because $(y,z_1)$ has been colored blue during the shift. Similarly, but conversely, (d) shows that color shifting on the middle path (through $(y,z_2)$) makes $(p_2,x)$ a good neighbour because red is freed at $y$ during the shift. (e) When we shift colors over the rightmost path (through $(a,i)$) the initial status of $(p_3,x)$ is not changed because the path does not pass through $y$.}}
	\label{figure_shift_color_makes_bad}
\end{figure*}

We refer to all the pairs $(C,\Delta)$  for which Theorem~\ref{theorem_large_palette_general_parameterized} guarantees successful recoloring as the \emph{feasible domain}. This domain is analyzed by Theorem~\ref{theorem_feasible_domain}.

\begin{restatable}[Feasible Domain]{theorem}{theoremUpperBoundFeasibleDomain}
\label{theorem_feasible_domain}
For a fixed $\Delta \ge 3$, let $C^*(\Delta) = \frac{\Delta}{\phi} + f(\Delta)$ where $\phi \approx 1.618$ is the golden-ratio and $f(\Delta) \in (0.462,0.724)$ is defined precisely in the proof. Then for every $C > C^*(\Delta)$, we can deterministically maintain a dynamic $(\Delta + C)$ edge coloring using the algorithm implied by Theorem~\ref{theorem_large_palette_general_parameterized} (Algorithm~\ref{algorithm_generic_shift_tree} with Algorithm~\ref{algorithm_handle_leaves_generic}).
\end{restatable}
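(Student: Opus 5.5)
The plan is to reduce the statement to the inequality of Theorem~\ref{theorem_large_palette_general_parameterized} and solve it for $C$. For fixed $\Delta$ and $C$, that theorem applies as soon as some integer $b$ with $2 \le b \le C$ satisfies $\frac{b(C-1)+2}{\Delta} > \Delta - C + 1$. The left-hand side increases in $b$, so the best choice is $b = C$. The condition then becomes $C(C-1)+2 > \Delta(\Delta-C+1)$, which is
\[
g(C) \equiv C^2 + (\Delta-1)C + 2 - \Delta^2 - \Delta > 0 .
\]
We also need $C \ge 2$ so that $b=C\ge 2$ is legal, and $C \le \Delta-2$ as the theorem requires. For $C \ge \Delta-1$ I would argue separately: either it is trivial (with $2\Delta-1$ colors, greedy always succeeds), or it is covered by the same inequality if the constraint $C \le \Delta-2$ is not really needed there. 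I would note this range explicitly and handle it as a side remark.

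Since $g$ is an upward parabola in $C$, the condition holds exactly for $C$ larger than its positive root
\[
C^*(\Delta) = \frac{-(\Delta-1) + \sqrt{(\Delta-1)^2 + 4(\Delta^2+\Delta-2)}}{2}
= \frac{-(\Delta-1) + \sqrt{5\Delta^2+2\Delta-7}}{2}.
\]
Then I would define $f(\Delta) \equiv C^*(\Delta) - \frac{\Delta}{\phi}$. Using $\frac{1}{\phi} = \frac{\sqrt5-1}{2}$, this gives
\[
f(\Delta) = \frac{1}{2}\Big(1 + \sqrt{5\Delta^2+2\Delta-7} - \sqrt5\,\Delta\Big).
\]
Expanding the square root gives $\sqrt{5\Delta^2+2\Delta-7} - \sqrt5\Delta = \frac{2\Delta-7}{\sqrt{5\Delta^2+2\Delta-7}+\sqrt5\Delta}$. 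This tends to $\frac{1}{\sqrt5}$, so $f(\Delta) \to \frac12 + \frac{1}{2\sqrt5} \approx 0.7236$. At $\Delta = 3$, $\sqrt{44} - 3\sqrt5 \approx -0.075$, so $f(3) \approx 0.462$. Hence the target range $(0.462, 0.724)$ is plausible once monotonicity is shown. I would prove that $f$ is increasing on $\Delta \ge 3$, either by differentiating the expression above or by rewriting the fraction as $\frac{2-7/\Delta}{\sqrt{5+2/\Delta-7/\Delta^2}+\sqrt5}$, where the numerator increases and the denominator decreases in $\Delta$. That bounds $f$ strictly between $f(3)$ (attained) and the limit. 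The lower endpoint needs care about open versus closed intervals. If $f(3)$ slightly exceeds $0.462$ the interval claim is fine; otherwise I would adjust the constant check.

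Finally, I would confirm the side conditions. For every $\Delta \ge 3$ we have $C > C^*(\Delta) > \frac{\Delta}{\phi}$, and this exceeds $1$ for $\Delta\ge 3$, so $C \ge 2$ holds and $b=C$ is admissible. With these checks, every $C > C^*(\Delta)$ in the range satisfies the hypothesis of Theorem~\ref{theorem_large_palette_general_parameterized}, and that theorem directly gives the deterministic maintenance.

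The main obstacle I expect is bookkeeping rather than any new idea. The steps that need care are the monotonicity and the limit of $f$ (so the stated interval is correct), and making sure the integer and strict-inequality issues in the condition of Theorem~\ref{theorem_large_palette_general_parameterized} are handled by requiring $C$ strictly above the real root.
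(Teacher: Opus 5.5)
Your proposal is correct and follows the paper's proof essentially verbatim: take $b=C$, solve $C^2+(\Delta-1)C-(\Delta^2+\Delta-2)>0$ for its positive root $C^*(\Delta)$, set $f(\Delta)=C^*(\Delta)-\Delta/\phi$, and conclude from monotonicity of $f$, the value $f(3)\approx 0.4625$, and the limit $\frac12+\frac{\sqrt5}{10}$. Your side remarks on $C\ge 2$ and on the trivial regime $C\ge\Delta-1$ are slightly more careful than the paper. For monotonicity, use the derivative route rather than your rewritten fraction. The derivative route works because $(5\Delta+1)^2-5(5\Delta^2+2\Delta-7)=36>0$ gives $\frac{d}{d\Delta}\sqrt{5\Delta^2+2\Delta-7}>\sqrt5$. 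In the rewritten fraction, by contrast, the numerator $2-7/\Delta$ is negative at $\Delta=3$, and the denominator actually increases for $3\le\Delta<7$, so the argument that the numerator increases while the denominator decreases does not hold there.
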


\begin{proof}
Recall the constraint of Theorem~\ref{theorem_large_palette_general_parameterized}: $\frac{b \cdot (C-1) + 2}{\Delta} > \Delta - C + 1$. Since $b$ only affects the left-hand side, the constraint is easiest to satisfy if we maximize $b$ and choose $b=C$. We get a simple quadratic inequality to solve:
$
C(C-1) + 2 > \Delta(\Delta-C+1)
\Rightarrow
C^2 + (\Delta-1) C - (\Delta^2 + \Delta - 2) > 0
$. Then:
$C > \frac{-(\Delta-1) + \sqrt{(\Delta-1)^2 + 4(\Delta^2 + \Delta - 2)}}{2} = 
\frac{\sqrt{5\Delta^2 + 2\Delta - 7} - (\Delta-1)}{2}
\equiv C^*(\Delta)
$.
By definition of $C^*(\Delta)$, choosing $C > C^*(\Delta)$ ensures feasibility. Since $C^*(\Delta) \approx \frac{(\sqrt{5}-1)\Delta}{2} = \frac{\Delta}{\phi}$, define $f(\Delta) \equiv C^*(\Delta) - \frac{\Delta}{\phi}$.
One can verify that $f$ is monotonically increasing (take the derivative), with $f(1) = -\frac{1}{\phi} \approx -0.618$, $f(2) \approx 0.325$, $f(3) \approx 0.463$, $\ldots$, and with an asymptote of $f(\Delta) \overset{\Delta \to \infty}{\to} \frac{1}{2} + \frac{\sqrt{5}}{10} \approx 0.724$.
\end{proof}

\begin{remark}[Idealized Smallest $C$]
\label{remark_half_Delta_at_best}
The shift-tree technique allows us to demonstrate that roughly $C > \frac{\Delta}{\phi} \approx 0.6\Delta$ is sufficient to apply our technique (Theorem~\ref{theorem_feasible_domain}). However, because the shift-tree expands each vertex at most once, the method is inherently limited to the regime $C > 0.5 \Delta$. 

To see why, consider that reaching an edge $(x,y)$ from multiple directions through $x$---as required in the proof of Theorem~\ref{theorem_large_palette_general_parameterized}---is possible from at most $\deg(x)-1 \le \Delta-1$ directions. Under optimal conditions where the parameter $b=\Delta-1$, we obtain the following constraint:
\[
    \frac{(\Delta-1)(C-1)+2}{\Delta} > \Delta-C+1,
\]
which simplifies approximately to $2C > \Delta$. In fact, achieving $b=\Delta-1$ is unrealistic; as noted following Lemma~\ref{lemma_shift_tree_max_depth}, one can generally only guarantee $b \le C+1$ leaves for \emph{some} vertex.
\end{remark}

By Theorem~\ref{theorem_large_palette_general_parameterized}, a $(\Delta+C)$ edge coloring of a dynamic graph $G$ can be maintained with a recourse that is a function of $C$ and the additional parameter $b$ mentioned in this theorem (which may also depend on $\Delta$). We show that if $(\Delta,C)$ is a feasible pair such that $(C/\Delta)-1/\phi = \Omega(1)$, and $\Delta-C = O(n^{1-\delta})$ for a constant $\delta > 0$, then there is a choice of $b$ such that this recourse is tight. The tightness is with respect to 
the lower bound $\Omega \Big (\log \frac{nC}{(\Delta-C)^2} / \log \frac{\Delta+C}{\Delta-C} \Big )$ by Theorem~\ref{theorem_lower_bound_2018}.

\begin{restatable}[Tight Domain]{theorem}{theoremUpperBoundTightnessDomain}
\label{theorem_tightness_domain}
For every \emph{feasible pair} $(\Delta,C)$, there is a choice of the parameter $b$ in Theorem~\ref{theorem_large_palette_general_parameterized} that yields  a dynamic $(\Delta+C)$ edge coloring algorithm with worst-case $O(\frac{\log n}{\log \frac{\Delta+C}{\Delta-C}} \cdot \frac{1}{(C/\Delta) - 1/\phi})$ recourse per update.
If $\Delta - C = O(n^{1-\delta})$ for a constant $\delta > 0$, and $(C/\Delta)-1/\phi = \Omega(1)$, then this recourse is \emph{tight} and matches the lower bound in Theorem~\ref{theorem_lower_bound_2018} up to a constant factor.
\end{restatable}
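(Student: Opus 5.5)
The plan is to pick $b$ as large as possible subject to the feasibility inequality of Theorem~\ref{theorem_large_palette_general_parameterized}, $\frac{b(C-1)+2}{\Delta} > \Delta-C+1$. Solving for $b$, the smallest admissible value is roughly $b_{\min} \approx \frac{\Delta(\Delta-C+1)}{C-1}$. Any $b$ with $b_{\min} < b \le C$ is allowed, and feasibility of the pair $(\Delta,C)$ means exactly that $b=C$ is admissible.

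The recourse is $O(\log_\beta n)$ with $\beta = \frac{C+1}{b}$. A large $\beta$ is therefore what we want, which means choosing $b$ as \emph{small} as allowed: take $b = \lfloor b_{\min}\rfloor + 1$ (clamped to at least $2$). Then
\[
\beta \approx \frac{(C+1)(C-1)}{\Delta(\Delta-C+1)} \approx \frac{C^2}{\Delta(\Delta-C)} .
\]
The main step is to show $\log\beta = \Omega\big((C/\Delta - 1/\phi)\cdot \log\frac{\Delta+C}{\Delta-C}\big)$. Write $c = C/\Delta \in (1/\phi, 1)$. Then
\[
\log\beta \approx \log\frac{c^2}{1-c} \qquad\text{versus}\qquad \log\frac{1+c}{1-c}.
\]
The key identity is $c^2 = 1-c$ exactly at $c = 1/\phi$, so $g(c) = \log\frac{c^2}{1-c}$ vanishes at $1/\phi$. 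Its derivative $\frac{2}{c}+\frac{1}{1-c}$ is bounded below by a constant, giving $g(c) \ge \Omega(c - 1/\phi)$ near $1/\phi$. As $c \to 1$ the $-\log(1-c)$ term dominates both expressions, so $g(c) = \Theta\big(\log\frac{1+c}{1-c}\big)$ there.

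Combining the two regimes (the ratio $g(c)/\log\frac{1+c}{1-c}$ is positive and continuous on $(1/\phi,1)$), we get $g(c) \ge \kappa\,(c-1/\phi)\log\frac{1+c}{1-c}$ for an absolute constant $\kappa$. The main obstacle is the additive slack: the $\pm1$ terms, the integer rounding of $b$, and the $+1$ in $\Delta-C+1$. I would absorb these by noting that feasibility forces $C-1 \ge \Omega(\Delta)$ and $\Delta-C+1 \le 2(\Delta-C+1)$, and, where needed, use $b \le 2b_{\min}$ when $b_{\min}\ge 1$. This only loses a constant factor in $\beta$. For small $\Delta$ (bounded), everything is constant and the bound follows trivially from $\beta > 1$. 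If $\beta$ is only barely above $1$ because of rounding, I would fall back on $\log\beta \ge \beta-1$ with $\beta - 1 \gtrsim c - 1/\phi$.

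For tightness, compare with Theorem~\ref{theorem_lower_bound_2018}, namely $\Omega\big(\log\frac{nC}{(\Delta-C)^2}/\log\frac{\Delta+C}{\Delta-C}\big)$. With $c - 1/\phi = \Omega(1)$, the upper bound is $O(\log n/\log\frac{\Delta+C}{\Delta-C})$. It remains to show $\log\frac{nC}{(\Delta-C)^2} = \Omega(\log n)$. Since $C \ge \Delta-C$ in this regime, $\frac{nC}{(\Delta-C)^2} \ge \frac{n}{\Delta-C} \ge \Omega(n^{\delta})$ using $\Delta - C = O(n^{1-\delta})$. Hence the logarithm is at least $\delta\log n - O(1) = \Omega(\log n)$, which matches the upper bound up to a constant depending on $\delta$.
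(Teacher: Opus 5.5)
\textbf{A gap near the feasibility threshold.} Several parts of your plan are sound. Taking the least admissible $b$ for every $C$, instead of the paper's split that uses $b=5(\Delta-C)$ when $C\ge 0.9\Delta$, is legitimate and even maximizes $\beta$. For $c$ bounded away from $1/\phi$, constant-factor losses in $\beta$ are harmless, because $\beta$ stays bounded away from $1$. The bounded-$\Delta$ case does follow by finiteness, and your tightness paragraph is exactly the paper's.

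The gap is the regime where $y\equiv C-\Delta/\phi$ is close to the feasibility threshold $\phi/\sqrt{5}\approx 0.724$. This occurs for arbitrarily large $\Delta$, since the fractional parts of $\Delta/\phi$ are dense. There $c-1/\phi=y/\Delta=\Theta(1/\Delta)$, and the rounding slack you plan to absorb is of the same order, so the continuum function $g(c)$ does not control $\ln\beta$.

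Concretely, write $b_{\min}=\frac{\Delta(\Delta-C+1)-2}{C-1}$. Then $b\le b_{\min}+1$ only gives $\beta\ge\frac{C+1}{b_{\min}+1}$. Feasibility of the pair says merely $b_{\min}<C$, so this bound can be $1+o(1/\Delta)$. Absorbing slack via $b\le 2b_{\min}$ is worse, since a constant-factor loss pushes the bound below $1$. Your fallback does not close this either. The inequality $\log\beta\ge\beta-1$ goes the wrong way; you want something like $\ln\beta\ge(\beta-1)/\beta$. And ``$\beta-1\gtrsim c-1/\phi$'' is precisely the claim that has to be proved in this regime.

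\textbf{The missing idea is integrality of $b$.} Since $b_{\min}<C$ and $C\in\mathbb{N}$, your $b=\lfloor b_{\min}\rfloor+1$ satisfies $b\le C$. Hence $\beta=\frac{C+1}{b}\ge 1+\frac{1}{C}$ and $\ln\beta\ge\frac{1}{C+1}\ge\frac{1}{\Delta}$. This is $\Omega(c-1/\phi)$ whenever $y=O(1)$; it is the paper's subcase $y<1$.

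For $y\ge 1$ your continuum argument can be made rigorous, but it needs explicit constants rather than constant-factor absorption. Set $D\equiv\Delta(\Delta-C+1)+C-3$; then $b\le D/(C-1)$ and $D\le\Delta^2$. Using $1/\phi^2+1/\phi=1$ and $\sqrt{5}-\phi=1/\phi$,
\[
\beta-1\ \ge\ \frac{(C^2-1)-D}{D}\ =\ \frac{\Delta(\sqrt{5}\,y-\phi)+y^2-y+2}{D}\ \ge\ \frac{y\Delta/\phi}{\Delta^2}\ =\ \frac{c-1/\phi}{\phi}.
\]
This is essentially the paper's $X_{RHS}$ computation. With these two subcases in place, $\beta$ is bounded for $c\le 0.9$, so $\ln\beta\ge(\beta-1)/\beta=\Omega(c-1/\phi)$ there. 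Your constant-factor argument handles $c>0.9$, and the rest of your proof goes through.
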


\begin{proof}
To show tightness, we match the upper and lower bound on the recourse. First, let us rewrite the lower bound of Theorem~\ref{theorem_lower_bound_2018}. The feasible range satisfies $C > \frac{\Delta}{2}$, therefore $1 < \frac{C}{\Delta-C} < \frac{n}{\Delta-C}$ so the lower bound is:
$$
\Omega \Big (\frac{\log \frac{nC}{(\Delta-C)^2}}{\log \frac{\Delta+C}{\Delta-C}} \Big )
=
\Omega \Big (\frac{\log \frac{n}{\Delta-C}}{\log \frac{\Delta+C}{\Delta-C}} \Big )
\underset{\Delta-C = O(n^{1-\delta})}{=}
\Omega \Big (\frac{\delta \cdot \log n}{\log \frac{\Delta+C}{\Delta-C} } \Big )
\underset{\delta=\Theta(1)}{=}
\Omega \Big (\frac{\log n}{\log \frac{\Delta+C}{\Delta-C} } \Big )
$$
The upper bound, by Theorem~\ref{theorem_large_palette_general_parameterized} can we re-written as:
$$
O \Big (\frac{\log n}{\log \frac{C+1}{b}} \Big )
=
O \Big (\frac{\log n}{\log \frac{\Delta+C}{\Delta-C}} \cdot \frac{\log \frac{\Delta+C}{\Delta-C}}{\log \frac{C+1}{b}} \Big )
$$
To complete the proof and get the claimed recourse, we show that there is a feasible choice of $b$ such that $f(\Delta,C,b) \equiv \frac{\ln \frac{\Delta+C}{\Delta-C}}{\ln \frac{C+1}{b}} = O(\frac{1}{(C/\Delta) - 1/\phi})$. (Then, also $f(\Delta,C,b) = O(1)$ for $(C/\Delta) - 1/\phi = \Omega(1)$, which concludes the tightness claim.) The feasibility of $b$, by Theorem~\ref{theorem_large_palette_general_parameterized} requires that (a) $2 \le b \le C$ and (b) $\frac{b(C-1)+2}{\Delta} > \Delta-C + 1$. We divide the proof to two cases, for $C \ge 0.9\Delta$ we choose $b = 5(\Delta-C)$ and for $C < 0.9\Delta$ we choose $b = 1 + \floor{\frac{(\Delta-C+1)\Delta - 2}{C-1}}$.

\textbf{Case 1:} $C \ge 0.9\Delta$ with $b = 5(\Delta-C)$. First we verify that Conditions (a) and (b) hold for this choice of $b$. Indeed, (a) Since $C \le \Delta - 2$ then $b > 2$. Also, $C \ge 0.9 \Delta \Rightarrow \frac{10}{9} C \ge \Delta \Rightarrow \frac{1}{9} C \ge (\Delta - C) = \frac{b}{5} \Rightarrow \frac{9}{5} b \le C$, then indeed $b < C$. (b) We have to verify that $\frac{5(\Delta-C)(C-1)+2}{\Delta} > \Delta-C + 1$. We can re-arrange this expression to get:
\begin{equation}
\label{equation_feasbility_case1}
(\Delta-C) \cdot \Big (\frac{5(C-1)}{\Delta} - 1 \Big ) > 1 - \frac{2}{\Delta}
\end{equation}
Equation~(\ref{equation_feasbility_case1}) holds because:
$$
(\Delta-C) \cdot \Big (\frac{5(C-1)}{\Delta} - 1 \Big )
\underset{\Delta-C \ge 2}{\ge}
2 \cdot \Big (\frac{5C-\Delta}{\Delta} - \frac{5}{\Delta} \Big )
\underset{C \ge 0.9\Delta}{\ge}
7 - \frac{10}{\Delta}
\underset{\Delta \ge 2}{\ge}
2 > 1 - \frac{2}{\Delta}
$$
Finally, we analyze $f(\Delta,C,b) = \frac{\ln \frac{\Delta+C}{\Delta-C}}{\ln \frac{(C+1)/5}{\Delta-C}}$.
First note that $\frac{\Delta+C}{\Delta-C}
\le
\frac{\frac{10}{9}C+C}{\Delta-C}
<
\frac{(95/9) \cdot (C+1)/5}{\Delta-C}$,
then $f(\Delta,C,b) < 1 + \frac{\ln (95/9)}{\ln \frac{(C+1)/5}{\Delta-C}}$. Next recall that $\frac{9}{5} < \frac{C+1}{b} = \frac{(C+1)/5}{\Delta-C}$, then $f(\Delta,C,b) < 1 + \frac{\ln (95/9)}{\ln (9/5)} \approx 5.01$. Also, since $C \ge 0.9\Delta$, $\frac{1}{(C/\Delta) - 1/\phi} = \Theta(1)$, thus $f = O(\frac{1}{(C/\Delta) - 1/\phi})$. The first case is complete.

\textbf{Case 2:} $C < 0.9\Delta$ with $b = 1 + \floor{\frac{(\Delta-C+1)\Delta - 2}{C-1}}$. We verify that Conditions (a) and (b) hold for this choice of $b$.  (a) $C \le \Delta - 2$ so $\Delta-C+1 \ge 3$, substituting this in the expression for $b$, we get that $b \ge 1 + \floor{\frac{2(\Delta-1)}{C-1}} \ge 3$. The choice of $b$ is the minimal that satisfies Condition (b). Since any $b' \ge b$ also satisfies (b), and $b' = C$ does (by assumption that $(\Delta,C)$ is a feasible pair), we conclude that $b \le C$.  This completes the proof that Condition (a) holds.\footnote{As a more explicit proof that $b \le C$ for $C \ge \frac{\Delta}{\phi} + 4$, notice that $b = 1 + \floor{\frac{\Delta^2 - 2 - (C-1)\Delta}{C-1}} = 1-\Delta + \floor{\frac{\Delta^2 - 2}{C-1}} \le 1-\Delta + \frac{\Delta^2}{C-1} = 1 - \Delta + \frac{\Delta^2}{C} \cdot \frac{C}{C-1} < 1 + \Delta (\phi \cdot \frac{C}{C-1} - 1) = 1 + \Delta (\frac{1}{\phi} + \frac{1}{C-1}) < \frac{\Delta}{\phi} + 1 + \phi \cdot \frac{C}{C-1} < \frac{\Delta}{\phi} + 3.16 < C$.
}

Now proceed to analyze $f(\Delta,C,b)$ (recall: $f(\Delta,C,b) \equiv \frac{\ln \frac{\Delta+C}{\Delta-C}}{\ln \frac{C+1}{b}}$). Since $(C/\Delta) \in (1/\phi,0.9)$ its numerator is $\Theta(1)$ (trivially being increasing in $C$ and attaining values of $\ln \frac{\phi}{2-\phi}$ and $\ln 19$ at the ends of this interval, so it suffices to prove that $g(C,b) \equiv \ln \frac{C+1}{b}$ is $\Omega((C/\Delta) - 1/\phi)$ for the value we chose for $b$. We analyze this function by presenting it as $g(C,b) = \ln (1 + x)$ for $x \equiv \frac{C+1}{b} - 1$ and use $\frac{x}{9} < \ln (1+x) < x$ for $x \in (0,8)$, we first bound $x$:

$$
\frac{(\Delta-C+1)\Delta - 2}{C-1} \le b \le 1 + \frac{(\Delta-C+1)\Delta - 2}{C-1} \Rightarrow
$$
$$
\frac{C^2-1}{\Delta^2-C\Delta + (\Delta - 2)} \ge  \frac{C+1}{b} \ge \frac{C^2-1}{\Delta^2 - C\Delta + (\Delta + C - 3)} \Rightarrow
$$
\begin{equation}
\label{equation_sandwich_b_expression}
X_{LHS} \equiv \frac{C^2 + C\Delta - \Delta^2 - (\Delta - 1)}{\Delta^2-C\Delta + (\Delta - 2)} \ge  \frac{C+1}{b} - 1 \ge \frac{C^2 + C\Delta - \Delta^2 - (\Delta + C - 2)}{\Delta^2 - C\Delta + (\Delta + C - 3)} \equiv X_{RHS}
\end{equation}
Observe that $X_{LHS}$ is monotone in $C$ (numerator increases and denominator decreases), and therefore bounded by substituting $C = 0.9\Delta$: so $0 \underset{b \le C}{<} x = \frac{C+1}{b} - 1  \le \frac{0.71 \Delta^2 - (\Delta-1)}{0.1 \Delta^2 + (\Delta-2)} < 7.1$. It holds in general for $x \in (0,8)$ that $\ln (1+x) > \frac{x}{9}$, therefore $9 \cdot g(C,b) > \frac{C+1}{b} - 1$. To bound $\frac{C+1}{b} - 1$ from below, we proceed to two sub-cases based on the value of $y$ where we denote $C = \frac{\Delta}{\phi} + y$. Then $q \equiv C^2 + C\Delta - \Delta^2 = \Delta^2 (\frac{1}{\phi^2} + \frac{1}{\phi} - 1) + (\frac{2}{\phi} + 1)y \Delta + y^2  = (\frac{2}{\phi} + 1)y \Delta + y^2$. For $y \ge 1$, $q > 2\Delta > (\Delta+C-2)$. Therefore $X_{RHS}$ is a meaningful, positive lower bound, and we can rewrite it as follows: 
$$
9 \cdot g(C,b) >
\frac{C+1}{b} - 1 \ge
X_{RHS}
=
\frac{\Delta \cdot (C/\Delta-1/\phi) \cdot (C+\phi \Delta)}{\Delta^2 - C\Delta + (\Delta+C-3)}
-
\frac{\Delta+C-2}{\Delta^2 - C\Delta + (\Delta+C-3)}
$$
$$
\equiv g_1(\Delta,C) \cdot (C/\Delta-1/\phi) - g_2(\Delta,C)
$$
Finally, $g_1(\Delta,C) = \Theta(1)$ and $g_2(\Delta,C) = O(\frac{1}{\Delta})$, which gives $g(C,b) = \Omega((C/\Delta) - 1/\phi)$.

For $y < 1$, by Theorem~\ref{theorem_feasible_domain} we also have $y > 0.462$, then $C/\Delta-1/\phi = \frac{y}{\Delta} = \Theta(\frac{1}{\Delta})$. Then:
$$
\frac{C+1}{b} - 1
\underset{b \le C}{\ge}
\frac{1}{b}
\ge
\frac{1}{1 + \frac{(\Delta-C+1)\Delta-2}{C-1}}
=
\frac{C-1}{\Delta^2 - C\Delta + (\Delta+C-3)} = \Theta \left (\frac{1}{\Delta} \right )
=
\Theta ((C/\Delta) - 1/\phi)
$$
Then again since $9 \cdot g(C,b) > \frac{C+1}{b} - 1$, we get that $g(C,b) = \Omega((C/\Delta) - 1/\phi)$.
\end{proof}

Finally, Theorem~\ref{theorem_tight_worst_case_recourse} follows immediately from Theorem~\ref{theorem_feasible_domain} and Theorem~\ref{theorem_tightness_domain} (recourse) and Theorem~\ref{theorem_large_palette_general_parameterized} (running time). We restate the theorem for convenience.

\theoremTightWorstCaseRecourse*

Theorem~\ref{theorem_tight_worst_case_recourse} does not extract the whole power of the shift-tree technique, and it is possible that other leaf-handler concepts can extend the feasibility domain further. As a concrete example, 
notice that the only values of $C$ that satisfies the requirement that $C \ge \frac{\Delta}{\phi} + 0.724$
in Theorem~\ref{theorem_tight_worst_case_recourse},
for $\Delta \in \{ 4,5,6 \}$
are $C \ge \Delta - 1$.\footnote{Technically by Theorem~\ref{theorem_tight_worst_case_recourse}, $\Delta=7$ also seems to require $C > 5.05 \Rightarrow C \ge 6$, but the more accurate condition which we relaxed, given in Theorem~\ref{theorem_feasible_domain}, requires $C > 4.94 \Rightarrow C \ge 5$ for $\Delta=7$.} For such values of $C$ dynamic coloring is of course trivial.
However, with a different leaf handler we can use the shift tree in a different way to give an algorithm with tight recourse for $C=\Delta-2$ and any $\Delta \ge 4$ (in particular for $\Delta =4,5,6$).

\begin{algorithm}[t]
    \SetKwFunction{funcLeavesHandler}{LeavesHandler}
    \SetKwProg{Fn}{Function}{:}{}
    
    \Fn{\funcLeavesHandler{caller's context: computed shift-tree $T$ with $b=2$ leaves $x_1$ and $x_2$ that correspond to the same vertex $x \in G$}}{
        \If{a non-leaf copy of $x$ in $G$, $x'$, is an ancestor of $x_i$ but not of $x_{3-i}$}{
        Re-define $x_i$ to be $x'$. // Applies for $i \in \{1,2\}$}

        \looseness=-1
        Let $P_i$ be the path in $T$ from $e$ to $e_i = (parent_T(x_i),x_i)$. Let $P$ be the common prefix of $P_1$ and $P_2$, and let $S_i$ be the unique suffix of $P_i$. Define $P^{12} \equiv P_1 + rev(S_2)$ and $P^{21} \equiv P_2 + rev(S_1)$ where $+$ is concatenation and $rev$ is reversal.

        Find a prefix of either $P^{12}$ or $P^{21}$ such that when shifting colors over it, we get a new uncolored edge $e'$ that can be colored with some available color. Return this prefix as the path $P$.
    }

    \caption{A leaves-handler used with Algorithm~\ref{algorithm_generic_shift_tree}. Corresponds to Theorem~\ref{theorem_DeltaMinus2_colors}.}
    \label{algorithm_handle_leaves_DeltaMinus2}
\end{algorithm}

\begin{restatable}[Simplified $C = \Delta-2$]{theorem}{theoremDeltaMinusTwoSimpler}
\label{theorem_DeltaMinus2_colors}
Let $\Delta \ge 4$. Algorithm~\ref{algorithm_generic_shift_tree} with parameter $b=2$,\footnote{We add the exception that for the uncolored edge $(u,v)$, we do not count a leaf of $v$ if its parent is $u$. So $b=2$ except for possibly $v$ which might require us to see $3$ leaves to trigger the leaves handler.} with Algorithm~\ref{algorithm_handle_leaves_DeltaMinus2}, deterministically maintain a dynamic $(2\Delta -2)$ edge coloring with a tight worst-case recourse of $O(\frac{\log n}{\log \Delta})$ per insertion. Determining the recoloring takes $O(m)$ time.
\end{restatable}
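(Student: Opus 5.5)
Three things need to be established for this statement. First, the shift-tree stops at small depth: either a useful path appears or some vertex $x$ has two leaf copies $x_1,x_2$. Second, in the latter case one of the two ``cycle'' paths $P^{12},P^{21}$ built by Algorithm~\ref{algorithm_handle_leaves_DeltaMinus2} has a useful prefix. Third, the running time is linear.

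\textbf{Depth and recourse.} With $C=\Delta-2$ and $b=2$, Lemma~\ref{lemma_shift_tree_max_depth} gives $\beta=\frac{C+1}{2}=\frac{\Delta-1}{2}$. This is at least $\frac{3}{2}$ for $\Delta\ge 4$, and $\log\beta=\Theta(\log\Delta)$, so $depth(T)=O(\log_\Delta n)$. The exception for the root $v$ (a leaf whose parent is $u$ is not counted) costs at most one extra copy, i.e.\ an additive constant. The returned path has length at most $|P_1|+|S_2|\le 2\,depth(T)$, so the recourse is $O(\frac{\log n}{\log\Delta})$. Tightness follows from Remark~\ref{remark_generalized_lower_bound} (case $C=\Delta-2$), which gives $\Omega(\log_\Delta n)$. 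For the time bound, Lemma~\ref{lemma_running_time_without_handler} covers the tree. The handler only walks two root-to-leaf paths and tests each prefix for a free color, which can be done incrementally in $O(m)$ overall.

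\textbf{Key combinatorial step.} Let $w$ be the branching node where $P_1$ and $P_2$ diverge. Then $P^{12}$ goes from $e$ down to $w$, down $S_1$ to $x$, and then back up $S_2$ reversed toward $w$; $P^{21}$ is symmetric. The first task is to check that every prefix of $P^{12}$ is shiftable. On the downward part this holds by the shift-tree definition. Going up $S_2$ reversed, the edge $(x,parent(x_2))$ has color $c_2$, and after shifting down $S_1$ the uncolored edge is $(parent(x_1),x)$. Shifting $c_2$ onto it is legal at $x$, and at $parent(x_1)$ it is legal because $c_2$ is not used there. This last point holds because inner nodes are unique (Observation~\ref{observation_simple_path_until_leaf}), which is also why the handler re-defines $x_i$ to be an inner ancestor copy when one exists.

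The core argument is then a counting argument. With $\Delta+C=2\Delta-2$ colors, an uncolored edge $(p,q)$ with $\deg(p)=\deg(q)=\Delta$ sees $2\Delta-2$ colored neighbours. So it fails to have a free color only if these neighbours use \emph{all} colors with no repetition, i.e.\ no bad colors exist (Definition~\ref{definition_dangerous}: the number of bad colors is at most $\Delta-1-C=1$, and in fact $0$ or $1$). Each intermediate prefix either yields a free color, and we stop, or it forces an equality pinning down one specific color at the next vertex. I would show that if every prefix of both $P^{12}$ and $P^{21}$ fails, then the two traversals around the cycle through $x$ force contradictory constraints. Concretely, when the walk reaches $w$ from below, the color freed at $w$ equals the color of one child branch's first edge while the other branch demands it be used. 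This contradicts the fact that $w$ had at least $C+1$ distinct-colored children, or that the parent of $w$ had a color available.

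I expect this last step to be the main obstacle. It requires careful case analysis on whether $w$ is the root, whether $x$ equals an inner node on one branch, and whether degrees are below $\Delta$; lower degrees only help, since they immediately give a free color. I would first handle the generic case where $x\ne w$, the paths are disjoint below $w$, and all degrees equal $\Delta$, tracking exactly which color becomes free at each vertex along the cycle. I would then patch the degenerate cases via the re-definition step of the handler.
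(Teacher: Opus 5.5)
Your depth bound, the $2\,depth(T)$ recourse bound, tightness via Remark~\ref{remark_generalized_lower_bound}, and the running time are all fine. The gap is in the step you flag as the main obstacle. You never show that some prefix of $P^{12}$ or $P^{21}$ is useful. The contradiction you sketch (the color freed at $w$ clashing with $w$ having $C+1$ distinct-colored children, or with its parent having an available color) is also not what makes the argument work.

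The missing idea is to compare the two traversal directions edge by edge. Write $y$ for your branching node $w$, and $z_i$ for its child towards $x_i$. Take a cycle edge $e_2\notin\{(y,z_1),(y,z_2)\}$, with cycle-neighbours $e_1,e_3$ at its two endpoints. Reaching $e_2$ via $e_1$ means $e_1$ takes $color(e_2)$. You then fail to stop exactly when $e_1$ is a bad neighbour of $e_2$ (Definition~\ref{definition_dangerous}).
\begin{itemize}
\item Since $C=\Delta-2$, $e_2$ has at most one bad color. So being stuck at $e_2$ in both directions forces $color(e_1)=color(e_3)$.
\item Applying this to every interior edge makes the cycle bicolored and even. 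Hence $color(y,z_2)=color(z_1,w_1)$, where $w_1$ is the child of $z_1$ on the branch to $x_1$. Without loss of generality $z_1\ne x$; both $z_i$ cannot be copies of $x$ because $G$ is simple.
\item But $w_1$ is a child of $z_1$ in $T$ only if $color(z_1,w_1)$ is free at $y$ after shifting down to $(y,z_1)$. At that moment $(y,z_2)$ still carries its original color at $y$, which is a contradiction.
\end{itemize}
This local argument needs the cycle edges to carry their original colors when they are reached. That is why $(u,v)$, the only edge with no original color, must be kept off the cycle. This is the real role of the footnoted exception for leaves of $v$; you treat it only as a constant in the depth bound.

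Separately, your justification that shifting $c_2$ onto $(parent(x_1),x)$ is legal at $parent(x_1)$ ``because inner nodes are unique'' is not valid. Nothing in the tree structure rules out $c_2$ being used at $parent(x_1)$. The correct reason is this: if you cannot stop at the current uncolored edge, both its endpoints have degree $\Delta$, and its $2\Delta-2$ neighbours use all $2\Delta-2$ colors exactly once. So $c_2$, which is present at $x$, is absent at the other endpoint. In general, failing to stop at one edge is what makes the next shift along the cycle proper.

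The handler's re-definition of $x_i$ as an inner copy serves a different purpose. It keeps the cycle simple, or a figure ``8'' when $y$ is itself a copy of $x$; it has nothing to do with shiftability.
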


\begin{proof}
Let $e=(u,v)$ be a newly inserted edge. We expand a shift-tree $T$ from $e$ until we determine a useful shiftable path or see $b=2$ leaf copies of some vertex $x$ (or up to $3$ if $x=v$). By Lemma~\ref{lemma_shift_tree_max_depth} we get that $depth(T) = O(\log_\beta n)$ where $\beta = \frac{\Delta-1}{2}$, so  
$depth(T) = O(\frac{\log n}{\log \Delta})$. Note that feasibility requires a choice of $b \in [2,C]$, therefore $C \ge 2$ and thus $\Delta = C+2 \ge 4$. This depth is asymptotically the same as the lower bound of Theorem~\ref{theorem_lower_bound_2018} for $C = \Delta-2$: $\Omega(\log_{(\Delta-1)}{(n (\Delta-2))}) = \Omega(\frac{\log n}{\log \Delta})$. We clarify that even though Lemma~\ref{lemma_shift_tree_max_depth} assumes a fixed $b$, our exception to not count a single possible edge (a leaf $v$ with parent $u$) does not make an asymptotic difference in the depth of the shift-tree.

\looseness=-1
Now we extend the coloring, by recoloring no more than $2 \cdot depth(T)$ edges, which proves the recourse statement. If $T$ contains a useful shiftable path, we simply use it. Otherwise, consider the skeleton $T^x$, which has $2$ leaves. If there is an inner node $x$ that is only an ancestor of one of the leaves, truncate $T^x$ so that the inner $x$ becomes a leaf. Denote the two leaves by $x_1$ and $x_2$, their LCA by $y$, and its parent by $p$ (it could be that $y$ or $p$ is the active copy of $x$). First, shift the colors on the path from $e$ to $(p,y)$ so that $(p,y)$ is now uncolored. If $y=v$ we shifted nothing.

We now have an uncolored edge $(p,y)$, that is near a cycle of edges that visits $y$ and $x$, colored by their original colors when we computed $T$. This is why we ignored $(u,v)$ when counting leaf copies of $v$, to ensure that $(u,v)$ is not in the cycle, because it is the only edge without an original color. Denote by $z_1$ the child of $y$ which is an ancestor of $x_1$, and by $z_2$ the child of $y$ that is an ancestor of $x_2$. Note that $z_1$ could be $x_1$ and $z_2$ could be $x_2$ but not both. In particular $z_1\not= z_2$.
The cycle is either simple or an ``8''-figure in $G$ because of the following. Every non-leaf node in the shift-tree is unique, so the only vertex that can repeat is $x$:
\begin{enumerate}
    \item If $y \ne x$: Then the cycle is simple and contains at least three vertices.
    
    \item If $y=x$ (possible if we expanded $x$ and found the two leaves of $x$ in its own subtree): Then the path from $y=x$ to $x_i$ forms a cycle for $i=\{1,2\}$. These cycles are disjoint in edges because they are disjoint in vertices except for $x$. Each of these cycles has at least three vertices (including $x$) because $y=x$ cannot be a parent (trivial) or a grandparent (Observation~\ref{observation_parent_child_in_shift_tree}) of $x_i$. Hence, the cycle is indeed an ``8''-figure.
\end{enumerate}

We claim that we can shift the colors and finish as follows: First, shift the color on the path from $(p,y)$ to $(y,z_1)$ or to $(y,z_2)$. Then, we continue shifting the colors around the cycle. If the cycle is not simple ($y=x$), it does not matter how we define the cycle in $G$, though arguably it is more intuitive to think of it as traversing on $T$ from $y$ to $x_1$ and then traversing back up from $x_2$ to $y$ (similar to the case of a simple cycle where $y \ne x$).

We shift until an uncolored edge can be colored by a free color. We argue that this method must succeed for either starting from $(y,z_1)$ or from $(y,z_2)$. See Figure~\ref{figure_tail_and_cycle} to clarify the remainder of the proof.

\begin{figure*}[t]
	\centering
    \begin{subfigure}[t]{.18\textwidth}
        \centering
        \includegraphics[width=\textwidth]{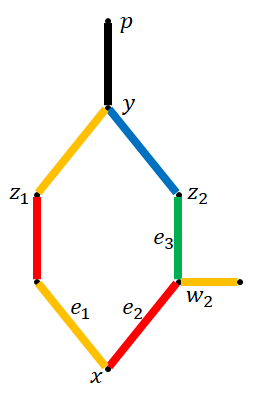}
        \subcaption{initial colors}
        \label{figure_tail_and_cycle_init}
    \end{subfigure}
    \hspace{2mm} 
    \begin{subfigure}[t]{.18\textwidth}
        \centering
        \includegraphics[width=\textwidth]{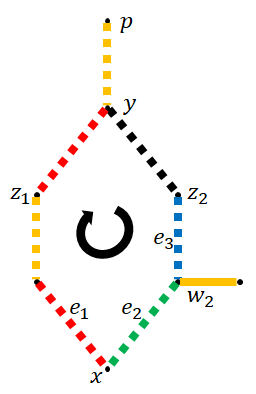}
        \subcaption{full shift}
        \label{figure_tail_and_cycle_nonstop}
    \end{subfigure}
    \hspace{2mm} 
    \begin{subfigure}[t]{.18\textwidth}
        \centering
        \includegraphics[width=\textwidth]{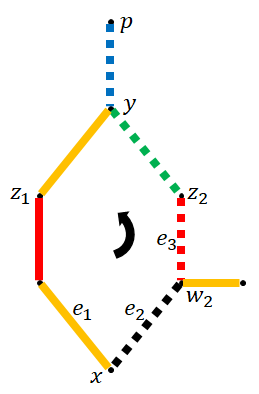}
        \subcaption{stopped shift}
        \label{figure_tail_and_cycle_withstop}
    \end{subfigure}
    \hspace{2mm} 
    \begin{subfigure}[t]{.36\textwidth}
        \centering
        \includegraphics[width=\textwidth]{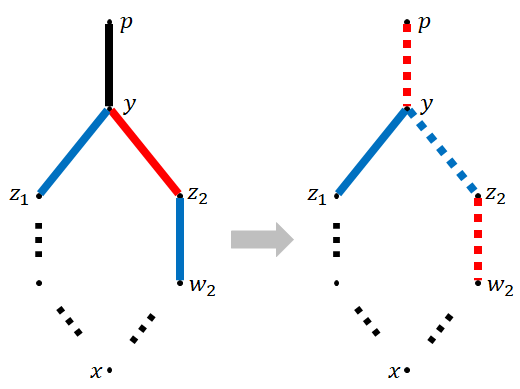}
        \subcaption{Impossible bicolored cycle}
        \label{figure_cycle_shift}
    \end{subfigure}
    
    \caption{\small{Illustrating the proof of Theorem~\ref{theorem_DeltaMinus2_colors}.
    We shifted the colors on the path to $(p,y)$ which is now uncolored (black). (a) Some initial coloring of a cycle. (b) Shifting colors over a full cycle, dashed edges are whose colors were shifted. (c) The shift in the opposite direction in this example stops on $e_2$ because $e_1$ is a bad neighbour with bad color yellow. However, this is actually good for us: given $2\Delta-2$ colors in total, and since only at most $2\Delta-3$ are used around $e_2$ (two neighbours reuse yellow, and $e_2$ has no color), we can simply color $e_2$ and stop. If we could not stop on $e_2$ from either side, then $e_1$ and $e_3$ would have had the same (bad) color. (d) Once we conclude towards a contradiction that we cannot stop in either direction, on any edge of the cycle, we deduce that the cycle is bicolored. But this cannot be by definition of a shift-tree: Shifting the color of $(z_2,w_2)$ up to $(y,z_2)$ will conflict with $(y,z_1)$ that remains unchanged.}}
	\label{figure_tail_and_cycle}
\end{figure*}

Assume by contradiction that no prefix of color shifting around the cycle results in an uncolored edge that has a free color and consider a specific edge $e_2$ in this cycle with neighbours $e_1$ and $e_3$ ($e_2$ is not ``first'' or ``last'' along the cycle, concretely $e_2 \ne (y,z_1)$ and $e_2 \ne (y,z_2)$). If we cannot stop on $e_2$ when we rotate colors in the cycle in either direction, then $color(e_1) = color(e_3)$. Indeed, because $C = \Delta-2$ there can be at most one bad neighbour (color) on each side of $e_2$, and if the shift arrives from a different color, we can pick a free color for $e_2$ and be done. Conversely, the same argument shows that if $color(e_1) \ne color(e_3)$ then we can stop on $e_2$ when shifting on either direction. Applying this argument to every edge on the cycle that is not the first or the last, implies that the cycle is a bicolored (closed) path. This requires the cycle to be even, but even so we get the following contradiction. Because at least one of $z_1$ and $z_2$ is not $x$, assume without loss of generality that $z_1 \ne x$ and therefore it has a child $w_1$ on the path to $x_1$ (possibly $w_1=x_1$). By definition of the shift-tree, $w_1$ can be a child of $z_1$ only if $color(z_1,w_1)$ is free in $y$ (the parent of $z_1$). However, because the cycle is even and bicolored, $color(y,z_2) = color(z_1,w_1)$: a contradiction.
In conclusion, it must be possible to shift the colors partially over the cycle in one direction and be able to finish the coloring on some intermediate edge. The total length of the cycle and the path from the root of $T$ to $(p,y)$ that we shifted at the beginning to reach the cycle is at most twice the depth of $T$.

The running time to extend the coloring is $O(m)$ for Algorithm~\ref{algorithm_generic_shift_tree} by Lemma~\ref{lemma_running_time_without_handler} plus the time it takes Algorithm~\ref{algorithm_handle_leaves_DeltaMinus2} to determine the useful path. We argued in the previous paragraph that determining the availability of \emph{some} color for an edge $e_2$ only requires us to compare the color its two neighbours on the cycle, which takes $O(1)$ time, therefore it takes $O(|cycle|)$ time in total, which is $O(|depth(T)|)$. Even though this test per edge does not determine the available color, Algorithm~\ref{algorithm_handle_leaves_DeltaMinus2} only has to determine the path (to be used by Algorithm~\ref{algorithm_generic_shift_tree}). Since $depth(T) \le |T| = O(m)$, the total running time remains $O(m)$.
\end{proof}

The shift-tree technique is also useful for analyzing low arboricity graphs.

\theoremLowArboricity*

\begin{proof}
We compute a shift-tree $T$ on $G$ with respect to the inserted edge. We expand levels $0$ to $i \ge 1$ of $T$, and prove that by depth $O(\log_B n)$ we must encounter a useful shiftable path because the nodes down to depth $i$ correspond to at least $B^i$ unique vertices of $G$ where $B \equiv \minCarboricityB{}$. The recourse is the length of this path, and when substituting $C = \minCarboricity{}$ and using the small values approximation of $\ln (1+x) \approx x$, then $\ln B \approx \frac{\epsilon \alpha}{2\alpha - 1} > \frac{\epsilon}{2}$, thus the recourse is $O(\frac{1}{\epsilon} \cdot \log n)$. Note that Algorithm~\ref{algorithm_generic_shift_tree} suffices, even without having to specify a ``leaves handler'' subroutine.

To show the exponential number of vertices, consider the expansion of $T$ at some given depth $i \ge 0$ and denote by $N_i$ the number of vertices of $G$ that have corresponding nodes in $T$ at any level down to and including $i$. Note that $N_0 = 1$ (the root). Let $G_i = (V_i,E_i)$ be the subgraph of $G$ induced by these $|V_i| = N_i$ vertices. Let $i \ge 1$, and now bound $|E_i|$ in two ways. First, there are $N_{i-1}$ inner nodes, each with at least $C+1$ children by Lemma~\ref{lemma_basic_shift_children}. This might double-count edges by Observation~\ref{observation_parent_child_in_shift_tree}, but it does not double-count the edges between the new $N_i - N_{i-1}$ nodes to their parents, therefore: $|E_i| \ge \frac{N_{i-1} \cdot (C+1) + (N_i-N_{i-1})}{2}$. On the other hand, the arboricity of $G_i$ is at most $\alpha$ being a subgraph of $G$, therefore $|E_i| \le \alpha (N_i-1)$. Thus:
$$
N_{i-1} \cdot C + N_i \le 2|E_i| \le 2\alpha (N_i-1)
\Rightarrow B < \minCarboricityB{} + \frac{2\alpha}{(2\alpha-1) \cdot N_{i-1}} \le \frac{N_i}{N_{i-1}}
$$
Then the growth rate is indeed (at least) by a factor of $B$ from $N_i$ to $N_{i+1}$. Note that we required that $B>1$. (Relying on the marginal extra term $\frac{2\alpha}{(2\alpha-1) \cdot N_{i-1}}$ does not give an exponential growth rate if $B=1$ ($C = 2\alpha-1$), but only a linear rate.)
\end{proof}

\subsection{Adaptivity and Local-Palettes}
\label{subsection_adaptive_palettes}

In this section, assume that the colors have some natural total order, for example being positive integers. Recall that an algorithm for dynamic edge coloring is said to be adaptive with respect to a parameter if this parameter is not required to be known in advance. For example, an algorithm is $\Delta$-adaptive if it does not need to know an upper bound $\Delta$ in advance, and can adapt the maximum color being used (which implies the size of its palette) naturally to $\Delta_t$ which is the actual maximum degree of the graph at time $t$. Similarly, $\alpha$-adaptiveness applies if the algorithm does not require a bound on the arboricity, etc. Adaptivity is always preferred if one can achieve it, since in practice we may not know a good upper bound, and a conservative upper bound might be detrimental to performance.

One concept relevant for achieving adaptivity is that of \emph{local palettes}, where the coloring takes into account the degree of vertices to only use small colors for edges of vertices with small degree. Intuitively, this locality ensures that it is easier to get rid of high colors when the degree of a vertex gets smaller. Local palettes are used in multiple papers such as \cite{MultistepVizing2023, ArboricitySWAT2024_Christiansen}. Concretely, revisit Table~\ref{table_recourse_results} and note there the expressions with subscripts $t$ or $e$.

Using local palettes, we can adapt the shift-tree algorithm such that whenever we compute the children of a node $x \in T$, we consider for children only the colors available at $parent_T(x)$ that are also among the first (smallest) $deg(x) + C$ colors. We are still guaranteed that $x$ has at least $C+1$ children unless there is a (small) shared available color at $p$ and $x$. In fact, because the children of $x$ are restricted to the smallest $deg(x) + C$ colors, $x$ has at most $C+1$ children, and overall the resulting (modified) shift-tree is regular because each inner node has exactly $C+1$ children. The key point is that our analysis never assumed more than $C+1$ children per node, so every argument still works as-is with this additional restriction. This minor adaptation implies that now every shift-tree algorithm in this paper in fact maintains that $color(u,v) \le \max \{ deg(u),deg(v)\} + C$ for every edge $(u,v)$ that it recolors when processing an insertion. To make this invariant hold for all the edges at any time, define that a deletion of the edge $(x,y)$ checks if any edge $(x,z)$ or $(y,z')$ now violates this invariant. Since $deg(x)$ and $deg(y)$ are the only affected values, and each is decremented by exactly $1$, there can be at most two violating edges (if they were colored by the critically highest allowed color). Then upon deletion, we uncolor each of these edges and recolor the graph as if each of them has been a newly inserted edge. In conclusion, by paying twice the recourse of insertion per deletion, we now maintain the invariant that for every edge, always, $color(u,v) \le \max \{ deg(u),deg(v)\} + C$.

The invariant we get is nice, but still not quite $\Delta$-adaptive because $C$ itself is a parameter which depends on $\Delta$. Specifically, Theorem~\ref{theorem_tight_worst_case_recourse} requires $C > \frac{\Delta}{\phi}$. To be purely $\Delta$-adaptive, we may want to reduce $C$ when the maximum degree of the graph shrinks. This is much more complicated, because when the maximum degree decreases due to a single deletion, many edges in the graph might simultaneously violate the invariant due to the reduction in $C$. The same issue arises if $C$ is related to any other dynamic property that could change due to a single deletion, such as arboricity.

That being said, not all is lost. We can upgrade Theorem~\ref{theorem_low_arboricity} to Theorem~\ref{theorem_adaptive_Delta_for_low_arboricity} which still uses a fixed arboricity upper bound, to get a $\Delta$-adaptive (but not $\alpha$-adaptive) edge coloring algorithm with the same asymptotic recourse.


It is worth noting that local palettes are an avenue towards breaking existing worst-case lower bounds (Theorem~\ref{theorem_lower_bound_2018} and Theorem~\ref{theorem_lower_bound_generalized}). These bounds show that the recourse must be high given a specific coloring of a specific graph. However, if an algorithm $ALG$ can maintain, for example, that $color(u,v) \le \rho \cdot \max \{ deg(u),deg(v) \}$ where $\rho = \frac{3}{2}$ for every edge $(u,v)$, then if $\Delta+C \ge 10$ the lower bound of Theorem~\ref{theorem_lower_bound_2018} does not apply to $ALG$ because its coloring never reaches the state assumed by the theorem.\footnote{In the basic lower bound construction, the missing edge $(u,v)$ is adjacent to every possible color from $1$ to $\Delta+C$, even though the degrees of $u$, $v$, and their neighbours is only $1 + \frac{\Delta+C}{2}$.} More generally, any $\rho < 2 - \frac{4}{\Delta+C+2}$ breaks this lower bound.

\section{Lower Bounds: Recourse and Separation (Formal Section)}
\label{appendix_section_lower_bounds_and_separation}

In this section we study two interesting lower bounds. Subsection~\ref{subsection_generalized_lower_bound} is dedicated to generalizing Theorem~\ref{theorem_lower_bound_2018} of \cite{LowerBoundRecourse2018}, to take into account the arboricity $\alpha$ as an additional parameter. Subsection~\ref{appendix_subsection_recourse_separation} shows how to construct a partially colored graph such that extending its coloring requires only $O(1)$ recourse, yet the recourse of shift-based algorithms (Definition~\ref{definition_chain_shift}) is still large. This separation is stated in Theorem~\ref{theorem_gap_shift_recoloring_versus_general_separation}, which generalizes Theorem~\ref{theorem_gap_shift_recoloring_versus_general_separation_simple}.

\subsection{Generalized Recourse Lower Bound}
\label{subsection_generalized_lower_bound}

We restate Theorem~\ref{theorem_lower_bound_generalized} for convenience.

\theoremLowerBoundGeneralized*

\begin{proof}
The construction generalizes that of \cite{LowerBoundRecourse2018}, and also covers odd values of $\Delta+C$ (that were not covered there).

\textbf{Topology and coloring:} We define a graph $H \equiv H(\ell,r,\alpha,b_0,b_1)$ with $\ell$ layers
as follows. In layer $0$, $H$ has a single vertex $r$, which we refer to as its root (even though technically $H$ is undirected and not actually rooted). $r$ has $b_0$ neighbours in layer $1$. In layers $i \ge 1$, we divide the vertices into groups of size $\alpha$, and form a complete bipartite clique for each group with $b_{i\ mod\ 2}$ new vertices in layer $i+1$. Vertices of layer $i$ that are not grouped (when the total number of vertices at layer $i$ is not an integral multiple of $\alpha$) are grouped first with vertices of layer $i+2$. We require that $\alpha \le b_0,b_1$, which ensures that every layer $j \ge 1$ has at least $\alpha$ vertices, which in turn ensures that every edge connects vertices that are either $1$ or $3$ layers apart. Indeed, any leftover vertices from layer $i$ will belong to the first group of $\alpha$ vertices when grouping layer $i+2$. Edges from layer $i$ to $i+1$ or to $i+3$ are colored from a palette $P_{i\ mod\ 2}$, where $|P_0| = b_0$, $|P_1| = b_1$, and $P_0 \cap P_1 = \emptyset$. Note that we must set $b_0$ and $b_1$ such that $b_0+b_1 \le \Delta+C$. Furthermore, we must satisfy that $\alpha + b_i \le \Delta$ for $i=0,1$, since a vertex may have $\alpha$ neighbours in a lower layer and $b_i$ neighbours in a higher layer. The requirement $\alpha \le b_0,b_1$ ensures that the sizes of the layers are mostly weakly increasing.\footnote{Initially the sizes of layers might not increase because the leftover portion might be significant, but eventually we get monotonicity. For example, with $b_0 = 2\alpha-1$ and $b_1 = \alpha+1$, the number of vertices at layer $i$, $n_i$ is: $n_0=1$, $n_1=2\alpha-1$, $n_2 = \alpha+1$ because almost half of the vertices in layer $1$ were not grouped.} Finally, let $G(\ell) \equiv \{ (u,v) \} \cup H(\ell,u,\alpha,b_0,\Delta+C-b_0) \cup H(\ell,v,\alpha,\Delta+C-b_0,b_0)$, where the edge $e=(u,v)$ is uncolored, and define $G \equiv G(L)$ where $L = argmax \{ \ell \ge 0 : |G(\ell)| \le n \}$. In simple words, $L$ is the deepest complete layer we can construct with a budget of $n$ vertices (we can add and ignore singleton vertices to round up to $|G| = n$). See Figure~\ref{figure_biregular_graph} for an example. We will set $b_0$ and $b_1$ such that
 $b_0+b_1 = \Delta+C$ so that we use all the colors, which implies 
that  $\alpha \le \frac{\Delta-C}{2}$ (because $2\alpha + \Delta + C = (\alpha + b_0) + (\alpha + b_1) \le 2\Delta$) as required in the theorem.

\smallskip
\textbf{Outline:} First we show that the arboricity of $G$ is indeed in $[\frac{\alpha+1}{2},\alpha]$. Then we show that the recourse required to color $e$ is $\Omega(L)$,
and that $L$ is asymptotically maximized if we set $b_0 = \Delta-\alpha$ and therefore 
$b_1= \Delta+C-(\Delta - \alpha) = C+\alpha$.  We show that in this case $\Omega(L)$ is the expression stated by the theorem. Note that if $\Delta+C$ is even, the choice $b_0=b_1 = \frac{\Delta+C}{2}$ with $\alpha = \frac{\Delta-C}{2}$ gives the exact construction of \cite{LowerBoundRecourse2018}. In this case, every vertex except for those in layers $0,1,L-2,L-1,L$ has maximum degree: $\alpha + b_i = \Delta$.\footnote{A vertex in layers $0$ or $1$ has only $b_i+1 = \frac{\Delta+C}{2}+1$ neighbours. A vertex in layer $L$ has no ``children'', and so do at most $2(\alpha-1) = \Delta-C-2$ leftover vertices in each of layers $L-1$ and $L-2$.}

\begin{figure*}[ht]
	\centering
    \begin{subfigure}[t]{\textwidth}
        \centering
        \includegraphics[width=0.7\textwidth]{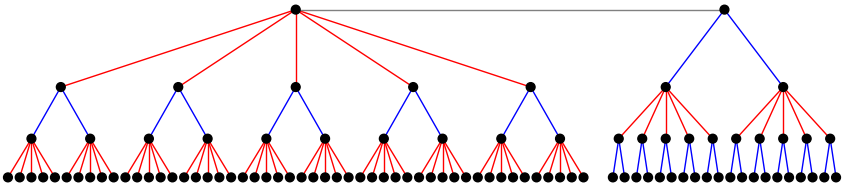}
        \subcaption{$\Delta=6$, $C=1$, $\alpha=1$: then $b_0 = 5$ (red palette), $b_1 = 2$ (blue palette).}
        \label{figure_graph_alpha1}
    \end{subfigure}
    
    \bigskip 
    \begin{subfigure}[t]{\textwidth}
        \centering
        \includegraphics[width=0.7\textwidth]{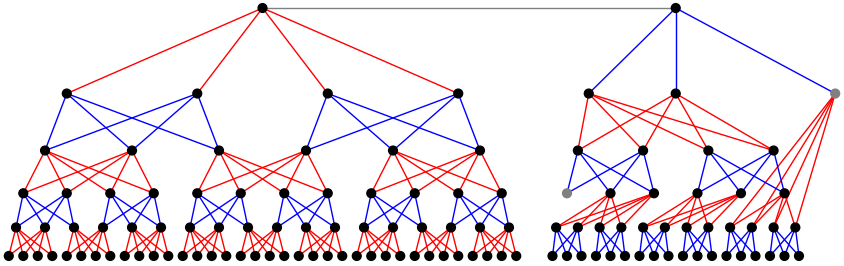}
        \subcaption{$\Delta=6$, $C=1$, $\alpha=2$: then $b_0 = 4$ (red palette), $b_1 = 3$ (blue palette).}
        \label{figure_graph_alpha2}
    \end{subfigure}
    
    \caption{\small{An example of the graph constructed in the proof of Theorem~\ref{theorem_lower_bound_generalized}. (a) In this example $\Delta=6$ and $C=1$ and $\alpha = 1$ (tree), so we split the colors to two palettes: red with $b_0 \equiv \Delta-\alpha = 5$ colors and blue with $b_1 \equiv C+\alpha = 2$ colors. (b) Now $\alpha = 2$, so we get $b_0 = 4$ and $b_1 = 3$. For a third example with $\Delta=5$, $C=1$, $\alpha=2$ ($b_0=b_1=3$), see $\cite{LowerBoundRecourse2018}$. In both figures, the graph $G$ that we construct links two sub-graphs in opposite ``parity'' $p \in \{0,1 \}$, via a single uncolored edge: the top horizontal gray edge. Within each sub-graph the vertices of layers $\ge 1$ (layers start from $0$, top to bottom) are grouped by sets of size $\alpha$, and each set forms a complete bi-clique of size $\alpha \cdot b_{(i+p)\ mod\ 2}$ (depending on the parity) with the next layer, where $b_0 = \Delta-\alpha$ and $b_1 = C+\alpha$. Leftovers from the division by $\alpha$, shown as gray vertices, are grouped with vertices in layer $i+2$. For example, in (b), the rightmost vertex of layer $1$ and the rightmost vertex of layer $3$ form a $2 \times 4$ bi-clique with vertices of layer $4$. The different parity of the two sides is so that every color is being used in the neighbourhood of the uncolored edge, denote it $e$. Regardless of the specific choice of color of each edge from its corresponding palette (and subject to being proper), the coloring cannot be extended to $e$ without recoloring $\Omega(L)$ edges where $L$ is the depth of the deepest full layer in the graph that exists in both sides of $e$. Note that not every vertex is of maximal degree $\Delta$. In (b) the leftover vertex in layer $3$ has no down-edges because we only have $L=5$ layers in this figure.}}
	\label{figure_biregular_graph}
\end{figure*}

\smallskip

\textbf{Bounding the arboricity of $G$, $\alpha_G$.} We show that $\alpha_G \in [\frac{\alpha+1}{2},\alpha]$ regardless of the choice of $b_0$. We bound $\alpha_G$ from below by the arboricity of each of the bi-clique subgraphs of $G$. This arboricity is, by the Nash-Williams theorem \cite{NashWilliamsTheorem}, at least $\alpha_G \ge \ceil{\frac{\alpha \cdot b_i}{\alpha + b_i-1}} \ge \ceil{\frac{\alpha \cdot b_i}{2b_i-1}} \ge \frac{\alpha+1}{2}$.\footnote{To clarify how we get $\ge \frac{\alpha+1}{2}$: $y \equiv \frac{\alpha \cdot b_i}{2 b_i - 1} > \frac{\alpha \cdot b_i}{2 b_i} = \frac{\alpha}{2} \equiv x$. So either $x$ is integer and therefore $\ceil{y} \ge \ceil{x} + 1$, or $x$ is half-integer and therefore $\ceil{y} \ge x + \frac{1}{2}$.} To bound $\alpha_G$ from above, note that we can cover $G$ with $\alpha$ forests.
Each such forest is formed as follows.
In each layer $i \ge 1$, each vertex picks a single ``parent'' from a lower layer (typically $i-1$, though potentially $i-3$). This gives a forest which consists of a pair of trees one rooted at $u$ and the other rooted at $v$. Indeed, by choosing parents from a lower layer, all the vertices eventually connect to one of the roots, and since each vertex except for a root picks a single parent, the number of edges in each component is one less than the number of vertices. The edge $e$ is a bridge, so it does not close a cycle, and we can add it to any (or all) of the forests. We construct $\alpha$ such forests so that
we cover all edges connecting each vertex to a lower layer.

\textbf{$\Omega(L)$ recourse.} To show that the recourse is $\Omega(L)$, we repeat the arguments of \cite{LowerBoundRecourse2018}. Denote the layer of a vertex $x$ by $\ell_x$, and refer to its edge to a neighbour $y$ as \emph{rising} if $\ell_y < \ell_x$, and otherwise as \emph{falling}. Assume by contradiction that no rising edge from layers $L,L-1,L-2$ changes its palette. Consider any vertex $x$ in layer $L-3$, and let $i \equiv (\ell_x\ mod\ 2) = (L-3\ mod\ 2)$. By the definition of $L$, layer $L$ is a full layer, so $x$ has $b_i$ falling edges to vertices in higher layers, either $L-2$ or $L$.  Each falling edge of $x$ is a rising edge from layers $L$ or $L-2$, so by assumption, the palette of each of these $b_i$ falling edges of $x$ is unchanged, and stays $P_i$.  Because $|P_i| = b_i$, any rising edge from $x$ must be colored by a color from the complementary palette $P_{1-i}$, same palette as of its original color. We can continue by induction, and now use the unchanged palette of rising edges from layers $L-3,L-2,L-1$ to prove that rising edges from layer $L-4$ maintain their palette, and so on, until we show that none of the falling edges of $u$ and $v$ changes its palette, which means that we cannot color $(u,v)$. So we must recolor (change the palette of) at least one rising edge in every three consecutive layers, so the recourse is at least $\floor{\frac{L}{3}}$.\footnote{The proof of \cite{LowerBoundRecourse2018} is less explicit, stating that we recolor some edge between layers $L-5,\ldots,L$. A rising edge from layer $L-2$ can indeed reach layer $L-5$ if it connects to a leftover vertex from that layer.} 

\textbf{Layers growth rate and lower bounding $L$.} Denote the number of vertices in layer $i$ of $G$ by $n_i$ (recall that layer $i$ in $G$ is the union of layer $i$ in both $H$ sub-graphs of $G$ connected by $e = (u,v)$). For brevity we use $b_0$ and $b_1$, even though we already set $b_1 = \Delta+C - b_0$.

In the special case of $\Delta$ being even, and $\alpha = \frac{\Delta}{2}$ we get: $C=0$ (because $\alpha \le \frac{\Delta-C}{2}$), and also $b_0=b_1=\frac{\Delta}{2}=\alpha$ (because $b_0+b_1=\Delta$ and $\alpha + b_i \le \Delta$). In this case there is no growth: $n_0 = 2$ vertices ($u$ an $v$), and $n_{i} = b_0+b_1 = \Delta$ for $i> 0$. Then $L = \Omega(n/\Delta)$.

If $\alpha < \frac{\Delta}{2}$, still $n_0 = 2$ and $n_1 = b_0 + b_1$. But then $n_2 = b_0 \cdot \floor{\frac{b_1}{\alpha}} + b_1 \cdot \floor{\frac{b_0}{\alpha}}$ and from this point on, the construction yields a recursion that solves for $n_{2i+j} = \Theta(\beta^i \cdot n_j)$ for $j \in \{1,2\}$ (roughly $n_{i+2} \approx \beta \cdot n_i$), where $\beta = \frac{b_0 \cdot b_1}{\alpha^2}$.\footnote{The growth rate is lower bounded by $\frac{n_{i+2}}{n_i} \ge \floor{\frac{b_0}{\alpha}} \cdot \floor{\frac{b_1}{\alpha}}$ but this is a crude under-estimation because leftover vertices in each layer are kept to be grouped rather than thrown away.} This holds in each component of $G \setminus \{e\}$ separately and thus together. Our objective is to maximize $L$, so since the growth is exponential in $\beta$, minimizing $\beta$ will increase $L$. Minimizing $\beta$ is simply minimizing a quadratic expression within a bounded domain: Recall that $\beta = \frac{b_0 \cdot b_1}{\alpha^2} = \frac{b_0 \cdot (\Delta+C-b_0)}{\alpha^2}$ and that $ b_0 \le \Delta-\alpha$ and $b_1 = \Delta+C-b_0 \le \Delta-\alpha$ so $b_0 \ge C+\alpha$ (the domain of the quadratic is $C+\alpha \le b_0 \le \Delta-\alpha$). Then $\beta$ is minimized for $b_0 = \Delta-\alpha$ (or symmetrically for $b_0 = C + \alpha$).

Given $\beta$, we proceed to determine $L$. We have that $n_0 = 2$, and for $i \ge 0$: $n_{2i+1} = \Theta( (b_0 + b_1) \cdot \beta^i)$ and $n_{2i+2} = \Theta( \frac{b_0 \cdot b_1}{\alpha} \cdot \beta^i ) = \Theta( \alpha \cdot \beta^{i+1} )$. Then the number of vertices in layers $0$ to $\ell$ is:
$$
\sum_{i=0}^{\ell}{n_i}
\le \sum_{i=0}^{\floor{\ell/2}}{(n_{2i}+n_{2i+1})}
= (2-\alpha) + \sum_{i=0}^{\floor{\ell/2}}{\Theta( (\alpha + b_0 + b_1) \cdot \beta^i)}
$$
$$
= (2-\alpha) + \Theta \Big ( \Delta \cdot \frac{\beta^{\floor{\ell/2} + 1} - 1}{\beta - 1} \Big ) \equiv N(\ell)
$$
Where we used $n_0 = 2 = (2-\alpha) + \Theta(\alpha)$ to conform to the summation of pairs, and $b_0+b_1+\alpha = \Delta+C+\alpha = \Theta(\Delta)$. Next, depth $L+1$ exceeds $n$ vertices, which means that $n \le N(L+1)$, and after re-arranging, including taking-out a logarithm, we get:
$$
\frac{(n + \alpha - 2) \cdot (\beta - 1)}{\Delta} + \Theta(1) \le \Theta(\beta^{\floor{(L+1)/2}+1})
\Rightarrow
\log_{\beta} \Big ( \frac{(n + \alpha - 2) \cdot (\beta - 1)}{\Delta} \Big ) \le L/2 + \Theta(1)
$$
Since $n+\alpha-2 = \Theta(n)$, we get the final expression as claimed by the theorem. The base of the logarithm $\beta = \frac{(\Delta-\alpha)(C+\alpha)}{\alpha^2}$ can be replaced by $\gamma \equiv \frac{\Delta}{\alpha} - 1$ because
$\gamma \le \beta = \gamma \cdot \frac{C+\alpha}{\alpha} \le \gamma^2$. Indeed, $\frac{C+\alpha}{\alpha} \le \gamma$ because $\alpha \le \frac{\Delta-C}{2}$.
\end{proof}

As mentioned, Theorem~\ref{theorem_lower_bound_2018} is a special case of Theorem~\ref{theorem_lower_bound_generalized}. To get Theorem~\ref{theorem_lower_bound_2018}, assume that $\Delta+C$ is even, and let $\alpha = \frac{\Delta-C}{2}$. Then we get: $\beta = \frac{(\Delta+C)^2}{(\Delta-C)^2} \Rightarrow \beta-1 = \frac{4 C \Delta}{(\Delta-C)^2}$, which yields a recourse of $\Omega \Big ( \frac{\log \frac{Cn}{(\Delta-C)^2}}{\log \frac{\Delta+C}{\Delta-C}} \Big )$. If $C \le \frac{\Delta}{3}$, we can simplify to $\Omega(\frac{\Delta}{C} \log \frac{Cn}{\Delta^2})$.

\subsection{Shift-based Recourse Separation}
\label{appendix_subsection_recourse_separation}

In this subsection we show the shift-based recourse separation, and prove the generalization of Theorem~\ref{theorem_gap_shift_recoloring_versus_general_separation_simple}. We start with a definition for a generic recourse lower bound.

\begin{definition}
\label{definition_worst_case_recourse_function}
Let $R(n,\Delta,C)$ be the minimum recourse (worst-case lower bound) required to complete the coloring of an input graph 
 $H$ with $n$ vertices, maximum degree $\Delta$, and $(\Delta+C)$ edge coloring of all edges  except
 one.
\end{definition}

 Theorem~\ref{theorem_lower_bound_2018} implies that $R(n,\Delta,C) =\ $\separationExpression{}.

\begin{restatable}[Shift Recourse Gap (General)]{theorem}{theoremRecourseGapShiftVersusGeneral}
\label{theorem_gap_shift_recoloring_versus_general_separation}

Let $\Delta \ge 3$ and $0 \le C \le \Delta-3$. For any $n$, there exists a graph $G$ with maximum degree $\Delta$, that is $(\Delta+C)$ edge colored except for one edge $e$, and satisfies the following property. We can extend the coloring of $G$ to $e$ with only $O(1)$ recourse, but any shift-based algorithm has recourse of at least $\max_{q \in [1,\Delta-C-2] \cap \mathbb{N}}{R(\floor{n/(q+1)},\Delta-q,C)}$ where $R$ is defined in Definition~\ref{definition_worst_case_recourse_function}. In particular, for $C=0$ the recourse is $\Omega(n/\Delta)$, and by Theorem~\ref{theorem_lower_bound_2018} for $\Delta \ge 4$ and $C \ge 1$, we have that the recourse is \separationExpression{}.
\end{restatable}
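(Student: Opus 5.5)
We generalize the red-matching construction sketched after Theorem~\ref{theorem_gap_shift_recoloring_versus_general_separation_simple}. Fix an integer $q \in [1,\Delta-C-2]$ and write $n' = \floor{n/(q+1)}$. Take a worst-case instance $H$ on $n'$ vertices with maximum degree $\Delta-q$ that is $(\Delta-q+C)$ edge colored except for one edge $e=(u,v)$, and whose extension requires $R(n',\Delta-q,C)$ recourse. The constraint $q \le \Delta-C-2$ ensures $C \le (\Delta-q)-2$, so such worst-case instances are meaningful (for $C\ge 1$, Theorem~\ref{theorem_lower_bound_2018} applies). We reserve $q$ fresh colors $c_1,\ldots,c_q$, which brings the total to $\Delta+C$. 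For each $j$ we add $n'$ new vertices and a perfect matching colored $c_j$ between $V(H)$ and them. This uses $n'(q+1)\le n$ vertices, and the remaining vertices are isolated. Every vertex has degree at most $\Delta-q+q=\Delta$, and the coloring is proper except at $e$.

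\textbf{$O(1)$ general recourse.} Color $e$ by $c_1$ and uncolor the two $c_1$-matching edges at $u$ and $v$. Each of these edges has one endpoint of degree $1$, and its other endpoint ($u$ or $v$) now uses $c_1$ on $e$. Its neighbours are at most $\Delta-1$ edges at $u$ (resp.\ $v$), and it needs a color other than $c_1$ that is free at $u$. Is such a color free at $u$? The vertex $u$ has degree at most $\Delta$ and sees at most $\Delta-1$ colors besides the one we remove, out of $\Delta+C\ge\Delta$ colors. However, $u$ may be saturated in $H$. If it is, we can instead let the matching edge at $u$ take the color of a $c_2$-edge, which works when $q\ge 2$. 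The simpler route is to choose $H$ from the construction of Theorem~\ref{theorem_lower_bound_2018}, in which $u$ and $v$ have degree only about $\frac{\Delta+C}{2}+1$. Then $u$ has free colors within the $H$-palette, so the matching edge can take such a color. I expect this bookkeeping to be easy, and I would phrase the $O(1)$ claim for this specific $H$, or in general note that a free color exists because $\deg_H(u)-1<\Delta-q+C$.

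\textbf{Shift-based lower bound.} This is the main step. A shiftable chain starting at $e$ moves colors from edge to edge, so every edge on the chain ends up with the color of its successor. I claim that a shiftable chain can never enter a matching edge $f=(w,w')$ with $\deg(w')=1$. If it did, then $f$ would either continue from $w'$, which is impossible, or be the last edge. In the latter case the chain stays inside $H\cup\{f\}$ up to $f$, the previous edge takes color $c_j$, and so does $f$'s own removal. The previous edge $g$ is an $H$-edge incident to $w$, and it receives $c_j$. That is proper only if $g$'s other endpoint $z$ has no $c_j$-edge, but every vertex of $H$ has a $c_j$-matching edge. The exception is when that edge is the one vacated, which requires $z=w$, a contradiction. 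Hence every recolored edge of $H$ keeps a color from the $H$-palette, and the final color of the last edge must also be free at both endpoints. The $c_j$ colors are never free at vertices of $H$, so the final color lies in the $H$-palette as well. The restriction of the outcome to $H$ is therefore a valid extension of $H$'s coloring using only $H$'s palette, and it costs at least $R(n',\Delta-q,C)$. Maximizing over $q$ gives the claimed bound.

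\textbf{Specializations.} For $C\ge1$ we take $q=1$ and plug in Theorem~\ref{theorem_lower_bound_2018} with $\Delta-1$. Replacing $n$ by $n/2$ and $\Delta$ by $\Delta-1$ changes the expression only by constants when $\Delta\ge4$. For $C=0$ we take $q=\Delta-2$, so $H$ has maximum degree $2$ and uses $2$ colors. We let $H$ be a path of about $n/(\Delta-1)$ vertices, bicolored on each side of the missing middle edge. Any extension within $2$ colors must flip one half, which gives $\Omega(n/\Delta)$, as in Figure~\ref{figure_recourse_gap_on_a_path}.
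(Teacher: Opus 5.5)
Your proposal is correct and follows the paper's proof. You reserve $q$ colors for matchings from $V(H)$ to fresh leaves and get $O(1)$ recourse by coloring $e$ with a matching color and recoloring the two affected leaf edges. A shiftable chain can never hand a dedicated color to an $H$-edge, because that edge's other endpoint still carries it, so any shift-based extension restricts to an extension of $H$ within its own palette, costing at least $R(\floor{n/(q+1)},\Delta-q,C)$.

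Two small tidy-ups. First, your worry that $u$ might be saturated is unfounded: after the swap, $u$ has at most $\deg_G(u)-1\le\Delta-1<\Delta+C$ colored edges, so a free color always exists, and the detour through $c_2$ or the specific $H$ is unnecessary. Second, your case split after entering $f=(w,w')$ omits a fan-style continuation from $w$. This is harmless, since the prefix-properness contradiction at $z$ rules out entering $f$ in every case.
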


Note that Theorem~\ref{theorem_gap_shift_recoloring_versus_general_separation} is stated for $C \le \Delta-3$. The case $C = \Delta-2$ is special because then an uncolored edge $e$ with two same-colored adjacent edges has an available color to use and finish coloring. On the other hand, if $e$ has no available color, then choosing its color simply shifts this color from a neighbour edge. Therefore, when focusing on a single update, it only makes sense to differentiate general and shift-based algorithms if $C \le \Delta-3$. However, in an amortized context, it may be relevant to also consider $C = \Delta-2$ since operations such as fully recoloring a graph or parts of it are not shift-based.

\begin{proof}
We divide the vertices of $G$ to $q+1$ equal and disjoint sets, $V = \{v_1,\ldots,v_k \}$ and $U^i = \{ u^i_1,\ldots,u^i_k\}$ for $i=1,\ldots,q$, $k \equiv \floor{\frac{n}{q+1}}$, ignoring up to $q$ leftover vertices. The edges of $G$ are defined as the union of $q+1$ sets $E_0 \cup E_1 \cup \ldots \cup E_q$. The set $E_i$ for $ i \ge 1$ is a matching of $V$ and $U^i$, formally: $E_i = \{ \forall j: (v_j,u^i_j)\}$. Each such matching is colored by a dedicated color. 
The edges of $E_0$ are incident to vertices of $V$, such that $V$ induces a graph $H$,  which is $(\Delta-q + C)$-colored with a single uncolored edge, that yields a recourse of at least $R(\floor{n/(q+1)},\Delta-q,C)$ (Definition \ref{definition_worst_case_recourse_function}). Note that the colors of the matchings $E_i$, $i\ge 1$ are not used in $H$, and $H$ has maximum degree $\Delta' \equiv \Delta-q$ because each $v_i$ already has $q$ neighbours $u^1_i,\ldots,u^q_i$. The condition $q \le \Delta-C-2$ guarantees that (rearranged) $C \le \Delta'-2$. See Figure~\ref{figure_recourse_gap_on_a_path} in Section~\ref{section_overview} for a visualization for $\Delta=3$ (thus $C=0$) with $q=1$.

\looseness=-1
Let $e=(v_{i_1},v_{i_2})$ be the only uncolored edge (for some $i_1,i_2$). We can easily extend the coloring to it by coloring it by any of the dedicated matching colors, say of $E_j$, and recoloring $(v_{i_1},u^j_{i_1})$ and $(v_{i_2},u^j_{i_2})$ with some other available color of $v_{i_1}$ and of $v_{i_2}$ respectively ($u^j_{i_1}$ and $u^j_{i_2}$ are leaves and every color is available to them), thus the recourse is $O(1)$. However, when restricted to shift colors, observe that we cannot use the dedicated matching colors, because by induction the uncolored edge is always between two vertices of $V$, and thus has two adjacent neighbours of each of the dedicated colors. Therefore, the problem reduces to recoloring only edges from $E_0$, and thus the recourse of any shift-based recoloring is at least $R(\floor{n/(q+1)},\Delta-q,C)$.

For $\Delta \ge 4$ and $C \ge 1$ we can choose $q=1$ and use Theorem~\ref{theorem_lower_bound_2018}. For $C=0$, observe that choosing $q=\Delta-2$ leaves us with $2$ colors for the subgraph $H$, which the adversary can maintain as a bicolored path of length $L=\floor{\frac{n}{\Delta-1}}$ over which the recourse is $\Omega(L)$.
\end{proof}

\bibliography{reference}

\end{document}